\newcommand*{\TECHREP}{}%


\newcommand*{\APPENDIX}{}%
 
\ifdefined\PAPER
\documentclass[conference]{IEEEtran}
\fi
\ifdefined\TECHREP
\documentclass[conference,10pt,onecolumn,final,a4paper]{IEEEtran}
\IEEEoverridecommandlockouts
\fi

\usepackage{mathptmx}

\usepackage{epsfig,endnotes}
\usepackage{color, colortbl}
\usepackage{tikz}
\usepackage{graphicx}
\usepackage{amssymb,amsmath}
\usepackage{booktabs}
\usepackage{verbatim}
\usepackage{multirow}
\usepackage{subcaption}
\usepackage{hyperref}
\usepackage{cleveref}
\usepackage{url}
\usepackage[linesnumbered,algoruled,vlined,noresetcount]{algorithm2e}
\SetNlSty{bfseries}{\color{black}}{}
\usepackage[utf8]{inputenc}
\usepackage{amsthm}
\usepackage{mathtools}
\usepackage{ifthen}
\usepackage{xargs}
\usepackage{thmtools, thm-restate}

\usepackage{times}

\hypersetup{
    colorlinks=true,
    linkcolor=black,
    citecolor=black,
    filecolor=black,
    urlcolor=black,
}

\ifdefined\PAPER 
\theoremstyle{remark}
\fi
\newtheorem{theorem}{Theorem}[section]
\newtheorem{corollary}{Corollary}[theorem]
\newtheorem{lemma}[theorem]{Lemma}
\newtheorem{prop}[theorem]{Proposition}

\crefname{section}{§}{§§}
\Crefname{section}{§}{§§}

\definecolor{DarkGray}{gray}{0.85}

\newcommand{\AC}[0]{AllConcur}
\newcommand{\ACwEA}[0]{AllConcur-w/EA}
\newcommand{\ACplus}[0]{AllConcur+}

\newcommand{\broadcast}[1]{\emph{broa\-dcast}$(\mathit{#1})$}
\newcommand{\tobroadcast}[1]{\emph{A}-\emph{broa\-dcast}$(\mathit{#1})$}
\newcommand{\todeliver}[1]{\emph{A}-\emph{deliver}$(\mathit{#1})$}
\newcommand{\rbroadcast}[1]{\emph{R}-\emph{broa\-dcast}$(\mathit{#1})$}
\newcommand{\rdeliver}[1]{\emph{R}-\emph{deliver}$(\mathit{#1})$}
\newcommand{\sender}[1]{$\mathit{sender}(\mathit{#1})$}

\newcommand{\acState}[2]{[ {#1}, {#2} ]}
\newcommand{\acFState}[2]{[ {#1}, {#2} ]_\triangleright}
\newcommand{\acRState}[2]{[\![  {#1}, {#2} ]\!]}

\newcommand{\tuu}[0]{\text{T}_{\text{UU}}}
\newcommand{\trf}[0]{\text{T}_{\text{R}\triangleright}}
\newcommand{\tur}[0]{\text{T}_{\text{UR}}}
\newcommand{\tfr}[0]{\text{T}_{\triangleright\text{R}}}
\newcommand{\trr}[0]{\text{T}_{\text{RR}}}
\newcommand{\tsk}[0]{\text{T}_{\text{Sk}}}

\newcommandx{\yaHelper}[2][1=\empty]{%
\ifthenelse{\equal{#1}{\empty}}%
  { \ensuremath{ \scriptstyle{ #2 } } } 
  { \raisebox{ #1 }[0pt][0pt]{ \ensuremath{ \scriptstyle{ #2 } } } }  
}   

\newcommandx{\yrightarrow}[4][1=\empty, 2=\empty, 4=\empty, usedefault=@]{%
  \ifthenelse{\equal{#2}{\empty}}
  { \xrightarrow{ \protect{ \yaHelper[ #4 ]{ #3 } } } } 
  { \xrightarrow[ \protect{ \yaHelper[ #2 ]{ #1 } } ]{ \protect{ \yaHelper[ #4 ]{ #3 } } } } 
}

\newcommand{\trans}[0]{\rightarrow}
\newcommand{\fTrans}[0]{\yrightarrow[][]{\text{\scriptsize{fail}}}[-1pt]}
\newcommand{\sTrans}[0]{\yrightarrow[]{\text{\scriptsize skip}}[-1pt]}

\DeclareMathAlphabet{\mathpzc}{OT1}{pzc}{m}{it}

\newlength\mylen

\ifdefined\PAPER 
\usepackage{caption} 
\captionsetup[table]{skip=6pt}
\captionsetup[figure]{skip=6pt}
\setlength{\textfloatsep}{6pt plus 1.0pt minus 2.0pt}
\setlength{\dbltextfloatsep}{6pt plus 1.0pt minus 2.0pt}
\setlength{\floatsep}{4pt plus 1.0pt minus 2.0pt}
\setlength{\belowdisplayskip}{4pt plus 1.0pt minus 2.0pt}
\setlength{\abovedisplayskip}{4pt plus 1.0pt minus 2.0pt}
\fi

\begin{document}

\ifdefined\PAPER
\bstctlcite{IEEEexample:BSTcontrol}
\fi

\sloppy


\ifdefined\TECHREP  
\title{A Dual Digraph Approach for  Leaderless \\ Atomic Broadcast\\ \Large  (Extended Version)*
\thanks{\textsuperscript{*}CC-BY 4.0. This is the author's version of the work. 
The definitive version is published in the proceedings of the 2019 38th International Symposium on Reliable Distributed Systems (SRDS 2019).
Please refer to that publication when citing \ACplus{}.}}

\author{
\IEEEauthorblockN{Marius Poke}
\IEEEauthorblockA{Faculty of Mechanical Engineering}
\IEEEauthorblockA{Helmut Schmidt University}
\IEEEauthorblockA{marius.poke@hsu-hh.de}
\and
\IEEEauthorblockN{Colin W. Glass}
\IEEEauthorblockA{Faculty of Mechanical Engineering}
\IEEEauthorblockA{Helmut Schmidt University}
\IEEEauthorblockA{glassc@hsu-hh.de}
}

\maketitle
\fi

\ifdefined\PAPER
\title{A Dual Digraph Approach for Leaderless \\ Atomic Broadcast}

\author{
\IEEEauthorblockN{Marius Poke}
\IEEEauthorblockA{\textit{Faculty of Mechanical Engineering} \\ 
\textit{Helmut Schmidt University} \\
marius.poke@hsu-hh.de}
\and
\IEEEauthorblockN{Colin W. Glass}
\IEEEauthorblockA{\textit{Faculty of Mechanical Engineering} \\ 
\textit{Helmut Schmidt University} \\
glassc@hsu-hh.de}
}

\maketitle

\fi

\begin{abstract}
%
Many distributed systems work on a common shared state;
in such systems, distributed agreement is necessary for consistency.
With an increasing number of servers, these systems become more susceptible to single-server failures, increasing the relevance of fault-tolerance. 
Atomic broadcast enables fault-tolerant distributed agreement, yet it is costly to solve.
Most practical algorithms entail linear work per broadcast message.
\AC{}---a leaderless approach---reduces the work, by connecting the servers via 
a sparse resilient overlay network; yet, this resiliency entails redundancy, limiting the reduction of work. 
%
In this paper, we propose \ACplus{}, an atomic broadcast algorithm that lifts this limitation: 
During intervals with no failures, it achieves minimal work by using a redundancy-free overlay network. 
When failures do occur, it automatically recovers by switching to a resilient overlay network.
%
In our performance evaluation of non-failure scenarios, 
\ACplus{} achieves comparable throughput to AllGather---a non-fault-tolerant distributed agreement algorithm---and
outperforms \AC{}, LCR and Libpaxos both in terms of throughput and latency.
Furthermore, our evaluation of failure scenarios shows that \ACplus{}'s expected performance is robust with regard to occasional failures. 
%
%
%
Thus, for realistic use cases, leveraging redundancy-free distributed agreement during intervals with no failures improves performance significantly.
\end{abstract}


\ifdefined\PAPER
%

\begin{IEEEkeywords}
Distributed Agreement, Atomic Broadcast, Consensus, Reliability
\end{IEEEkeywords}
\fi

%

\section{Introduction}

Many distributed systems work on a common shared state, 
e.g., distributed-ledger systems~\cite{Androulaki:2018:HFD:3190508.3190538} and databases of travel reservation systems~\cite{Unterbrunner2014}.
To guarantee consistency, distributed agreement is necessary---all the servers sharing the state need to agree on the ordering of updates. 
Moreover, the updates are propagated to all servers, which will then apply them sequentially to their states, i.e., active replication~\cite{Correia2010}.
We consider applications for which the state updates are well distributed among the servers and cannot be reduced.

Most distributed agreement algorithms are designed for use cases, where replicating the state is a means to achieve 
high availability~\cite{Lamport:1998:PP:279227.279229, Liskov2012, Junqueira:2011:ZHB:2056308.2056409, Ongaro2014, ringpaxos, Guerraoui:2010:TOT:1813654.1813656}.
\ifdefined\TECHREP
These algorithms are typically used to provide coordination services
to large distributed systems.
Although the nature of these services varies, ranging from configuration management~\cite{Hunt:2010:ZWC:1855840.1855851} to locking~\cite{Burrows:2006:CLS:1298455.1298487}, 
the usage patterns usually are similar~\cite{Ongaro2014, Corbett:2013:SGG:2518037.2491245, Hunt:2010:ZWC:1855840.1855851}:
\fi
\ifdefined\PAPER
These algorithms are typically used to provide coordination services to large distributed systems (e.g.,~\cite{Hunt:2010:ZWC:1855840.1855851, Burrows:2006:CLS:1298455.1298487})
and have usually the following usage pattern~\cite{Ongaro2014, Hunt:2010:ZWC:1855840.1855851, Corbett:2013:SGG:2518037.2491245}:
\fi
The servers of the distributed system coordinate their activities by sending requests to a group consisting of a handful of replicas (i.e., three to five usually suffice~\cite{Corbett:2013:SGG:2518037.2491245});
once a request is executed, a reply is sent back to the server that sent the request.
Yet, this pattern does not apply for use cases, 
where having multiple consistent replicas is a requirement of the application, such as distributed ledgers, and replicating the state across hundreds of servers is not uncommon~\cite{Androulaki:2018:HFD:3190508.3190538}. 
In such cases, the servers of the distributed system and the group of replicas are actually the same. 
Thus, for consistency, every state update must be propagated to all servers, which entails an all-to-all exchange.

A straightforward way to implement an all-to-all exchange is to use one of the dissemination schemes typical for unrooted collectives~\cite{hoefler-moor-collectives}.
\ifdefined\TECHREP
For instance, dissemination through a circular digraph provides minimal work\footnote{In the absence of encryption, we assume that the main work performed by a server is sending and receiving messages.} 
per broadcast message~\cite{Guerraoui:2010:TOT:1813654.1813656}. 
\fi
Yet, such implementations provide no fault tolerance and with an increasing number of servers in the system, failures become more likely. 
\ifdefined\TECHREP
For example, in a non-fault-tolerant implementation of a distributed ledger, a failure may lead to inconsistencies, such as the servers disagreeing on the validity of a transaction.

\fi
Atomic broadcast enables fault-tolerant distributed agreement, yet it is costly to solve. 
To provide total order, most practical algorithms rely either on leader-based approaches 
(e.g.,~\cite{Lamport:1998:PP:279227.279229,Liskov2012,Junqueira:2011:ZHB:2056308.2056409,Ongaro2014,Burrows:2006:CLS:1298455.1298487}), 
which entail heavy workloads on the leader,
or on message timestamps that reflect causal ordering~\cite{Lamport:1978:TCO:359545.359563} (e.g.~\cite{Keidar2000, Guerraoui:2010:TOT:1813654.1813656}),
which means they must contain information on every server.
%
As a result, 
the work required for broadcasting one message
is linear in the number of 
\ifdefined\PAPER
servers (in the absence of encryption, we assume that the main work performed by a server is sending and receiving messages).
\fi
\ifdefined\TECHREP
servers.
\fi
The aforementioned algorithms were designed mainly for providing high availability and thus, they are not well suited for large-scale distributed agreement.


\AC{}~\cite{poke2017allconcur} adopts a leaderless approach, where the servers are connected via a sparse overlay network 
and the size of the messages is constant (as no timestamps, reflecting causal ordering, are required). As a result, the work per broadcast message is sublinear.
Yet, to reliably disseminate messages, the overlay network needs to be resilient. 
This resiliency comes at the cost of redundancy, which introduces a lower bound on the work per broadcast message.


In this paper, we present \ACplus{}, a leaderless concurrent atomic broadcast algorithm that adopts a dual digraph approach,
with the aim of lifting the lower bound on work imposed by \AC{}.
In general, a resilient overlay network is necessary, since the frequency of failures in distributed systems 
makes non-fault-tolerant services unfeasible~\cite{tsubame}.
Yet, for many use cases, intervals with no failures are common enough to motivate a less conservative approach.
Thus, servers in \ACplus{} communicate via two overlay networks described by 
two digraphs---an \emph{unreliable} digraph, with a vertex-connectivity of one, 
and a \emph{reliable} digraph, with a vertex-connectivity larger than the maximum number of 
tolerated failures. 
The unreliable digraph enables minimal-work distributed agreement during intervals with no failures---every 
server both receives and sends every broadcast message at most once.
When failures do occur, \ACplus{} falls back to the reliable digraph.
The fault tolerance is given by the reliable digraph's vertex-connectivity and can be adapted to system-specific requirements.
Thus, similarly to \AC{}, \ACplus{} trades off reliability against performance. 

We designed \ACplus{} as a round-based algorithm. The dual digraph approach entails two modes, which means two types of rounds---reliable and unreliable~(\cref{sec:round_based}).
However, due to rollbacks caused by failures, the sequence of rounds is not ordered. 
%
Therefore, to keep better track of the state of each server, we introduce epochs, which, in a nutshell, consist each of one reliable round followed by a sequence of zero or more unreliable rounds. 
This enables us to model the execution of \ACplus{} as an ordered sequence of states, that can be reached through a well-defined set of transitions~(\cref{sec:state_machine}).
%

Our evaluation of \ACplus{}'s performance is based on a discrete-event simulator~\cite{Varga:2008:OOS:1416222.1416290}~(\cref{sec:evaluation}).
When no failures occur, \ACplus{} achieves between $79\%$ and $100\%$ of the throughput of AllGather~\cite{mpi-3.1}, a non-fault-tolerant distributed agreement algorithm. 
When comparing with other fault-tolerant algorithms, \ACplus{} outperforms them in terms of both throughput and latency: 
It achieves up to $6.5\times$ higher throughput and up to $3.5\times$ lower latency than \AC{}; 
up to $6.3\times$ higher throughput and up to $3.2\times$ lower latency than LCR~\cite{Guerraoui:2010:TOT:1813654.1813656}; 
and up to $318\times$ higher throughput and up to $158\times$ lower latency than Libpaxos~\cite{libpaxos}.
Moreover, our evaluation of failure scenarios shows that \ACplus{}'s expected performance is robust with regard to occasional failures. 
For example, if every time between successive failures a sequence of nine rounds completes, then \ACplus{} has 
up to $3.5\times$ higher throughput and up to $1.9\times$ lower latency than \AC{}.

In summary, our work makes three key contributions:
\begin{itemize}
\setlength{\itemsep}{0pt}
  \item the design of \ACplus{}, a leaderless concurrent atomic broadcast algorithm that 
  leverages redundancy-free agreement during intervals with no failures~(\cref{sec:acp_design});
  \item an informal proof of \ACplus{}'s correctness~(\cref{sec:corectness_proof});
  \item an evaluation of \ACplus{}'s performance~(\cref{sec:evaluation}).
\end{itemize}

\section{System model}
\label{sec:sys_model}

We consider $n$ servers that are subject to a maximum of $f$ crash failures. 
The servers communicate through messages according to an overlay network 
described by a digraph---server $p$ sends messages to 
server $q$ if there is a directed edge $(p,q)$ in the digraph.
\ifdefined\PAPER
The edges describe FIFO reliable channels~\cite{Pedone:2003:OAB:795635.795644},
which are easily implemented in practice with sequence numbers and retransmissions.
\fi
\ifdefined\TECHREP
The edges describe FIFO reliable channels, i.e., we assume the following properties~\cite{Pedone:2003:OAB:795635.795644}: 
\begin{itemize}
\setlength{\itemsep}{0pt}
  \item (no creation) if $q$ receives a message $m$ from $p$, then $p$ sent $m$ to $q$;
  \item (no duplication) $q$ receives every message at most once; 
  \item (no loss) if $p$ sends $m$ to $q$, and $q$ is non-faulty, then $q$ eventually receives $m$;
  \item (FIFO order) if $p$ sends $m$ to $q$ before sending $m^\prime$ to $q$, then $q$ will not receive $m^\prime$ before receiving $m$.
\end{itemize}
In practice, FIFO reliable channels are easy to implement, using sequence numbers and retransmissions.
\fi
We consider two modes: 
(1) an unreliable mode, described by an unreliable digraph $G_U$ with vertex-connectivity $\kappa(G_U) = 1$;
and (2) a reliable mode, described by a reliable digraph $G_R$ with vertex-connectivity $\kappa(G_R) > f$. 
Henceforth, we use the terms vertex and server interchangeably.

\textbf{Atomic broadcast.}
Atomic broadcast is a communication primitive that ensures all messages are delivered in the same order by all non-faulty servers. 
We distinguish between receiving and delivering a message~\cite{Birman:1991:LCA:128738.128742}, i.e., servers can delay the delivery of received messages.
\ifdefined\TECHREP
To formally define atomic broadcast, let $m$ be a message (uniquely identified);
let \tobroadcast{m} and \todeliver{m} be communication primitives for broadcasting and delivering messages atomically;
and let \sender{m} be the server that A-broadcasts~$m$.
Then, any \emph{non-uniform} atomic broadcast
algorithm must satisfy four properties~\cite{Chandra:1996:UFD:226643.226647,Hadzilacos:1994:MAF:866693,Defago:2004:TOB:1041680.1041682}:
\begin{itemize}
\setlength{\itemsep}{0pt}
  \item (Validity) If a non-faulty server A-broadcasts $m$, then it eventually A-delivers~$m$.
  \item (Agreement) If a non-faulty server A-delivers $m$, then all non-faulty servers eventually 
  A-deliver~$m$.
  \item (Integrity) For any message $m$, every non-faulty server A-delivers $m$ at most once, and only if 
  $m$ was previously A-broadcast by \sender{m}.
  \item (Total order) If two non-faulty servers $p_i$ and $p_j$ A-deliver messages $m_1$ and $m_2$, 
  then $p_i$ A-delivers $m_1$ before $m_2$, if and only if $p_j$ A-delivers $m_1$ before~$m_2$. 
\end{itemize}
Integrity and total order are safety properties---they must hold at any point during execution.
Validity and agreement are liveness property---they must eventually hold (to ensure progress).
\fi
\ifdefined\PAPER
Any non-uniform atomic broadcast algorithm must satisfy four properties:
validity, agreement, integrity, and total order~\cite{Chandra:1996:UFD:226643.226647,Defago:2004:TOB:1041680.1041682}.
Integrity and total order are safety properties; validity and agreement are liveness property.
We use \tobroadcast{m} and \todeliver{m} to denote the communication primitives for broadcasting and delivering messages atomically.
\fi
If only validity, agreement and integrity hold, the broadcast is \emph{reliable}; we use \rbroadcast{m} and \rdeliver{m} 
to denote the communication primitives of reliable broadcast. 
Also, we assume a message can be A-broadcast multiple times; yet, it can be A-delivered only once. 

The agreement and the total order properties are non-uniform---they apply only to non-faulty servers. 
\ifdefined\TECHREP
As a consequence, it is not necessary for all non-faulty servers to A-deliver the messages A-delivered by faulty servers.
This may lead to inconsistencies in some applications, such as a persistent database 
(i.e., as a reaction of A-delivering a message, a faulty server issues a write on disk). 
Uniformity can facilitate the development of such applications, yet, it comes at a cost. 
In general, applications can protect themselves from non-uniformity (e.g., any persistent update that is contingent on the A-delivery of $m$, 
must wait for $m$ to be A-delivered by at least $f+1$ servers).
\fi
In Appendix~\ref{app:uniformity} we discuss the modifications required by uniformity.

\textbf{Failure detection.}
Failure detectors (FD) have two properties---\emph{completeness}, i.e., all failures are eventually detected,
and \emph{accuracy}, i.e., no server is falsely suspected to have failed~\cite{Chandra:1996:UFD:226643.226647}.
We consider a heartbeat-based FD: servers send heartbeat messages to their successors in $G_R$;
once a server fails, its successors detect the lack of heartbeat messages and R-broadcast a failure notification to the other servers.
Clearly, such an FD guarantees completeness. 
For deployments, where some assumptions of synchrony~\cite{Dwork:1988:CPP:42282.42283} can be practical (e.g., within a single datacenter), 
accuracy can also be guaranteed, i.e., the FD is perfect (denoted by $\mathcal{P}$)~\cite{Chandra:1996:UFD:226643.226647}.
However, to widen \ACplus{}'s applicability (e.g., deployments over multiple datacenters), we assume an \emph{eventually perfect} FD 
(denoted by $\Diamond\mathcal{P}$)~\cite{Chandra:1996:UFD:226643.226647}, which guarantees accuracy only eventually.
As a result, \ACplus{} has many of the properties of other atomic broadcast algorithms, such as Paxos~\cite{Lamport:1998:PP:279227.279229}:
it guarantees safety even under asynchronous assumptions and liveness under weak synchronous assumptions.

\section{A dual digraph approach}
\label{sec:dual_digraph_approach}

\ACplus{} switches between two modes---unreliable and reliable. 
During intervals with no failures, \ACplus{} uses the unreliable mode, which enables minimal-work distributed agreement.
When failures occur, \ACplus{} automatically switches to the reliable mode, that uses the early termination mechanism of \AC{}~\cite{poke2017allconcur}.

In this section, we first give an overview of \AC{}'s early termination mechanism~(\cref{sec:allconcur_overview}). 
Then, we describe the two modes of \ACplus{}~(\cref{sec:round_based}) and specify the possible transitions from one mode to another~(\cref{sec:state_machine}).
Afterwards, we discuss the conditions necessary to A-deliver messages~(\cref{sec:adeliv_msg}).
Finally, we reason about the possible concurrent states \ACplus{} servers can be in~(\cref{sec:conc_states}).


\subsection{Overview of \AC{}'s early termination}
\label{sec:allconcur_overview}

\ifdefined\TECHREP
\begin{figure*}[!tp]
\captionsetup[subfigure]{justification=centering}
\centering
\subcaptionbox{$G_S(9,3)$\label{fig:gs_n9_d3}} {
\includegraphics[width=.18\textwidth]{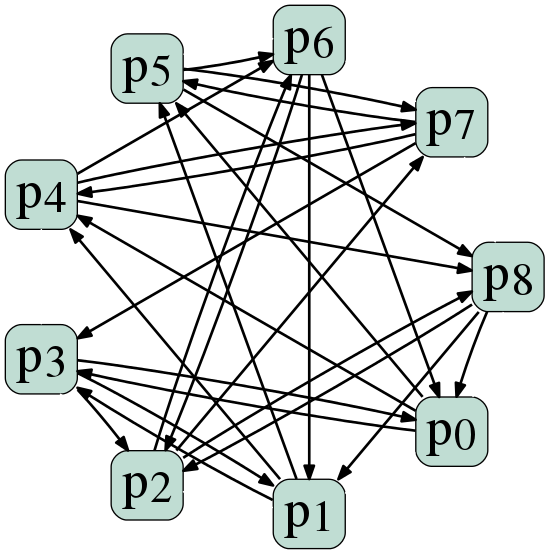}
}
\qquad
\subcaptionbox{\label{fig:tracking_digraph}} {
\input{figures/tracking_digraph_small.tex}
}
\caption{(a) Overlay network connecting nine servers. 
(b) Server $p_6$ tracking $p_0$'s message $m_0$ within a $G_S(9,3)$ digraph. Dotted red nodes indicate failed servers.
Dashed gray nodes indicate servers from which $p_6$ received failures notifications (i.e, dashed red edges). 
Solid green nodes indicate servers suspected to have $m_0$~\cite{allconcur_tla}.}
\label{fig:tracking_in_gs}
\end{figure*}
\fi

\AC{} is a round-based algorithm that, through an early termination mechanism, allows every server to A-deliver 
a round as soon as it knows it has all the respective messages any other non-faulty server has~\cite{poke2017allconcur}.
As a result, it avoids waiting unnecessarily for the worst-case bound of 
\ifdefined\TECHREP
$f+1$ communication steps~\cite{lamport1982byzantine,Aguilera:1998:SBP:866987} 
\fi
\ifdefined\PAPER
$f+1$ communication steps~\cite{lamport1982byzantine} 
\fi
(or more generally, $f+D_f(G)$ communication steps, with $D_f(G)$ being $G$'s fault diameter~\cite{Krishnamoorthy:1987:FDI:35064.36256}, 
i.e., $G$'s maximum diameter after removing any $f$ vertices).
The early termination mechanism uses failure notifications to track A-broadcast messages.
For every A-broadcast message $m$, every server $p$ maintains a \emph{tracking digraph} $g$, i.e., a representation of $p$'s suspicion of $m$'s whereabouts. 
Thus, $g$'s vertices indicate the servers suspected of having $m$ and $g$'s edges indicate the paths on which $m$ is suspected of having been transmitted. 
The tracking stops when $p$ either receives $m$ or suspects only failed servers of having $m$. 
To safely complete a round, $p$ must finish tracking all messages A-broadcast in that round. 

\ifdefined\TECHREP
To illustrate the early termination mechanism we consider the example from the formal specification of \AC{}~\cite{allconcur_tla}:
$n=9$ servers $(p_0,\ldots,p_8)$ connected through a $G_S(n,d)$ digraph~\cite{Soneoka:1996:DDC:227095.227101} with both the degree and the vertex-connectivity equal to three, 
i.e., optimally connected~\cite{Meyer:1988:FFG:47054.47067} (see Figure~\ref{fig:gs_n9_d3}).
We consider the following scenario: $p_0$ sends $m_0$ only to $p_5$ and then it fails; $p_5$ receives $m_0$, but it fails before sending it further.
Thus, $m_0$ is lost; yet, $p_6$ (for example) is not aware of this and to avoid waiting unnecessarily for the worst case bound, it tracks $m_0$ (see Figure~\ref{fig:tracking_digraph}). 
Let $p_4$ be the first to detect $p_0$'s failure and consequently, R-broadcasting a notification. 
When $p_6$ receives this notification, it first marks $p_0$ as failed 
and then starts suspecting that, before failing, $p_0$ sent $m_0$ to its successors, i.e., $p_3$ and $p_5$. 
Note though that $p_6$ does not suspect $p_4$. Had $p_4$ received $m_0$ from $p_0$, then  $p_4$ would have relayed $m_0$, 
which would therefore have arrived to $p_6$ before the subsequent failure notification (due to assumption of FIFO reliable channels).
This procedure repeats for the other failure notifications received, e.g., $p_6$ detects $p_5$'s failure directly, $p_8$ detects $p_5$'s failure, $p_3$ detects $p_0$'s failure, 
and $p_7$ detects $p_5$'s failure. 
In the end, $p_6$ suspects only $p_0$ and $p_5$ of  having $m_0$ and, since both are faulty, $p_6$ stops tracking $m_0$. 
\fi

The early termination mechanism relies on the following proposition~\cite{poke2017allconcur,allconcur_tla}:
\ifdefined\TECHREP
\begin{prop}
\label{prop:early_termination}
Let $p_i$, $p_j$ and $p_k$ be three servers. Then, $p_i$ receiving a notification of $p_j$'s failure sent by $p_k$
indicates that $p_i$ has all the messages $p_k$ received directly from $p_j$.
\end{prop}
\fi
\ifdefined\PAPER
Let $p_i$, $p_j$ and $p_k$ be three servers, then $p_i$ receiving a notification of $p_j$'s failure sent by $p_k$
indicates that $p_i$ has all the messages $p_k$ received directly from $p_j$.
\fi
For this proposition to hold for $\Diamond\mathcal{P}$, 
a server must ignore any subsequent messages (except failure notifications) it receives from a predecessors it has suspected of having failed.
This is equivalent to removing an edge from $G$ and thus, may lead to a disconnected digraph even if $f<\kappa(G)$ (since $\Diamond\mathcal{P}$ may be inaccurate).
To avoid inconsistencies, \AC{} uses a \emph{primary partition} membership approach~\cite{Defago:2004:TOB:1041680.1041682}:
Only servers from the \emph{surviving partition}---a strongly connected component that contains at least a majority of 
servers---are allowed to make progress and A-deliver messages.
To decide whether they are part of the surviving partition, the servers use a \emph{forward-backward mechanism} 
(based on Kosaraju's algorithm to find strongly connected components~\cite{Aho:1983:DSA:577958}):
Once a round completes, before A-delivering its messages, every server R-broadcasts two control messages---a forward message in $G$, and a backward message in $G$'s transpose.
Then, each server A-delivers the round's messages only if it receives both forward and backward messages from $\lceil \frac{n-1}{2} \rceil$ other servers. 

\ifdefined\TECHREP
Allowing only servers from the surviving partition to A-deliver guarantees safety.
Yet, the servers that are not part of the surviving partition cannot make progress and thus, they must be removed (i.e., process controlled crash).
This also includes scenarios where there is no surviving partition, which entails that the algorithm must be stopped and started again from the latest A-delivered round. 
\fi

As an optimization, if assuming $\mathcal{P}$ is practical, the forward-backward mechanism is not necessary, 
which entails faster rounds.

\subsection{Round-based algorithm}
\label{sec:round_based}

\ACplus{} is a round-based algorithm that distinguishes between \emph{unreliable} and \emph{reliable} rounds.
%
Every round is described by its round number~$r$. 
In round~$r$, every server A-broadcasts a (possibly empty) message and collects (in a set) all the messages received for this round (including its own).
The goal is for all non-faulty servers to eventually agree on a common set of messages; we refer to this as the \emph{set agreement} property. 
Then, all non-faulty servers A-deliver the messages in the common set in a deterministic order. 
For brevity, we say a round is A-delivered if its messages are A-delivered.


\textbf{Unreliable rounds.}
Unreliable rounds enable minimal-work distributed agreement while no failures occur. The overlay network is described by $G_U$. 
A server completes an unreliable round once it has received a message from every server (A-broadcast in that round). 
Since $G_U$ is connected, the completion of an unreliable round is guaranteed under the condition of no failures.
To ensure set agreement, the completion of an unreliable round does not directly lead to it being A-delivered.
Yet, completing two successive unreliable rounds guarantees the first one can be A-delivered (see Section~\ref{sec:adeliv_msg} for details).

\textbf{Reliable rounds.}
%
Receiving a failure notification in an unreliable round triggers both a rollback to the latest A-delivered round and a switch to the reliable mode.
Thus, the first round that has not yet been A-delivered (i.e., succeeding the latest A-delivered round) is rerun reliably. 
To complete a reliable round, every server uses \AC{}'s early termination mechanism~(\cref{sec:allconcur_overview}).
In addition to completion, early termination guarantees set agreement~\cite{poke2017allconcur,allconcur_tla}. 
Thus, once a server completes a reliable round, it can safely A-deliver it~(\cref{sec:adeliv_msg}).
\ifdefined\TECHREP
In summary, a rollback entails (reliably) rerunning unreliable rounds, which have not been previously A-delivered. 
For validity to hold, it is necessary for the same messages to be A-broadcast when a round is rerun.
In practice, this requirement can be dropped~(\cref{sec:evaluation}).
\fi

Once a reliable round completes, all servers for which no messages were A-delivered are removed.
Thus, all non-faulty servers have a consistent view of the system 
\ifdefined\TECHREP
(similar to group membership services~\cite{Chandra:1996:IGM:248052.248120,Cristian1991}). 
\fi
\ifdefined\PAPER
(similar to group membership services~\cite{Chandra:1996:IGM:248052.248120}). 
\fi
After removing servers from the system, every non-faulty server needs to update both $G_U$, in order for it to be connected, 
and the set $F$ of received failure notifications; also, the servers may choose to update $G_R$.
We defer the discussion on updating the digraphs to 
\ifdefined\TECHREP
Sections~\ref{sec:algo_description} and~\ref{sec:update_Gf}, respectively. 
\fi
\ifdefined\PAPER
Section~\ref{sec:acp_design}. 
\fi
Updating~$F$ entails removing all the \emph{invalid failure notifications}, 
i.e., notifications that are targeting, or were detected by, removed servers. 
Invalid notifications are no longer required for further message tracking, 
since removed servers cannot be suspected of having any messages. 

\subsection{State machine approach}
\label{sec:state_machine}
It is common to describe a distributed algorithm as a state machine: Every server's behavior is defined by an ordered sequence of states.
As long as no failures occur, we can model \ACplus{}'s execution as an ordered sequence of unreliable rounds with strictly increasing round numbers. 
\ifdefined\TECHREP
This sequence consists of a sub-sequence of A-delivered rounds followed (potentially) by a round not yet A-delivered~(\cref{sec:adeliv_msg}) and by the ongoing round.
\fi
The occurrence of a failure leads to a rollback. This breaks the order of the sequence, i.e., the round numbers are no longer strictly increasing.
%
To enforce a strict order, we introduce \emph{epochs}. Every epoch starts with a reliable round and contains only one completed reliable round and at most one sequence of unreliable rounds.
An epoch is described by an epoch number~$e$, which corresponds to the number of reliable rounds completed so far, plus the ongoing round, if it is reliable.

The state of each server is defined by the epoch number, the round number, the round type and, for unreliable rounds, the type of the previous round.
We denote states with unreliable rounds by $\acState{e}{r}$ and 
states with reliable rounds by $\acRState{e}{r}$. 
Also, when necessary, we use~$\triangleright$ to indicate the first unreliable round following a reliable round. 
Initially, all servers  are in state $\acRState{1}{0}$, essentially assuming a reliable round $0$ has already been completed, without any messages being A-broadcast. 
State $\acRState{1}{0}$ is required by the definition of an epoch. 

Servers can move from one state to another; 
we define three types of state transitions.
First, \emph{no-fail transitions}, denoted by~$\trans$, 
move servers to the next unreliable round without increasing the epoch. 
We identify two no-fail transitions:
\ifdefined\TECHREP
\begin{align*}
 &\text{(}\tuu{}\text{)} \quad \acState{e}{r} \trans \acState{e}{r+1}; &
 &\text{(}\trf{}\text{)} \quad \acRState{e}{r} \trans \acFState{e}{r+1}.
\end{align*}
\fi
\ifdefined\PAPER
\begin{align*}
 &\text{(}\tuu{}\text{)} \,\, \acState{e}{r} \trans \acState{e}{r\!+\!1}; &
 &\text{(}\trf{}\text{)} \,\, \acRState{e}{r} \trans \acFState{e}{r\!+\!1}.
\end{align*}
\fi
$\tuu{}$ continues a sequence of unreliable rounds, while $\trf{}$, 
in the absence of failure notifications, starts a sequence of unreliable rounds.
\ifdefined\TECHREP
Note that the initial state $\acRState{1}{0}$ is always followed by a $\trf{}$ transition.
\fi

Second, \emph{fail transitions}, denoted by~$\fTrans$, move servers to a reliable round while increasing the epoch. 
Fail transitions are caused by failure notifications: In unreliable rounds, any failure notification immediately 
triggers a fail transition. In reliable rounds,  remaining failure notifications that are still valid at the end of the round result in a fail transition to the next round. 
Overall, we identify three fail transitions:
\ifdefined\TECHREP
\begin{align*}
 &\text{(}\tur{}\text{)} \quad \acState{e}{r} \fTrans \acRState{e+1}{r-1}; &
 &\text{(}\tfr{}\text{)} \quad \acFState{e}{r} \fTrans \acRState{e+1}{r}; &
 &\text{(}\trr{}\text{)} \quad \acRState{e}{r} \fTrans \acRState{e+1}{r+1}. &
\end{align*}
\fi
\ifdefined\PAPER
\begin{align*}
 &\text{(}\tur{}\text{)} \,\, \acState{e}{r} \fTrans \acRState{e\!+\!1}{r\!-\!1}; &
 &\text{(}\tfr{}\text{)} \,\, \acFState{e}{r} \fTrans \acRState{e\!+\!1}{r}; \\
 &\text{(}\trr{}\text{)} \,\, \acRState{e}{r} \fTrans \acRState{e\!+\!1}{r\!+\!1}. &
\end{align*}
\fi
$\tur{}$ and $\tfr{}$ both interrupt the current unreliable round and rollback to the latest A-delivered round.
The difference is $\tur{}$ is preceded by $\tuu{}$, while $\tfr{}$ is preceded by $\trf{}$, and thus, the latest A-delivered round differs~(\cref{sec:adeliv_msg}).
$\trr{}$ continues a sequence of reliable rounds, due to failure notifications in~$F$ that remain valid at the end of the round and thus, cannot be removed~(\cref{sec:round_based}).

Third, \emph{skip transitions}, denoted by~$\sTrans$, 
move servers to a reliable round without increasing the epoch, i.e.,
\ifdefined\TECHREP
\begin{align*}
 &\text{(}\tsk{}\text{)} \quad \acRState{e}{r} \sTrans \acRState{e}{r+1}.
\end{align*}
\fi
\ifdefined\PAPER
\begin{align*}
 &\text{(}\tsk{}\text{)} \,\, \acRState{e}{r} \sTrans \acRState{e}{r\!+\!1}.
\end{align*}
\fi
A server $p_i$ performs a skip transition if it receives, in a reliable round, 
a message A-broadcast by a server $p_j$ in the same epoch, but in a subsequent reliable round. 
This happens only if $p_j$ had one more $\tuu{}$ transition than $p_i$ 
before receiving the failure notification that triggered the fail transition to the current epoch.
Figure~\ref{fig:skip_trans} illustrates the skip transition of $p_i$ after receiving $m_j^{e,r+1}$, 
the message A-broadcast by $p_j$ while in state $\acRState{e}{r+1}$.
This message indicates (to $p_i$) that $p_j$ rolled back to round $r$, the latest A-delivered round; 
thus, it is safe (for $p_i$) to also A-deliver $\acState{e}{r}$  and \emph{skip} to $\acRState{e}{r+1}$.
Note that $\acRState{e}{r}$ is not completed (we say a state is completed or A-delivered, 
if its round is completed or A-delivered, respectively).

\begin{figure}[!tp]
\centering
\ifdefined\PAPER
\scalebox{0.70}{\input{figures/skip_transition.tex}}
\fi
\ifdefined\TECHREP
\scalebox{0.80}{\input{figures/skip_transition.tex}}
\fi
\caption{Skip transition (i.e., $\tsk{}$). Empty rectangles indicate unreliable rounds; filled (gray) rectangles indicate reliable rounds.
Rectangles with solid edges indicate completed rounds, while dashed edges indicate interrupted rounds.}
\label{fig:skip_trans}
\end{figure}


\ifdefined\TECHREP
In summary, $\tuu{}$ and $\trf{}$ lead to unreliable rounds, while $\tur{}$, $\tfr{}$, $\trr{}$ and $\tsk{}$ lead to reliable rounds. 
\fi

\subsection{A-delivering messages}
\label{sec:adeliv_msg}

In Figure~\ref{fig:skip_trans}, $p_j$ already A-delivered round $r$ when it receives 
the failure notification that triggers the transition from $\acState{e-1}{r+2}$ to $\acRState{e}{r+1}$.
To explain the intuition behind $p_j$ A-delivering $r$, we first introduce the following proposition:
\begin{prop}
\label{prop:one_end_all_start}
A necessary condition for a server to complete a state is for all non-faulty servers to start the state.
\end{prop}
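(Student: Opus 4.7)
The plan is to proceed by contradiction. Suppose server $p$ completes some state $s$ (either $\acState{e}{r}$ or $\acRState{e}{r}$), yet there is a non-faulty server $q$ that has not started $s$. I would observe that a server emits its round-$r$, epoch-$e$ message only while occupying state $s$, so the goal is to force $p$ to have received such a message from $q$ and then invoke the no-creation property of FIFO reliable channels to conclude that $q$ did enter $s$ after all.

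For the unreliable case $s=\acState{e}{r}$, the argument is essentially definitional: completion of an unreliable state requires $p$ to have received one round-$r$, epoch-$e$ message from every server in the system (see~\cref{sec:round_based}), including $q$. By no-creation, $q$ sent that message, and $q$ only sends such a message while in $s$, contradicting the assumption that $q$ had not started $s$.

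For the reliable case $s=\acRState{e}{r}$, completion is governed by \AC{}'s early termination mechanism (\cref{sec:allconcur_overview}): $p$ must finish tracking every message A-broadcast in that state. Focusing on the tracking digraph $g$ for $q$'s message, I would argue that $q$ is the source and hence always a vertex of $g$, and that the only termination alternative to receiving the message is that $g$ contain exclusively failed servers. Since $q$ is non-faulty, the latter cannot occur, so $p$ must have actually received $q$'s round-$r$, epoch-$e$ message; the same FIFO/no-creation step as above then contradicts the assumption.

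The step most likely to require care is the reliable case, specifically justifying that the tracking digraph for $q$'s message really does retain $q$ as a non-failed vertex throughout. This relies on the fact that failure notifications are only issued for actually failed servers under $\mathcal{P}$, together with the invariant that the source of a message is a vertex of its tracking digraph from initialization onward; I would cite \AC{}'s specification~\cite{poke2017allconcur,allconcur_tla} for both. A secondary subtlety is the skip transition $\tsk{}$: even if $q$ eventually skips $s$, the transition $\acRState{e}{r}\sTrans\acRState{e}{r+1}$ itself requires $q$ to be in $\acRState{e}{r}$ first, so "skipping" $s$ still counts as starting it and does not weaken the conclusion.
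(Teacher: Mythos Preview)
Your proposal is correct and follows essentially the same idea as the paper's proof, just elaborated more fully. The paper's proof is a single sentence: to complete a state (unreliable or reliable), a server must receive from every non-faulty server a message A-broadcast in that state; you unpack this claim by case analysis, justifying the reliable case via the tracking-digraph invariant (the source $q$ is never pruned and, under $\mathcal{P}$, is never marked failed), which is exactly the reasoning implicit in the paper's one-liner. Your remark on $\tsk{}$ is harmless but unnecessary here, since the proposition concerns \emph{completing} a state, and skip transitions do not complete the skipped state.
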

\begin{IEEEproof}
For a server to complete a state (with either an unreliable or a reliable round), 
it must receive from every non-faulty server a message A-broadcast in that state~(\cref{sec:round_based}).
\end{IEEEproof}

Since $p_j$ started $\acState{e-1}{r+2}$, it completed $\acState{e-1}{r+1}$ and thus, 
every non-faulty server started $\acState{e-1}{r+1}$ (cf. Proposition~\ref{prop:one_end_all_start}).
This entails that every non-faulty server completed $\acState{e-1}{r}$.
Thus, $p_j$ knows round $r$ was safely completed by all non-faulty servers and it can A-deliver it.
Moreover, in the example, the message  $m_j^{e,r+1}$ A-broadcast by $p_j$ while in state $\acRState{e}{r+1}$ 
carries (implicitly) the information that all non-faulty servers completed $\acState{e-1}{r}$. 
Thus, upon receiving $m_j^{e,r+1}$, $p_i$ can also A-deliver $\acState{e-1}{r}$. 

Thus, in \ACplus{}, a server can A-deliver an unreliable round~$r$ in two ways: 
(1) it completes the subsequent unreliable round $r+1$;
or (2) after it interrupts the subsequent round $r+1$ due to a failure notification, 
it receives a message A-broadcast by another server in the same epoch, but in a subsequent reliable round, i.e., a skip transition.
Reliable rounds use early termination; thus, they can be A-delivered directly after completion (i.e., before either $\trf{}$ or $\trr{}$).

\subsection{Concurrent states}
\label{sec:conc_states}

The necessary condition stated in Proposition~\ref{prop:one_end_all_start} enables us to reason about the possible concurrent states, 
i.e., the states a non-faulty server $p_i$ can be, given that a non-faulty server $p_j$ is either in $\acState{e}{r}$ or in $\acRState{e}{r}$ 
(see Appendix~\ref{app:conc_states} for a description of all the possible concurrent states).
Knowing the possible concurrent states enables us to deduce a set of properties that aid the design of \ACplus{}:
Theorem~\ref{prop:unique_states} asserts the uniqueness of a state, i.e., a state is uniquely identified by only its epoch and round; 
Theorems~\ref{prop:unrel_msg}--\ref{prop:rel_msg_same_epoch} specify what messages a non-faulty server can receive. 
The informal proofs of these theorems are available in Appendix~\ref{app:conc_proofs}. 

\begin{restatable}{theorem}{propUniqueState}
\label{prop:unique_states}
Let $p_i$ and $p_j$ be two non-faulty servers, both in epoch $e$ and round $r$. 
Then, both are either in~$\acState{e}{r}$ or in~$\acRState{e}{r}$. 
\end{restatable}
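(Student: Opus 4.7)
The plan is to proceed by contradiction. Suppose, at some global point in time, $p_i \in \acRState{e}{r}$ while $p_j$ is in an unreliable state---either $\acState{e}{r}$ or $\acFState{e}{r}$---at the same $(e, r)$.

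First, I would track how $p_j$ reached its current state. The only transitions that deposit a server into an unreliable round of the \emph{same} epoch are $\tuu$ and $\trf$, so $p_j$ must have earlier completed some reliable round $\acRState{e}{r^*}$ with $r^* < r$ via a $\trf$ transition, possibly followed by a sequence of $\tuu$ transitions to reach round $r$. Applying Proposition~\ref{prop:one_end_all_start} to $p_j$'s completion of $\acRState{e}{r^*}$, every non-faulty server, and in particular $p_i$, must have started $\acRState{e}{r^*}$. Since $p_i$ is now in $\acRState{e}{r}$ with $r > r^*$ and $\tsk$ is the only transition that both preserves the epoch and advances a server through reliable rounds, $p_i$ must have performed at least one skip from $\acRState{e}{r^*}$.

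The next step is to trace the skip chain back to its root: a server $p_l$ that first entered $\acRState{e}{r^* + 1}$ via a non-$\tsk$ transition, i.e., $\tur$ from $\acState{e-1}{r^*+2}$, $\tfr$ from $\acFState{e-1}{r^*+1}$, or $\trr$ from $\acRState{e-1}{r^*}$. In every case $p_l$ arrives directly from epoch $e-1$ and therefore never entered $\acRState{e}{r^*}$. If $p_l$ is non-faulty, a second application of Proposition~\ref{prop:one_end_all_start} to $p_j$'s completion of $\acRState{e}{r^*}$ forces $p_l$ to have started $\acRState{e}{r^*}$, contradicting its chosen entry transition.

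I expect the main obstacle to be the corner case in which $p_l$ is faulty, so that Proposition~\ref{prop:one_end_all_start} does not directly constrain it. Here I plan to use the FIFO-reliable R-broadcast of failure notifications over $G_R$: since $p_l$ was alive long enough to inject its round-$r^*+1$ message into the skip chain that reaches $p_i$, the failure notification about $p_l$ reaches $p_j$ no later than $p_j$'s completion of $\acRState{e}{r^*}$, and this notification remains valid at that point because $p_l$ was absent from $\acRState{e}{r^*}$ and contributed no A-delivered message to it; this forces $p_j$ to take $\trr$ rather than $\trf$, contradicting the hypothesis that $p_j$ reached an unreliable state in epoch $e$.
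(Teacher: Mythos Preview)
Your overall strategy mirrors the paper's: assume one server is in the unreliable state and the other in the reliable state at the same $(e,r)$, trace how each arrived, and derive a contradiction via Proposition~\ref{prop:one_end_all_start} together with the structure of skip transitions.

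The paper's route is tighter. Rather than tracing the unreliable server back to an arbitrary completed reliable round $\acRState{e}{r^*}$, the paper looks only at the immediate predecessor of $\acState{e}{r}$, which is either $\acState{e}{r-1}$ or $\acRState{e}{r-1}$. A short case analysis shows that from $\acState{e}{r-1}$ no path within epoch $e$ can reach $\acRState{e}{r}$, so the predecessor must be $\acRState{e}{r-1}$; hence the reliable server reached $\acRState{e}{r}$ by a single $\tsk{}$ from $\acRState{e}{r-1}$. In your notation this forces $r^*=r-1$, so the extra generality buys nothing. The paper then invokes the design of the skip transition (Figure~\ref{fig:skip_trans}): a skip into $\acRState{e}{r}$ presupposes some \emph{non-faulty} server that took $\acState{e-1}{r+1}\fTrans\acRState{e}{r}$; that server never started $\acRState{e}{r-1}$, contradicting Proposition~\ref{prop:one_end_all_start}. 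The paper thus sidesteps the faulty-root case entirely by appealing to this property of skips.

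Your handling of that case contains a concrete error. You argue that the failure notification targeting $p_l$ ``remains valid at that point because $p_l$ was absent from $\acRState{e}{r^*}$ and contributed no A-delivered message to it,'' and that this forces $p_j$ to take $\trr{}$ rather than $\trf{}$. The implication runs the wrong way. When a reliable round is A-delivered, every server whose message was \emph{not} A-delivered is removed from the system, and a notification becomes \emph{invalid} exactly when its target (or owner) has been removed~(\cref{sec:round_based}). So the very premise---$p_l$ contributed no message to $\acRState{e}{r^*}$---causes $p_l$ to be removed upon $p_j$'s completion of that round, which invalidates the notifications targeting $p_l$ and purges them from $F_j$. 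Nothing in your argument then blocks $\trf{}$. Your proposed timing link (``$p_l$ was alive long enough to inject its message, therefore the notification reaches $p_j$ in time'') is also unsupported: $p_l$ being alive when it sends from $\acRState{e}{r^*+1}$ says nothing about when $p_l$ subsequently fails or when the resulting notification arrives at $p_j$.
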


\begin{restatable}{theorem}{propUnrelMsg}
\label{prop:unrel_msg}
Let $p_i$ and $p_j$ be two non-faulty servers. 
Let $m_j^{(e,r)}$ be the message A-broadcast by $p_j$ while in $\acState{e}{r}$ 
and received by $p_i$ in epoch $\tilde{e}$ and round $\tilde{r}$.
Then, $\tilde{e} \geq e$. Also, $(\tilde{e} = e \wedge \tilde{r} < r) \Rightarrow \tilde{r}=r-1$.
\end{restatable}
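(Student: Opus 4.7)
My plan is to prove the two claims separately, each time combining Proposition~\ref{prop:one_end_all_start} with the fact that epoch and round numbers evolve monotonically under the allowed transitions within an epoch.

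For $\tilde{e} \geq e$, the key observation is that every epoch begins with a reliable round: since $p_j$ is in an unreliable state $\acState{e}{r}$, it must earlier have completed the initial reliable state $\acRState{e}{r_0}$ that opens epoch $e$ (the only way for $p_j$ to leave epoch $e$'s reliable phase into an unreliable one is via a $\trf$ transition, which requires completing the corresponding reliable round). By Proposition~\ref{prop:one_end_all_start}, every non-faulty server---$p_i$ in particular---had already started $\acRState{e}{r_0}$ before $p_j$ completed it, and hence $p_i$ was in epoch $\geq e$ at that earlier time. Since none of the transitions $\tuu$, $\trf$, $\tur$, $\tfr$, $\trr$, or $\tsk$ ever decreases the epoch number, $p_i$'s epoch at the moment of reception of $m_j^{(e,r)}$ still satisfies $\tilde{e} \geq e$.

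For the implication in the second claim, I would assume $\tilde{e} = e$ and $\tilde{r} < r$ and track backwards from $p_j$'s state. In reaching $\acState{e}{r}$, $p_j$ must have completed a predecessor state in epoch $e$ at round $r-1$---either $\acState{e}{r-1}$ if the predecessor is unreliable, or $\acRState{e}{r-1}$ in the boundary case where $r-1$ is the opening round of the epoch. Applying Proposition~\ref{prop:one_end_all_start} to that completion forces $p_i$ to have started the round-$(r-1)$ state in epoch $e$ at an earlier real time than $p_j$'s broadcast. Within a single epoch the round number is monotonically non-decreasing: $\tuu$, $\trf$, $\trr$, and $\tsk$ all strictly increase $r$, while the only transition that lowers $r$, namely $\tur$, simultaneously increases the epoch and therefore moves the server out of epoch $e$. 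Consequently $p_i$'s round at reception satisfies $\tilde{r} \geq r-1$, which, combined with $\tilde{r} < r$, forces $\tilde{r} = r-1$.

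The main obstacle will be pinning down the global real-time ordering of events that underlies the whole argument: I need FIFO reliable channels together with basic causality---the completions of $\acRState{e}{r_0}$ and of the round-$(r-1)$ state by $p_j$ both precede its sending of $m_j^{(e,r)}$, which in turn precedes reception by $p_i$---so that the constraints on $p_i$'s state at those earlier times propagate forward to constraints at the instant of reception. A related subtle point is the boundary case where $p_j$'s immediate predecessor is reliable rather than unreliable; but Proposition~\ref{prop:one_end_all_start} applies uniformly to both kinds of completion, so the case split is essentially cosmetic and does not require a separate proof.
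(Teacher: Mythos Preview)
Your proposal is correct and follows essentially the same approach as the paper: for $\tilde{e}\geq e$ you invoke Proposition~\ref{prop:one_end_all_start} on the completed reliable round that opens epoch $e$ together with epoch monotonicity, and for the second claim you invoke it on $p_j$'s immediate predecessor state (either $\acState{e}{r-1}$ or $\acRState{e}{r-1}$) together with round monotonicity within an epoch. The paper's proof is the same argument phrased by contradiction and slightly less explicit about the monotonicity step and the causal ordering that carries the ``$p_i$ started \ldots'' constraint forward to the moment of reception; your discussion of those points is sound.
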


\begin{restatable}{theorem}{propRelMsg}
\label{prop:rel_msg} 
Let $p_i$ and $p_j$ be two non-faulty servers. 
Let $m_j^{(e,r)}$ be the message A-broadcast by $p_j$ while in $\acRState{e}{r}$ 
and received by $p_i$ in epoch $\tilde{e}$ and round $\tilde{r}$.
Then, $\tilde{e} < e \Rightarrow (\tilde{e} = e-1 \wedge \acRState{\tilde{e}}{\tilde{r}} \wedge \tilde{r} = r-1)$.
\end{restatable}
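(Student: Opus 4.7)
The plan is to trace how $p_j$ reached $\acRState{e}{r}$ and use the FIFO reliable-channel assumption on $G_R$ to constrain the state $p_i$ can be in at the moment it receives $m_j^{(e,r)}$.

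First, I would observe that $p_j$ entered epoch $e$ from epoch $e-1$ via one of $\tur{}$, $\tfr{}$, or $\trr{}$, possibly followed by one or more $\tsk{}$ transitions inside epoch $e$. In every case the trigger for entering epoch $e$ is a failure notification R-broadcast on $G_R$ that $p_j$ has R-delivered (and relayed on $G_R$) before sending $m_j^{(e,r)}$. By FIFO along every $G_R$ path carrying $m_j^{(e,r)}$ toward $p_i$, that notification reaches $p_i$ (or any intermediate forwarder) strictly before $m_j^{(e,r)}$; an induction on hop-distance in $G_R$ then yields $\tilde{e} \geq e-1$.

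Next, assuming $\tilde{e} = e-1$, I would argue $p_i$ cannot be in an unreliable state: otherwise, processing the failure notification would immediately trigger $\tur{}$ or $\tfr{}$ and push $p_i$ into epoch $e$ before $m_j^{(e,r)}$ is handled, contradicting $\tilde{e} = e-1$. Hence $p_i$ sits in $\acRState{e-1}{\tilde{r}}$, with the notification already accumulated in $F$ but the round not yet complete.

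Finally, I would rule out every entry into $\acRState{e}{r}$ other than $\trr{}$ from $\acRState{e-1}{r-1}$: entries via $\tur{}$ or $\tfr{}$, as well as any preceding $\tsk{}$ step, would oblige $p_j$ to have previously broadcast an epoch-$e$ message from an earlier state, which by the same FIFO argument would push $p_i$ past epoch $e-1$ before $m_j^{(e,r)}$ arrives. Thus $p_j$ was in $\acRState{e-1}{r-1}$ immediately prior to its $\trr{}$ step; by Proposition~\ref{prop:one_end_all_start}, $p_i$ has also started that state, and Proposition~\ref{prop:unique_states} then forces $p_i$ into $\acRState{e-1}{r-1}$, giving $\tilde{r} = r-1$. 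The main obstacle is making the FIFO/R-broadcast argument airtight for multi-hop paths on $G_R$: one must verify that every intermediate forwarder of $m_j^{(e,r)}$ has itself already relayed the triggering notification (and, where skips are involved, every earlier epoch-$e$ message from $p_j$) before passing $m_j^{(e,r)}$ on.
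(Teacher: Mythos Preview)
Your third step is the real problem. The claim that entering $\acRState{e}{r}$ via $\tur{}$ or $\tfr{}$ ``would oblige $p_j$ to have previously broadcast an epoch-$e$ message from an earlier state'' is false: $\tur{}$ takes $p_j$ straight from $\acState{e-1}{r+1}$ to $\acRState{e}{r}$, and $\tfr{}$ from $\acFState{e-1}{r}$ to $\acRState{e}{r}$; in neither case does $p_j$ visit any earlier epoch-$e$ state, so there is no earlier epoch-$e$ message to invoke FIFO on. Your FIFO machinery therefore cannot rule these transitions out, and the reduction to the single case $\trr{}$ collapses. (Even in the $\trr{}$ case, the last sentence misuses Proposition~\ref{prop:unique_states}: that proposition only fixes the round \emph{type} once epoch and round already agree; it does not by itself force $\tilde{r}=r-1$.) The paper handles this part structurally: whatever transition brought $p_j$ to $\acRState{e}{r}$, some state in epoch $e-1$ with round $r$ or $r-1$ was completed; Proposition~\ref{prop:one_end_all_start} then places $p_i$ there, and a case analysis on the transitions out of that state shows $p_i$ cannot subsequently reach any $\acRState{e-1}{\tilde{r}}$ with $\tilde{r}\neq r-1$.

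Your first two steps also have gaps the paper closes differently. For $\tilde{e}\ge e-1$, receiving the triggering notification $\phi$ does \emph{not} by itself push a server into epoch $\ge e-1$: a server already in a reliable round simply appends $\phi$ to its set $F$ without changing epoch, so the hop-by-hop FIFO induction does not yield the epoch bound. The paper instead applies Proposition~\ref{prop:one_end_all_start} to a completed epoch-$(e{-}1)$ state of $p_j$. For the ``$p_i$ is in a reliable round'' part, your argument assumes $p_i$ processes $\phi$ while still in the unreliable round of epoch $e-1$, but $\phi$ may have reached $p_i$ earlier along a shorter $G_R$-path; ruling out that case is exactly what the paper's Lemma~\ref{lemma:fail_trigger} does (showing $\phi$ would have been invalidated before $p_i$ could reach $\acState{e-1}{\cdot}$), and you have no substitute for it.
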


\begin{restatable}{theorem}{propRelMsgSameEpoch}
\label{prop:rel_msg_same_epoch}
Let $p_i$ and $p_j$ be two non-faulty servers. Let $m_j^{(e,r)}$ be a message 
A-broadcast by $p_j$ while in $\acRState{e}{r}$. 
If $p_i$ receives $m_j^{(e,r)}$ in epoch $\tilde{e}=e$ and round $\tilde{r} \leq r$, 
then it is in either $\acRState{e}{r-1}$ or~$\acRState{e}{r}$.
\end{restatable}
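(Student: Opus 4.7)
The plan is to split the claim into two parts: first, that $p_i$'s state at reception is reliable, and second, that its round number is either $r-1$ or $r$. For the first part, I would exploit the structural fact that within any epoch $e$ the reliable rounds occupy an initial interval of round numbers (the initial reliable round plus any subsequent $\tsk$-induced ones), after which the unreliable sequence begins via $\trf$ with $\acFState{e}{r_k^e+1}$. Since $p_j$ sent while in $\acRState{e}{r}$, we have $r \leq r_k^e$, so any round $\tilde{r} \leq r$ that $p_i$ occupies in epoch $e$ also lies in the reliable prefix; Proposition~\ref{prop:unique_states} then forces $p_i$'s state at $(e,\tilde{r})$ to be $\acRState{e}{\tilde{r}}$.

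For the second part, if $\tilde{r}=r$ we are done, so I would suppose $\tilde{r} \leq r-2$ and derive a contradiction. The key step is to trace the chain of $\tsk$ transitions that brought $p_j$ to round $r$ in epoch $e$: $p_j$ either entered epoch $e$ directly at round $r$, or entered at some lower round and reached $r$ via a sequence of $\tsk$ transitions, each triggered by a message from a server currently in a strictly higher reliable round of epoch $e$. Tracing this chain backwards must terminate at some server $p_s$ that entered epoch $e$ directly---via $\tur$, $\tfr$, or $\trr$---at a round $r^* \geq r$.

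I would then apply Proposition~\ref{prop:one_end_all_start} to the state in epoch $e-1$ that $p_s$ had to complete before its entry transition: namely $\acState{e-1}{r^*}$ (or $\acFState{e-1}{r^*}$ in the degenerate sub-case) when the entry is $\tur$, and $\acRState{e-1}{r^*-1}$ when the entry is $\tfr$ or $\trr$. The non-faulty server $p_i$ therefore started that same state. A short analysis of $p_i$'s admissible onward transitions out of it---only $\tuu$ and $\tur$ from the unreliable $\acState{e-1}{r^*}$, and only $\trf$ or $\trr$ from $\acRState{e-1}{r^*-1}$, since $\tsk$ into $\acRState{e-1}{r^*}$ is unavailable whenever $p_s$'s entry was $\tfr$ or $\trr$ (round $r^*$ of epoch $e-1$ being unreliable in that sub-case)---forces $p_i$'s own entry round into epoch $e$ to be at least $r^*-1 \geq r-1$. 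Since $\tilde{r}$ is at least that entry round, this contradicts $\tilde{r} \leq r-2$.

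The main obstacle I anticipate is precisely this final case split: each combination of entry transition for $p_s$ and for $p_i$ yields slightly different constraints on $r_k^{e-1}$ and on the reachable round numbers, so one must carefully distinguish standard unreliable states $\acState{}{}$ from first unreliable states $\acFState{}{}$, and invoke the uniqueness within an epoch of its single completed reliable round, in order to close every sub-case with the needed bound $r-1$.
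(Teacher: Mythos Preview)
Your first part has a gap in how you invoke Proposition~\ref{prop:unique_states}. You argue that since $p_j$ is in $\acRState{e}{r}$, the round $r$ lies in ``the reliable prefix'' of epoch $e$, hence so does $\tilde{r}\leq r$, and then Proposition~\ref{prop:unique_states} forces $p_i$'s state at $(e,\tilde{r})$ to be reliable. But the reliable prefix is a per-server notion: $p_j$ may have entered epoch $e$ directly at round $r$ and never visited $(e,\tilde{r})$ at all, so there is no second server at $(e,\tilde{r})$ against which to compare $p_i$. The paper closes this differently: assuming $p_i$ is in $\acState{e}{\tilde{r}}$, it finds the reliable round $\acRState{e}{r'}$ (with $r'<\tilde{r}\leq r$) that $p_i$ actually completed, uses Proposition~\ref{prop:one_end_all_start} to place $p_j$ there, and then argues $p_j$ cannot have skipped it (else $p_j$ would be in $\acRState{e}{r'+1}$ while $p_i$ is in $\acFState{e}{r'+1}$, contradicting Proposition~\ref{prop:unique_states}); so $p_j$ completed $\acRState{e}{r'}$ and yet is later in $\acRState{e}{r}$, violating the one-completed-reliable-round rule for an epoch. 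Your ``global prefix'' intuition is recoverable along exactly these lines, but as written the step is not justified.

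Your second part is correct and in fact more careful than the paper's. The paper looks only one transition back from $p_j$'s state $\acRState{e}{r}$: either a completed predecessor in epoch $e-1$ (the $\tur$/$\tfr$/$\trr$ cases), or a skip from $\acRState{e}{r-1}$, and for the skip case it asserts, citing Figure~\ref{fig:skip_trans}, that some non-faulty server completed $\acState{e-1}{r}$. Your explicit back-tracing of the $\tsk$ chain to a server $p_s$ that entered epoch $e$ directly at some $r^*\geq r$ is what actually underlies that assertion, so your version is more rigorous at the cost of more case analysis. One small point: in the $\trr$ sub-case for $p_s$, your justification that ``round $r^*$ of epoch $e-1$ is unreliable'' is not immediate, since $p_s$ never visits $(e-1,r^*)$; you need the auxiliary observation that $p_s$ completing $\acRState{e-1}{r^*-1}$ already rules out any server being in $\acRState{e-1}{r^*}$ (by the same epoch-uniqueness argument used in the first part).
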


\section{The design of \ACplus{}}
\label{sec:acp_design}

In this section, we outline the design of \ACplus{} through a concise event-based description~(\cref{sec:algo_description}).
For a more detailed description, including pseudocode, see Appendix~\ref{app:design_details}.
Moreover, we discuss the impact eventual accuracy has on \ACplus{}~(\cref{sec:eventual_accuracy}) and we provide an informal proof of correctness~(\cref{sec:corectness_proof}).
\ifdefined\TECHREP
Initially, we assume no more than $f$ failures during the whole deployment, i.e., $G_R$ cannot become disconnected. 
Later, we discuss how to update $G_R$ in order to maintain reliability despite failures having occurred~(\cref{sec:update_Gf}).
\fi
\ifdefined\PAPER
We assume no more than $f$ failures during the whole deployment, i.e., $G_R$ cannot become 
disconnected\footnote{In order to maintain reliability despite failures having occurred, $G_R$ must be periodically updated.
Due to space limitations, we defer the discussion on how to update $G_R$ to the extended technical report~\cite{poke2017adual_tr}.}.

\fi

\subsection{Algorithm description}
\label{sec:algo_description}


\ifdefined\TECHREP
\begin{table*}[!tp]
\centering
\begin{tabular}{ c | l | l | l }
  \cmidrule[1.5pt](){1-4}
  
  \textbf{\#} & \textbf{State} & \textbf{Event} & \textbf{Actions} \\
  \hline  
  \rowcolor{DarkGray}  
  1 & $\acState{\tilde{e}}{\tilde{r}}$ or $\acFState{\tilde{e}}{\tilde{r}}$ & recv. $m_{j,\acState{\tilde{e}}{\tilde{r}+1}}$ & postpone sending and delivery for $\acState{\tilde{e}}{\tilde{r}+1}$ \\ 
  
  2 &$\acState{\tilde{e}}{\tilde{r}}$ or $\acFState{\tilde{e}}{\tilde{r}}$ & recv. $m_{j,\acState{\tilde{e}}{\tilde{r}}}$ &  
  \begin{tabular}{@{}l@{}} 
    (1) send $m_{j,\acState{\tilde{e}}{\tilde{r}}}$ further (via $G_U$) \\ 
    (2) A-broadcast $m_{i,\acState{\tilde{e}}{\tilde{r}}}$ (if not done already) \\ 
    (3) try to complete $\acState{\tilde{e}}{\tilde{r}}$ 
  \end{tabular} \\

  
  \rowcolor{DarkGray} 
  3 & $\acState{\tilde{e}}{\tilde{r}}$ & recv. (valid) $\mathit{fn}_{j, k}$ & move to $\acRState{\tilde{e}+1}{\tilde{r}-1}$ and re-handle $\mathit{fn}_{j, k}$ (see \#9) \\
  
  4 & $\acFState{\tilde{e}}{\tilde{r}}$ & recv. (valid) $\mathit{fn}_{j, k}$ & move to $\acRState{\tilde{e}+1}{\tilde{r}}$ and re-handle $\mathit{fn}_{j, k}$ (see \#9) \\
  
  \rowcolor{DarkGray} 
  5 & $\acRState{\tilde{e}}{\tilde{r}}$ & recv. $m_{j,\acState{\tilde{e}}{\tilde{r}+1}}$ & postpone sending and delivery for $\acState{\tilde{e}}{\tilde{r}+1}$ \\

  6 & $\acRState{\tilde{e}}{\tilde{r}}$ & recv. $m_{j,\acRState{\tilde{e}+1}{\tilde{r}+1}}$ & 
    \begin{tabular}{@{}l@{}} 
    (1) send $m_{j,\acRState{\tilde{e}+1}{\tilde{r}+1}}$ further (via $G_R$) \\ 
    (2) postpone delivery for $\acRState{\tilde{e}+1}{\tilde{r}+1}$
  \end{tabular} \\ 

  \rowcolor{DarkGray} 
  7 & $\acRState{\tilde{e}}{\tilde{r}}$ & recv. $m_{j,\acRState{\tilde{e}}{\tilde{r}+1}}$ & 
    \begin{tabular}{@{}l@{}} 
    (1) A-deliver $\acState{\tilde{e}-1}{\tilde{r}}$ \\ 
    (2) move to $\acRState{\tilde{e}}{\tilde{r}+1}$ and re-handle $m_{j,\acRState{\tilde{e}}{\tilde{r}+1}}$ (see \#8)
  \end{tabular} \\ 
  
  8 & $\acRState{\tilde{e}}{\tilde{r}}$ & recv. $m_{j,\acRState{\tilde{e}}{\tilde{r}}}$ &  
  \begin{tabular}{@{}l@{}} 
    (1) send $m_{j,\acRState{\tilde{e}}{\tilde{r}}}$ further (via $G_R$) \\ 
    (2) A-broadcast $m_{i,\acRState{\tilde{e}}{\tilde{r}}}$ (if not done already) \\ 
    (3) try to complete $\acRState{\tilde{e}}{\tilde{r}}$ 
  \end{tabular} \\ 
  
  
  \rowcolor{DarkGray}
  9 & $\acRState{\tilde{e}}{\tilde{r}}$ & recv. (valid) $\mathit{fn}_{j, k}$  & 
    \begin{tabular}{@{}l@{}} 
    (1) send $\mathit{fn}_{j, k}$ further (via $G_R$) \\ 
    (2) update tracking digraphs \\ 
    (3) try to complete $\acRState{\tilde{e}}{\tilde{r}}$ 
  \end{tabular} \\
  
  \cmidrule[1.5pt](){1-4} 
\end{tabular}
  \caption{The actions performed by a server $p_i$ when different events occur while in epoch $\tilde{e}$ and round $\tilde{r}$. 
  $m_{j,\acState{e}{r}}$ denotes a message sent by $p_j$ while in $\acState{e}{r}$, 
  while  $\mathit{fn}_{j, k}$ denotes a notification sent by $p_k$ indicating $p_j$'s failure.
  $p_i$ drops all other messages as well as invalid failure notifications received.}
\label{tab:event_action}
\end{table*}
\fi

Table~\ref{tab:event_action} 
\ifdefined\PAPER
(see Appendix~\ref{app:design_details})
\fi
summarizes the actions performed by a server $p_i$ when different events occur. 
We assume $p_i$ is in epoch $\tilde{e}$ and round $\tilde{r}$. We distinguish between the three states described in Section~\ref{sec:state_machine}: 
$\acFState{\tilde{e}}{\tilde{r}}$; $\acState{\tilde{e}}{\tilde{r}}$; and $\acRState{\tilde{e}}{\tilde{r}}$. We consider also the three events that can occur: 
(1) receiving $m_{j,\acState{e}{r}}$, an unreliable message sent by $p_j$ while in $\acState{e}{r}$; 
(2) receiving $m_{j,\acRState{e}{r}}$, a reliable message sent by $p_j$ while in $\acRState{e}{r}$; 
and (3) receiving $\mathit{fn}_{j, k}$, a notification sent by $p_k$ indicating $p_j$'s failure. 
Note that messages are uniquely identified by the tuple (\emph{source id}, \emph{epoch number}, \emph{round number}, \emph{round type}), 
while failure notifications by the tuple (\emph{target id}, \emph{owner id}). 

\subsubsection{Handling unreliable messages}
If $p_i$ receives in any state an unreliable message $m_{j,\acState{e}{r}}$, 
then it cannot have been sent from a subsequent epoch, i.e., $e \leq \tilde{e}$ (cf. Theorem~\ref{prop:unrel_msg}).
Moreover, unreliable messages from either previous epochs or previous rounds can be dropped. 
Thus, $p_i$ must handle unreliable messages sent only from the current epoch, i.e., $e=\tilde{e}$. 
If $m_{j,\acState{\tilde{e}}{r}}$ was sent from a subsequent round, then $r=\tilde{r}+1$ (cf. Theorem~\ref{prop:unrel_msg});
in this case, $p_i$ postpones both the sending and the delivery of $m_j$ for $\acState{\tilde{e}}{\tilde{r}+1}$ (see \#1 and \#5 in Table~\ref{tab:event_action}).
Otherwise, $m_{j,\acState{\tilde{e}}{r}}$ was sent from the current round, i.e., $r=\tilde{r}$, and thus, $p_i$ can only be in an unreliable round (cf. Theorem~\ref{prop:unique_states}).
Handling $m_{j,\acState{\tilde{e}}{\tilde{r}}}$ while in $\acState{\tilde{e}}{\tilde{r}}$ consists of three operations (see \#2 in Table~\ref{tab:event_action}): 
(1) send $m_j$ further via $G_U$; 
(2) A-broadcast own message (if not done already);
and (3) try to complete the unreliable round $\tilde{r}$.
The necessary and sufficient condition for $p_i$ to complete an unreliable round is to receive a message (sent in that round) from every server.
Once $p_i$ completes $\acState{\tilde{e}}{\tilde{r}}$ (and not $\acFState{\tilde{e}}{\tilde{r}}$), it A-delivers $\acState{\tilde{e}}{\tilde{r}-1}$.
The completion of either $\acState{\tilde{e}}{\tilde{r}}$ or $\acFState{\tilde{e}}{\tilde{r}}$ is followed by a $\tuu{}$ transition to $\acState{\tilde{e}}{\tilde{r}+1}$, 
which entails handling any postponed unreliable messages (see \#1 in Table~\ref{tab:event_action}).

\subsubsection{Handling reliable messages}
If $p_i$ receives in any state a reliable message $m_{j,\acRState{e}{r}}$ sent from a subsequent epoch, 
then it is also from a subsequent round, i.e., $e > \tilde{e} \Rightarrow r > \tilde{r}$ (cf. Theorem~\ref{prop:rel_msg}).
Thus, unreliable messages from either previous epochs or previous rounds can be dropped 
(i.e., clearly, messages from preceding epochs are outdated and, since $e > \tilde{e} \Rightarrow r > \tilde{r}$, 
messages from preceding rounds are outdated as well).
As a result, $p_i$ must handle reliable messages sent from either the current or the subsequent epoch, i.e., $e\geq \tilde{e}$; 
in both cases, $p_i$ can only be in an reliable round (cf. Theorems~\ref{prop:rel_msg} and~\ref{prop:rel_msg_same_epoch}).
If $m_{j,\acRState{e}{r}}$ was sent from a subsequent epoch, then $e=\tilde{e}+1$ and $r=\tilde{r}+1$ (cf. Theorem~\ref{prop:rel_msg});
in this case, $p_i$ postpones the delivery of $m_j$ for $\acRState{\tilde{e}+1}{\tilde{r}+1}$; yet, it sends $m_j$ further 
via $G_R$\footnote{\ACplus{} does not postpone the sending of failure notifications and, to not break the message tracking mechanism that relies on message ordering, 
it does not postpone sending reliable messages either (see Appendix~\ref{app:prem_msg}).
}
(see \#6 in Table~\ref{tab:event_action}).
Otherwise, $m_{j,\acRState{\tilde{e}}{r}}$ was sent from either the subsequent or the current round (cf. Theorem~\ref{prop:rel_msg_same_epoch}).
Receiving $m_{j,\acRState{\tilde{e}}{\tilde{r}+1}}$ while in $\acRState{\tilde{e}}{\tilde{r}}$ triggers a $\tsk{}$ transition (see Figure~\ref{fig:skip_trans}).
$\tsk{}$ consists of two operations (see \#7 in Table~\ref{tab:event_action}): (1) A-deliver the last completed state (i.e., $\acState{\tilde{e}-1}{\tilde{r}}$);
and (2) move to $\acRState{\tilde{e}}{\tilde{r}+1}$ and re-handle $m_{j,\acRState{\tilde{e}}{\tilde{r}+1}}$. 
Finally, receiving $m_{j,\acRState{\tilde{e}}{\tilde{r}}}$ while in $\acRState{\tilde{e}}{\tilde{r}}$ consists of three operations (see \#8 in Table~\ref{tab:event_action}):
(1) send $m_j$ further via $G_R$; 
(2) A-broadcast own message (if not done already);
and (3) try to complete the reliable round $\tilde{r}$.

To complete a reliable round, \ACplus{} uses early termination---the necessary and sufficient condition 
for $p_i$ to complete $\acRState{\tilde{e}}{\tilde{r}}$ is to stop tracking all messages~\cite{poke2017allconcur}. 
Thus, once $p_i$ completes $\acRState{\tilde{e}}{\tilde{r}}$, it can safely A-delivered it.
Moreover, servers, for which no message was A-delivered, are removed.
This entails both updating $G_U$ to ensure connectivity (i.e., in every reliable round, the servers agree on the $G_U$ for that epoch)
and removing the invalid failure notifications~(\cref{sec:round_based}). 
Depending on whether all failure notifications are removed, we distinguish between a no-fail transition $\trf{}$ and a fail transition $\trr{}$.
A $\trf{}$ transition to $\acFState{\tilde{e}}{\tilde{r}+1}$ entails handling any postponed unreliable messages (see \#2 in Table~\ref{tab:event_action}).
A $\trr{}$ transition to $\acRState{\tilde{e}+1}{\tilde{r}+1}$ entails delivering any postponed reliable messages (see \#8 in Table~\ref{tab:event_action}).

\subsubsection{Handling failure notifications}
A failure notification $\mathit{fn}_{j, k}$ is valid only if both the owner $p_k$ and the target $p_j$ are not removed~(\cref{sec:round_based}).
Receiving a valid notification while in an unreliable round, triggers a rollback to the latest A-delivered round and the reliable rerun of the subsequent round. 
Thus, if $p_i$ is in $\acState{\tilde{e}}{\tilde{r}}$, then it moves to $\acRState{\tilde{e}+1}{\tilde{r}-1}$ (see \#3 in Table~\ref{tab:event_action}), 
while from $\acFState{\tilde{e}}{\tilde{r}}$ it moves to $\acRState{\tilde{e}+1}{\tilde{r}}$ (see \#4 in Table~\ref{tab:event_action}). 
In both cases, $p_i$ re-handles $\mathit{fn}_{j, k}$ in the new reliable round.
Handling a valid notification while in a reliable round consists of three operations (see \#9 in Table~\ref{tab:event_action}):
(1) send the notification further via $G_R$; 
(2) update the tracking digraphs;
and (3) try to complete the reliable round $\tilde{r}$.

Updating the tracking digraphs after receiving a valid notification $\mathit{fn}_{j, k}$ follows the procedure described in \AC{}~\cite{poke2017allconcur}. 
For any tracking digraph $\mathbf{g}[p_\ast]$ that contains $p_j$, we identify two cases. First, if $p_j$ has no successors in $\mathbf{g}[p_\ast]$, 
then $\mathbf{g}[p_\ast]$ is recursively expanded by (uniquely) adding the successors of any vertex $p_p$ (from $\mathbf{g}[p_\ast]$) 
that is the target of a received failure notification, except for those successors that are the owner of a failure notification targeting $p_p$.
Second, if one of $p_j$ successors in $\mathbf{g}[p_\ast]$ is $p_k$, then the edge $(p_j,p_k)$ is removed from $\mathbf{g}[p_\ast]$. 
In addition, $\mathbf{g}[p_\ast]$ is pruned by first removing the servers with no path from $p_\ast$ to themselves and then, 
if all remaining servers are targeted by received failure notifications, by removing all of them. 
When starting a reliable round, the tracking digraphs are reset and the valid failure notifications are redelivered.
Thus, this procedure needs to be repeated for all valid failure notifications. 

\subsubsection{Initial bootstrap and dynamic membership}
Bootstrapping \ACplus{} requires a reliable centralized service, such as Zookeeper~\cite{Hunt:2010:ZWC:1855840.1855851},
that enables the servers to agree on the initial configuration (i.e., the identity of the participating servers and the two digraphs).
Once \ACplus{} starts, it is completely decentralized---any further reconfigurations (including also membership changes) 
are agreed upon via atomic broadcast.


\subsection{The impact of eventual accuracy}
\label{sec:eventual_accuracy}

Using $\Diamond\mathcal{P}$ entails that failure notifications may not result in their targets being eventually removed.
Let $\mathit{fn}_{j, k}$ be an inaccurate notification, i.e., its target $p_j$ is non-faulty. 
Despite $p_k$ disseminating $\mathit{fn}_{j, k}$ throughout the system, 
\ifdefined\TECHREP
$p_j$'s messages are received by all non-faulty servers. 
Thus, $p_j$ is never removed (since a server is removed if and only if, at the end of a round, no message is A-delivered for that server).
\fi
\ifdefined\PAPER
$p_j$'s messages are received by all non-faulty servers, which means $p_j$ is never removed.
\fi
Not removing $p_j$ implies that, as long as $p_k$ is also non-faulty, the notification $\mathit{fn}_{j, k}$ remains valid, 
which results in \ACplus{} running in the reliable mode, even though no failures have occurred. 

In order to enable the redundancy-free agreement of the unreliable mode, \ACplus{} must eventually invalidate inaccurate failure notifications. 
To this end, every failure notification is tagged with a unique sequence 
\ifdefined\TECHREP
number\footnote{For instance, a straightforward method to ensure 
the uniqueness of the sequence number, is for each server to keep a separate counter for each one of the predecessors it monitors.}~$s$, i.e., $\mathit{fn}_{j, k, s}$.
\fi
\ifdefined\PAPER
number~$s$, 
\fi
Then, once $p_k$ realizes that $\mathit{fn}_{j, k, s}$ is inaccurate (i.e., it no longer suspects $p_j$ to have failed), 
it includes a $\mathit{revoke}_{j,k,s}$ control message in the next message it A-broadcasts. 
When a server A-delivers a $\mathit{revoke}_{j,k,s}$ message, it considers $\mathit{fn}_{j, k, s}$ to be invalid and thus, removes it from its set of received failure notifications.


\ifdefined\PAPER
\subsection{Informal proof of correctness}
\label{sec:corectness_proof}

To prove \ACplus{}'s correctness, we show that the four properties of non-uniform atomic broadcast are guaranteed. 
Due to space limitations, we defer the proof of correctness to Appendix~\ref{app:corectness_proof}.
Moreover, in Appendix~\ref{app:uniformity}, we discuss the modifications required for these properties to apply also to faulty servers (i.e., uniform atomic broadcast).
\fi

\ifdefined\TECHREP
\subsection{Informal proof of correctness}
\ifdefined\TECHREP
\label{sec:corectness_proof}
\fi
\ifdefined\PAPER
\label{app:corectness_proof}
\fi

As described in Section~\ref{sec:algo_description}, \ACplus{} solves non-uniform atomic broadcast.
Thus, to prove \ACplus{}'s correctness, we show that the four properties of non-uniform atomic broadcast 
are guaranteed (see Theorems~\ref{th:integrity},~\ref{th:validity},~\ref{th:agreement}, and~\ref{th:total_order}). 
Throughout the proof, we assume both $\mathcal{P}$ and no more than $f$ failures. 
In Appendix~\ref{app:uniformity}, we discuss the modifications required for these properties to apply also to faulty servers (i.e., uniform atomic broadcast).

%

\begin{theorem}[Integrity]
\label{th:integrity}
For any message $m$, every non-faulty server A-delivers $m$ at most once, and only if 
$m$ was previously A-broadcast by \sender{m}.
\end{theorem}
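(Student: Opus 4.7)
The plan is to split the theorem into its two conjuncts and dispatch each using structural invariants of the state machine from \cref{sec:state_machine,sec:adeliv_msg}. By the design summarised in \cref{sec:algo_description}, every message carries a unique identifier (source id, epoch, round, round type), and every A-delivery is performed on a whole state; hence the theorem reduces to showing (i) that each state is A-delivered at most once by a non-faulty server $p_i$, and (ii) that every A-delivered message was previously emitted by its declared source.

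For (i), I would first argue that $p_i$ never revisits a state: across the six transitions in \cref{sec:state_machine}, the epoch $\tilde e$ is non-decreasing, and on every epoch-preserving transition ($\tuu$, $\trf$, $\tsk$) the round $\tilde r$ strictly increases; by \Cref{prop:unique_states} the round type at $(\tilde e,\tilde r)$ is then uniquely determined. Consequently, each of the three A-delivery triggers identified in \cref{sec:adeliv_msg}---namely, completion of an unreliable state $\acState{\tilde e}{\tilde r+1}$ delivering $\acState{\tilde e}{\tilde r}$, a skip out of $\acRState{\tilde e}{\tilde r}$ delivering $\acState{\tilde e-1}{\tilde r}$, and completion of a reliable state $\acRState{\tilde e}{\tilde r}$ delivering itself---fires at most once for any given $(\tilde e,\tilde r)$.

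It remains to rule out that the same $\acState{e}{r}$ is delivered by both the completion and the skip route. If the completion route fires, then via a $\tuu$ transition $p_i$ reaches $\acState{e}{r+2}$, and any subsequent $\tur$ lands in $\acRState{e+1}{r+1}$, from which a later skip would deliver $\acState{e}{r+1}$, not $\acState{e}{r}$; conversely, for a skip to deliver $\acState{e}{r}$, $p_i$ must be in $\acRState{e+1}{r}$, which by either the $\tur$ or the $\tfr$ rule requires interrupting $\acState{e}{r+1}$ (or $\acFState{e}{r}$) before its completion, ruling out the completion route. Reliable states, meanwhile, are delivered only via their own completion, and once completed the ensuing $\trf$ or $\trr$ transition advances $p_i$ strictly forward. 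Together these observations yield (i).

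For (ii), I would invoke the FIFO reliable-channel assumptions of \cref{sec:sys_model}: the no-creation property ensures that every received message originates from a prior send. Inspecting the sending actions of \cref{tab:event_action}, the only rule under which a server $p_j$ emits a message bearing source id $p_j$ with a specific epoch/round/type label is when $p_j$ itself executes an A-broadcast in the corresponding state $\acState{e}{r}$ or $\acRState{e}{r}$; every other send is a relay of an already-issued message. A straightforward induction on the length of the forwarding chain of $m$ along $G_U$ or $G_R$ then yields that any A-delivered $m$ was first A-broadcast by \sender{m}. The main obstacle I anticipate is the disjointness argument of the preceding paragraph, which requires combining the rollback-target arithmetic of $\tur$ with the fact that rollback presupposes a not-yet-completed unreliable predecessor; the rest reduces to monotonicity of $(\tilde e,\tilde r)$ and the standard no-creation chain argument.
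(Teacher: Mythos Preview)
Your argument is correct but far more elaborate than the paper's, which dispatches integrity in two sentences ``by construction'': (1) rollback only reruns rounds that have not yet been A-delivered, so every \emph{round} is A-delivered at most once; and (2) the messages in the delivered set $M$ were previously A-broadcast. You instead work at the \emph{state} level, proving via $(\tilde e,\tilde r)$-monotonicity and \Cref{prop:unique_states} that no state is revisited, and then case-splitting on the three delivery triggers (including your completion/skip disjointness argument) to conclude each state is delivered at most once. Your part (ii), chasing the forwarding chain back to an originating A-broadcast via the no-creation channel property, is likewise a spelled-out version of the paper's one-liner.

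One point worth noting: your reduction ``the theorem reduces to showing that each state is A-delivered at most once'' is slightly weaker than what is ultimately needed. Because the paper requires the \emph{same} messages to be re-A-broadcast when a round is rerun (\cref{sec:round_based}), an application message $m$ for round $r$ can appear in several distinct states across epochs, so state-level uniqueness alone does not immediately give message-level uniqueness. What actually closes the argument is the round-level invariant the paper states directly---after any delivery of round $r$, the server's current round never again drops to $r$---which your case analysis in fact establishes (each post-delivery transition lands at round $\ge r+1$, and the only round-decreasing transition $\tur$ cannot take you back below). So your proof goes through, but the detour via states makes it longer than it needs to be; the paper's ``rollback only reruns not-yet-A-delivered rounds'' is the single invariant that does all the work.
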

\begin{IEEEproof}
Integrity is guaranteed by construction; the reason is twofold.  
First, when rolling back, servers rerun rounds not yet A-delivered; thus, every round can be A-delivered at most once. 
Second, when a round is A-delivered, all the messages in the set $M$ (which are A-delivered in a deterministic order) 
were previously A-broadcast.
\end{IEEEproof}

To prove validity, we introduce the following lemma that proves \ACplus{} makes progress:
\begin{lemma} 
\label{lemma:progress}
Let $p_i$ be a non-faulty server that starts epoch $e$ and round $r$. 
Then, eventually, $p_i$ makes progress and moves to another state. 
\end{lemma}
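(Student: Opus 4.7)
My plan is to do case analysis on the type of $p_i$'s current state, relying throughout on the standing assumptions that the FD is $\mathcal{P}$, that at most $f$ failures ever occur, and that $\kappa(G_R) > f$, so $G_R$ stays connected and R-broadcast reliably delivers every failure notification to every non-faulty server.

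If $p_i$ is in $\acState{e}{r}$ or $\acFState{e}{r}$, I would split on whether some failure notification valid at $p_i$ is eventually R-broadcast. In the first subcase, completeness of $\mathcal{P}$ together with reliable R-broadcast over $G_R$ guarantees that $p_i$ eventually receives such a notification; rows~3 and~4 of Table~\ref{tab:event_action} then force $\tur{}$ or $\tfr{}$, respectively. In the second subcase, accuracy of $\mathcal{P}$ implies that no currently non-removed server fails during round $r$, so every such server eventually A-broadcasts its round-$r$ message; since $G_U$ is connected on the non-failing servers and channels are FIFO reliable, $p_i$ eventually receives all such messages and completes the round via $\tuu{}$.

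If $p_i$ is in $\acRState{e}{r}$, I would invoke the termination argument behind \AC{}'s early-termination mechanism outlined in Section~\ref{sec:allconcur_overview}: since $\kappa(G_R)>f$ and the notifications for the at most $f$ failures eventually reach $p_i$, every tracking digraph at $p_i$ either corresponds to a message that actually arrives or is eventually pruned to contain only servers known to have failed. Hence $p_i$ finishes tracking every message, completes $\acRState{e}{r}$, and performs $\trf{}$ or $\trr{}$ depending on whether any valid notification survives. A $\tsk{}$ transition via row~7 of Table~\ref{tab:event_action} is an additional way to leave the state, but is not required for this lemma.

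The delicate point will be the no-notification subcase of the unreliable state: I must rule out $p_i$ waiting forever for a message from a non-faulty server $p_j$ that has simply not yet reached round $r$. The intended fix is an inductive argument on the history: because $p_i$ entered a state with round $r$, its predecessor state completed, and Proposition~\ref{prop:one_end_all_start} forces every non-faulty server to have started that predecessor state; cascading this fact together with accuracy of $\mathcal{P}$ throughout the current round shows that every relevant $p_j$ indeed A-broadcasts its round-$r$ message, which then traverses the connected $G_U$ to reach $p_i$.
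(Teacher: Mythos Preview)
Your proposal is correct and follows the same case split (unreliable vs.\ reliable round) as the paper's proof, with the same transitions in each branch; you simply spell out the dichotomy and the role of Proposition~\ref{prop:one_end_all_start} more carefully than the paper, whose proof is essentially a two-line assertion. One minor slip: in the no-notification subcase you invoke ``accuracy of $\mathcal{P}$'' where you actually need completeness (the contrapositive of ``every failure is eventually detected'' is what gives ``no notification $\Rightarrow$ no failure'').
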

\begin{IEEEproof}
We identify two cases: $\acRState{e}{r}$ and $\acState{e}{r}$.
If $p_i$ starts $\acRState{e}{r}$, it eventually either completes it (due to early termination) 
and moves to the subsequent round (after either $\trf{}$ or $\trr{}$), or it skips it and moves 
to $\acRState{e}{r+1}$ (after $\tsk{}$).
If $p_i$ starts $\acState{e}{r}$, either it eventually completes it, after receiving messages from all servers 
(e.g., if no failures occur), and moves to $\acState{e}{r+1}$ (after $\tuu{}$) 
or it eventually receives a failure notifications and rolls back to the latest A-delivered round (after either $\tur$ or~$\tfr$).
\end{IEEEproof}

\begin{theorem}[Validity]
\label{th:validity}
If a non-faulty server A-broadcasts $m$, then it eventually A-delivers $m$.
\end{theorem}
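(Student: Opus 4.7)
The plan is to chain together Lemma~\ref{lemma:progress} with a bounded-failure argument. The non-faulty server $p_i$ that invokes \tobroadcast{m} does so while in some state, say $\acState{e}{r}$ or $\acRState{e}{r}$; by the algorithm's description in \cref{sec:algo_description}, $p_i$ sends $m$ to its successors (in $G_U$ or $G_R$ respectively), and $m$ is associated with the set of messages A-broadcast in round $r$ of epoch $e$. The requirement stated in \cref{sec:round_based}---that the same messages are re-A-broadcast when a round is rerun after rollback---ensures that $m$ remains ``attached'' to the lowest not-yet-A-delivered round at $p_i$ across any fail transitions. So it suffices to show that some round containing $m$ is eventually A-delivered by~$p_i$.

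First I would iterate Lemma~\ref{lemma:progress} to conclude that $p_i$ passes through an unbounded sequence of states. Within this sequence, I would partition the behavior by epochs. Every fail transition ($\tur{}$, $\tfr{}$, or $\trr{}$) strictly increments the epoch number, and since at most $f$ failures occur during the whole deployment and each failure produces at most one notification per non-faulty detector, only finitely many failure notifications are ever generated. Hence only finitely many fail transitions can occur at $p_i$; consequently there is some largest epoch $e^\star \geq e$ that $p_i$ enters and never leaves via a fail transition.

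Inside epoch $e^\star$, I would argue that $p_i$ eventually A-delivers the round in which it rebroadcast $m$. Two subcases cover it. If $p_i$ enters $e^\star$ through $\trf{}$ (i.e., the reliable round of $e^\star$ completed without leftover valid notifications), then $p_i$ is in a sequence of unreliable rounds with no further failures; by Lemma~\ref{lemma:progress} these rounds complete one after another, and by the A-delivery rule of \cref{sec:adeliv_msg} the completion of round $r'+1$ triggers A-delivery of round $r'$, so the round carrying $m$ is eventually A-delivered. If instead $p_i$ stays in the reliable round of $e^\star$ (no $\trf{}$ occurs), then by Lemma~\ref{lemma:progress} $p_i$ completes that reliable round via early termination, which A-delivers it directly, and the message $m$ (rebroadcast in it) is A-delivered with it. Once the round containing $m$ is A-delivered, integrity (Theorem~\ref{th:integrity}) guarantees that $m$ itself is delivered exactly once by the deterministic ordering step.

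The main obstacle will be the case analysis for the ``final epoch'' argument: I need to rule out scenarios where $p_i$ keeps rolling back indefinitely between unreliable and reliable modes. This hinges on two facts already established---that $\kappa(G_R) > f$ under the assumption of at most $f$ failures (so $G_R$ remains connected and reliable rounds can complete), and that failure notifications are finite in number so fail transitions cannot occur infinitely often. The remaining care needed is to confirm that the skip transition $\tsk{}$, while not producing new A-deliveries of the round it leaves, does not disrupt progress toward A-delivering the round containing $m$, because the paper's rollback semantics ensure $m$ is carried forward to the next non-rolled-back reliable round where it will be A-delivered.
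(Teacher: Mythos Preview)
Your approach is correct but takes a genuinely different route from the paper. The paper proceeds by a local case analysis on the state in which $m$ is first A-broadcast: for each possible outgoing transition (and at most one further level of transitions), it shows that round $r$ is A-delivered---either by completing two consecutive unreliable rounds, or by landing in a reliable round $\acRState{\ast}{r}$, which early termination (or a skip) then A-delivers. This case analysis has bounded depth and does not invoke the global finiteness of failures except to ensure $G_R$ remains connected so that early termination works. Your argument is instead a global termination argument: bound the number of epoch increments by the total number of failures, identify a terminal epoch $e^\star$, and show stabilization there. The paper's route is shorter and sidesteps the need to argue that finitely many notifications yield finitely many $\trr{}$ transitions---a step you elide, since a single notification can remain valid across a reliable-round completion (though only until its target is removed, so the chain is still finite). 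Your route is more uniform and would adapt more easily if the transition system were enriched.

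Two places to tighten. First, $m$ is bound to a fixed round number $r$, not to ``the lowest not-yet-A-delivered round''; these can differ by one when the preceding unreliable round has been completed but not yet A-delivered. Second, your subcase split inside $e^\star$ is slightly off: since $e^\star$ is terminal, the reliable round of $e^\star$ must complete with $F_i = \emptyset$ (any leftover notification would force $\trr{}$ into epoch $e^\star{+}1$), so $\trf{}$ always follows and only your first subcase actually arises; the reliable-round A-delivery you describe in the second subcase is simply the first step of that same path.
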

\begin{IEEEproof}
Let $p_i$ be a non-faulty server that A-broadcast $m$ (for the first time) in round $r$ and epoch $e$. 
If $\acRState{e}{r}$, then due to early termination, $p_i$ eventually A-delivers $r$.
If $p_i$ A-broadcast $m$ in $\acState{e}{r}$, then $\acState{e}{r}$ is either completed or interrupted by a failure notification (cf.~Lemma~\ref{lemma:progress}).
If $p_i$ completes $\acState{e}{r}$, then  $p_i$ either completes $\acState{e}{r+1}$ or, due to a fail transition, 
reruns $r$ reliably in $\acRState{e+1}{r}$ (cf.~Lemma~\ref{lemma:progress}). In both cases, $p_i$ eventually A-delivers $r$~(\cref{sec:adeliv_msg}).

If $p_i$ does not complete $\acState{e}{r}$ (due to a failure notification), we identify two cases 
depending on whether $\acState{e}{r}$ is the first in a sequence of unreliable rounds.
If $\acFState{e}{r}$, then $p_i$ reruns $r$ reliably in $\acRState{e+1}{r}$ (after a $\tfr{}$ transition); 
hence, $p_i$ eventually A-delivers~$r$.
Otherwise, $p_i$ moves to $\acRState{e+1}{r-1}$, which will eventually be followed by one of $\trf{}$, $\trr{}$, or $\tsk{}$.
$\trf{}$ leads to $\acFState{e+1}{r}$ and, by following one of the above cases, eventually to the A-delivery of~$r$.
Both $\trr{}$ and $\tsk{}$ lead to a reliable rerun of $r$ and thus, to its eventual A-delivery. 
\end{IEEEproof}

To prove both agreement and total order, we first prove \emph{set agreement}---all non-faulty servers agree on 
the same set of messages for all A-delivered rounds. To prove set agreement, we introduce the following lemmas: 
\begin{lemma} 
\label{lemma:adeliv_unrel}
Let $p_i$ be a non-faulty server that A-delivers $\acState{e}{r}$ after completing $\acState{e}{r+1}$.
Then, any other non-faulty server $p_j$ eventually A-delivers $\acState{e}{r}$.
\end{lemma}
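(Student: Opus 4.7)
The plan is to combine Proposition~\ref{prop:one_end_all_start} with a case analysis on $p_j$'s possible transitions out of $\acState{e}{r+1}$. Since $p_i$ completes $\acState{e}{r+1}$, Proposition~\ref{prop:one_end_all_start} immediately yields that every non-faulty server, in particular $p_j$, eventually starts $\acState{e}{r+1}$. A short preliminary observation is that, under the lemma's hypothesis, this state cannot be $\acFState{e}{r+1}$: if it were, its predecessor would be the reliable state $\acRState{e}{r}$, contradicting the fact that $\acState{e}{r}$ is an unreliable state A-delivered via completion of $\acState{e}{r+1}$ (\cref{sec:adeliv_msg}). Hence the only outgoing transitions available to $p_j$ from $\acState{e}{r+1}$ are $\tuu$ and $\tur$.

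Next, Lemma~\ref{lemma:progress} guarantees $p_j$ eventually takes one of them. If $p_j$ takes $\tuu$, it completes $\acState{e}{r+1}$ and, by the A-delivery rule in~\cref{sec:adeliv_msg}, directly A-delivers $\acState{e}{r}$, which is exactly what we want. If instead $p_j$ takes $\tur$, it moves to $\acRState{e+1}{r}$, i.e., round $r$ is rerun reliably. A second application of Lemma~\ref{lemma:progress}, together with the fact that reliable rounds are guaranteed to terminate via early termination, shows that $p_j$ will eventually either complete $\acRState{e+1}{r}$ (A-delivering it on completion) or leave it through a $\tsk$ transition to $\acRState{e+1}{r+1}$; entry~7 of Table~\ref{tab:event_action} makes clear that $\tsk$ itself A-delivers $\acState{e}{r}$.

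The main subtlety, and the step I expect to be the trickiest to phrase cleanly, is to identify A-delivery of $\acRState{e+1}{r}$ in the $\tur$ branch with A-delivery of $\acState{e}{r}$ in $p_i$'s history. I would address it by appealing to the rerun convention from~\cref{sec:round_based}---the same messages are A-broadcast when a round is rerun---together with the deterministic A-delivery order of the agreed-upon message set, so that the content A-delivered by $p_j$ coincides with that A-delivered by $p_i$. This equivalence is the only place where the ``same messages on rerun'' assumption is actually needed in the argument, so I would flag it explicitly rather than treat it as routine. The remaining bookkeeping follows directly from the state-machine transitions enumerated in~\cref{sec:state_machine} and from Table~\ref{tab:event_action}.
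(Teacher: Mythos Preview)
Your overall strategy---use Proposition~\ref{prop:one_end_all_start} to place $p_j$ in $\acState{e}{r+1}$, then split on $\tuu{}$ versus $\tur{}$---matches the paper, and the $\tuu{}$ branch is fine. The gap is in your handling of the $\tur{}$ branch.

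You allow two exits for $p_j$ from $\acRState{e+1}{r}$: completion or $\tsk{}$. For the completion sub-case you propose to ``identify'' A-delivery of $\acRState{e+1}{r}$ with A-delivery of $\acState{e}{r}$ via the same-messages-on-rerun convention. This does not give what the lemma asks. The statement requires A-delivery of the specific state $\acState{e}{r}$ (epoch $e$), and the immediate consumer, Lemma~\ref{lemma:same_epoch_deliv}, relies precisely on the epoch matching; A-delivering $\acRState{e+1}{r}$ is an epoch-$(e{+}1)$ event and would break that argument. Moreover, the rerun convention from \cref{sec:round_based} only guarantees that each server re-broadcasts its own message; it says nothing about the agreed-upon \emph{set}. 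In $\acState{e}{r}$ every server's message is present (completion of an unreliable round requires a message from everyone), whereas in $\acRState{e+1}{r}$ a failed server's message may be absent, so even content equality fails.

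The correct move---and this is what the paper does---is to argue that the completion sub-case is impossible and that $\tsk{}$ is forced. Since $p_i$ already completed $\acState{e}{r+1}$, the failure notification that sent $p_j$ to $\acRState{e+1}{r}$ finds $p_i$ in $\acState{e}{r+2}$ and triggers $\tur{}$ to $\acRState{e+1}{r+1}$; in particular, the non-faulty $p_i$ never enters $\acRState{e+1}{r}$, so by Proposition~\ref{prop:one_end_all_start} $p_j$ cannot complete it. Instead, $p_i$'s message from $\acRState{e+1}{r+1}$ eventually reaches $p_j$ and triggers $\tsk{}$, which A-delivers $\acState{e}{r}$ directly. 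The missing idea is thus to exploit $p_i$'s own trajectory to guarantee the skip, rather than trying to reconcile two distinct A-delivery events after the fact.
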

\begin{IEEEproof}
If $p_i$ A-delivers $\acState{e}{r}$ after completing $\acState{e}{r+1}$, 
$p_j$ must have started $\acState{e}{r+1}$ (cf.~Proposition~\ref{prop:one_end_all_start}),
and hence, completed $\acState{e}{r}$.
As a result, $p_j$ either receives no failure notifications, which means it eventually completes $\acState{e}{r+1}$ and A-delivers $\acState{e}{r}$, 
or receives a failure notification and moves to $\acRState{e+1}{r}$ (after $\tur{}$).
Yet, this failure notification will eventually trigger on $p_i$ a $\tur{}$ transition from $\acState{e}{r+2}$ to $\acRState{e+1}{r+1}$, 
which eventually will trigger on $p_j$ a $\tsk{}$ transition that leads to the A-delivery of $\acState{e}{r}$ (see Figure~\ref{fig:skip_trans}). 
\end{IEEEproof}

\begin{lemma} 
\label{lemma:same_epoch_deliv}
If a non-faulty server A-delivers round $r$ in epoch $e$ (i.e., either $\acRState{e}{r}$ or $\acState{e}{r}$), 
then, any non-faulty server eventually A-delivers round $r$ in epoch~$e$.
\end{lemma}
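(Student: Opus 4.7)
The plan is to split the argument by the mode of the state $p_i$ A-delivered, mirroring the two A-delivery mechanisms described in Section~\ref{sec:adeliv_msg}.

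First, suppose $p_i$ A-delivers $\acState{e}{r}$. Then either $p_i$ completed $\acState{e}{r+1}$, in which case Lemma~\ref{lemma:adeliv_unrel} immediately yields the conclusion, or $p_i$ A-delivered $\acState{e}{r}$ via a skip transition while in $\acRState{e+1}{r}$, upon receiving some reliable message $m_k^{(e+1,r+1)}$ from a server $p_k$ in $\acRState{e+1}{r+1}$. In the latter case, I would argue that the only way for $p_k$ to reach $\acRState{e+1}{r+1}$ is via a $\tur{}$ transition from $\acState{e}{r+2}$ (skip transitions preserve the epoch, and no other transition increases the epoch and lands on round $r+1$), which forces $p_k$ to have completed $\acState{e}{r+1}$ and therefore to have A-delivered $\acState{e}{r}$ by completion. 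Applying Lemma~\ref{lemma:adeliv_unrel} with $p_k$ in place of $p_i$ then finishes this branch.

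Second, suppose $p_i$ A-delivers $\acRState{e}{r}$, meaning $p_i$ completed it through early termination. By the set agreement property inherited from \AC{}, any non-faulty server that reaches and completes $\acRState{e}{r}$ A-delivers it with the same set of messages. Combined with Lemma~\ref{lemma:progress} (which rules out stalling) and Proposition~\ref{prop:unique_states} (which ensures that whenever a non-faulty $p_j$ is simultaneously in epoch $e$ and round $r$, it is necessarily in $\acRState{e}{r}$), the remaining task is to rule out the possibility that some non-faulty $p_j$ bypasses $\acRState{e}{r}$ entirely.

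The main obstacle is precisely this bypass scenario: $p_j$ could, in principle, take a $\tsk{}$ from $\acRState{e}{r}$ to $\acRState{e}{r+1}$ triggered by a witness $p_k$ already in $\acRState{e}{r+1}$. Tracing the chain of witnesses back through further $\tsk{}$ transitions must eventually produce a ``first'' server $p_\ell$ that entered $\acRState{e}{r+1}$ via $\trr{}$ from $\acRState{e-1}{r}$, thereby A-delivering round $r$ in epoch $e-1$ rather than in $e$. To close this gap I would argue, using Propositions~\ref{prop:rel_msg} and~\ref{prop:rel_msg_same_epoch} together with FIFO reliable channels on $G_R$ and the completeness plus accuracy of $\mathcal{P}$, that the very failure notifications that caused $p_i$ to enter epoch $e$ before completing round $r$ must also reach $p_\ell$ before $p_\ell$ can complete $\acRState{e-1}{r}$; hence $p_\ell$ cannot have taken $\trr{}$ from $\acRState{e-1}{r}$, contradicting the assumed bypass and forcing $p_j$ to complete $\acRState{e}{r}$ and A-deliver it.
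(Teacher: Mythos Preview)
Your treatment of the unreliable case $\acState{e}{r}$ tracks the paper's, but the justification in your skip sub-case is wrong: you assert that ``no other transition increases the epoch and lands on round $r+1$'', yet both $\tfr{}$ (from $\acFState{e}{r+1}$) and $\trr{}$ (from $\acRState{e}{r}$) do exactly that. The actual reason these two are ruled out is Proposition~\ref{prop:unique_states}: since $p_i$ passed through the \emph{unreliable} state $\acState{e}{r}$, no server can have been in $\acRState{e}{r}$, and both $\tfr{}$ and $\trr{}$ presuppose $\acRState{e}{r}$. The paper avoids this case analysis entirely by pointing to Figure~\ref{fig:skip_trans} and asserting that at least one non-faulty server A-delivered $\acState{e}{r}$ by completing $\acState{e}{r+1}$, then invoking Lemma~\ref{lemma:adeliv_unrel}.

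For the reliable case $\acRState{e}{r}$ you diverge sharply from the paper. The paper's argument is two lines: (1)~since $p_i$ completes $\acRState{e}{r}$, Proposition~\ref{prop:one_end_all_start} forces every non-faulty server to \emph{start} $\acRState{e}{r}$; (2)~early termination, cited from \AC{}~\cite{poke2017allconcur,allconcur_tla}, then makes each of them complete and A-deliver it. The paper does not engage with your skip-bypass scenario at all. Your attempted resolution of that scenario is itself incomplete: you claim the ``first'' witness $p_\ell$ entering $\acRState{e}{r+1}$ without a skip must arrive via $\trr{}$ from $\acRState{e-1}{r}$, but $\tur{}$ from $\acState{e-1}{r+2}$ and $\tfr{}$ from $\acFState{e-1}{r+1}$ are equally valid non-skip entries to $\acRState{e}{r+1}$; your failure-notification timing argument is tailored to the $\trr{}$ case and remains a sketch even there. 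If you want to close the bypass possibility rigorously, all three non-skip entries must be handled.
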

\begin{IEEEproof}
If $p_i$ A-delivers $\acRState{e}{r}$ (when completing it), then every non-faulty server eventually A-delivers $\acRState{e}{r}$. 
The reason is twofold: (1) since $p_i$ completes $\acRState{e}{r}$, every non-faulty server must start $\acRState{e}{r}$ (cf.~Proposition~\ref{prop:one_end_all_start});
and (2) due to early termination, every non-faulty server eventually also completes and A-delivers $\acRState{e}{r}$~\cite{poke2017allconcur,allconcur_tla}.

Otherwise, $p_i$ A-delivers $\acState{e}{r}$ either once it completes the subsequent unreliable round 
or after a skip transition from $\acRState{e+1}{r}$ to $\acRState{e+1}{r+1}$~(\cref{sec:adeliv_msg}).
On the one hand, if $p_i$ A-delivers $\acState{e}{r}$ after completing $\acState{e}{r+1}$, 
then any other non-faulty server eventually A-delivers $\acState{e}{r}$ (cf.~Lemma~\ref{lemma:adeliv_unrel}).
On the other hand, if $p_i$ A-delivers $\acState{e}{r}$ after a skip transition, then at least one 
non-faulty server A-delivered $\acState{e}{r}$ after completing $\acState{e}{r+1}$ (see Figure~\ref{fig:skip_trans}). 
Thus, any other non-faulty server eventually A-delivers $\acState{e}{r}$ (cf.~Lemma~\ref{lemma:adeliv_unrel}).
\end{IEEEproof}

\begin{theorem}[Set agreement]
\label{th:set_agreement}
If two non-faulty servers A-deliver round $r$, then both A-deliver the same set of messages.
\end{theorem}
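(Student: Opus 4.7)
The plan is to reduce Theorem~\ref{th:set_agreement} to a case analysis on the type of state in which round~$r$ is A-delivered (unreliable versus reliable), once both non-faulty servers have been shown to A-deliver round~$r$ in the same epoch. The main tools will be Lemma~\ref{lemma:same_epoch_deliv}, Proposition~\ref{prop:unique_states}, Proposition~\ref{prop:one_end_all_start}, integrity (Theorem~\ref{th:integrity}), and \AC{}'s own set-agreement guarantee for reliable rounds.

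First I would argue that the two non-faulty servers $p_i$ and $p_j$ A-deliver round~$r$ within the same epoch. By Lemma~\ref{lemma:same_epoch_deliv}, if $p_i$ A-delivers round~$r$ in some epoch~$e$, then $p_j$ eventually A-delivers round~$r$ in epoch~$e$; by integrity, $p_j$ A-delivers round~$r$ at most once, so $e$ is the unique epoch in which $p_j$ A-delivers it. Proposition~\ref{prop:unique_states} then guarantees that both servers are in the same kind of state when A-delivering: either both in $\acRState{e}{r}$ or both in $\acState{e}{r}$.

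Next I would split the argument in two cases. In the reliable case, both servers A-deliver $\acRState{e}{r}$ upon its completion, and the early-termination mechanism inherited from \AC{} is already known to guarantee set agreement for reliable rounds; the two message sets therefore coincide. In the unreliable case, A-delivery of $\acState{e}{r}$ happens (per \cref{sec:adeliv_msg}) either after completing $\acState{e}{r+1}$ or through a skip transition that moves the server from $\acRState{e+1}{r}$ to $\acRState{e+1}{r+1}$. In the first subcase, the server has by definition of completion received, in round~$r$, a message from every (non-removed) participant. In the second subcase, reaching $\acRState{e+1}{r}$ must have come from a $\tur{}$ transition out of $\acState{e}{r+1}$, and starting $\acState{e}{r+1}$ presupposes completing $\acState{e}{r}$. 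Hence, in both subcases each server holds the complete set of messages A-broadcast in $\acState{e}{r}$; since messages are uniquely identified by the tuple (source, epoch, round, type) and the participant set is fixed within epoch~$e$ (it is only updated across reliable rounds), the two sets must be equal.

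The main obstacle, in my view, lies in the skip subcase of the unreliable case: one has to carefully unwind the chain of transitions leading a server to A-deliver $\acState{e}{r}$ via a skip, and rely on the invariant that starting the subsequent unreliable round in the same epoch is only possible by first completing round~$r$ in that same epoch. Once this invariant is in place, the equality of message sets follows immediately from the uniqueness of message identifiers together with the agreed-upon participant set inherited at the start of epoch~$e$.
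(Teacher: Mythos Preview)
Your proposal is correct and follows essentially the same approach as the paper: use Lemma~\ref{lemma:same_epoch_deliv} to pin down a common epoch, then case-split on $\acRState{e}{r}$ versus $\acState{e}{r}$, invoking \AC{}'s set agreement in the reliable case and ``both completed the round, hence both hold the full message set'' in the unreliable case. The paper's proof is terser---it simply asserts that both servers completed $\acState{e}{r}$ without separately unwinding the skip subcase---whereas you explicitly argue why a server that A-delivers $\acState{e}{r}$ via a skip must nonetheless have completed $\acState{e}{r}$; your extra invocation of Proposition~\ref{prop:unique_states} and integrity makes the epoch-uniqueness step more explicit, but does not change the structure of the argument.

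One small point: your claim that ``reaching $\acRState{e+1}{r}$ must have come from a $\tur{}$ transition out of $\acState{e}{r+1}$'' is true in this context but is not the only way to reach $\acRState{e+1}{r}$ in general (e.g., $\tfr{}$ from $\acFState{e}{r}$). The alternative entries are ruled out here because a subsequent skip requires some server to be in $\acRState{e+1}{r+1}$, which in turn forces (via Proposition~\ref{prop:one_end_all_start}) all non-faulty servers to have started $\acState{e}{r+1}$; you may want to add a sentence to that effect.
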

\begin{IEEEproof}
Let $p_i$ and $p_j$ be two non-faulty servers that A-deliver round $r$. 
Clearly, both servers A-deliver $r$ in the same epoch~$e$ (cf.~Lemma~\ref{lemma:same_epoch_deliv}). 
Thus, we distinguish between $\acRState{e}{r}$ and $\acState{e}{r}$.
If $\acRState{e}{r}$, both $p_i$ and $p_j$ A-deliver the same set of messages due to the set agreement property of early termination~\cite{poke2017allconcur,allconcur_tla}.
If $\acState{e}{r}$, both $p_i$ and $p_j$ completed $\acState{e}{r}$, i.e., both received messages from all servers; 
thus, both A-deliver the same set of messages.
\end{IEEEproof}

\begin{theorem}[Agreement]
\label{th:agreement}
If a non-faulty server A-delivers $m$, then all non-faulty servers eventually A-deliver $m$.
\end{theorem}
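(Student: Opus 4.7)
The plan is to derive Agreement as an almost immediate consequence of the machinery already in place, namely Lemma~\ref{lemma:same_epoch_deliv} (same-epoch delivery) and Theorem~\ref{th:set_agreement} (set agreement). First I would observe that, by construction, a non-faulty server $p_i$ A-delivers an individual message $m$ only as part of the A-delivery of some round $r$ in some epoch $e$, since \fAC{} A-delivers all the messages of a completed round together in a deterministic order. This associates $m$ with a unique pair $(e,r)$ on $p_i$'s timeline; Theorem~\ref{th:integrity} ensures this association is well-defined (no duplicate deliveries).

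Next, I would invoke Lemma~\ref{lemma:same_epoch_deliv} on $p_i$: since $p_i$ A-delivers round $r$ in epoch $e$, every non-faulty server $p_j$ eventually A-delivers round $r$ in the \emph{same} epoch $e$. Let $M_i$ and $M_j$ be the sets of messages that $p_i$ and $p_j$ respectively A-deliver for $(e,r)$. By Theorem~\ref{th:set_agreement}, $M_i = M_j$. Since $m \in M_i$ by assumption, $m \in M_j$, so $p_j$ eventually A-delivers $m$ as required.

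The hard part is not in this theorem itself, but in the supporting results it leans on. Lemma~\ref{lemma:same_epoch_deliv} carries the weight of reasoning about skip transitions and rollbacks (ensuring that even if $p_i$ delivers via one path, e.g.\ completing a subsequent unreliable round, $p_j$ eventually delivers via a compatible path); Theorem~\ref{th:set_agreement} inherits its nontrivial case from \AC{}'s early-termination set agreement for reliable rounds, while the unreliable case is immediate because completion requires receiving a message from every server. Given these, the agreement proof reduces to stitching together ``same round, same epoch'' and ``same set.''
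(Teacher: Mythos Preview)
Your proposal is correct and follows essentially the same approach as the paper: both arguments pin $m$ to a round $r$ and epoch $e$, invoke Lemma~\ref{lemma:same_epoch_deliv} to get every non-faulty server to A-deliver round $r$ in epoch $e$, and then apply Theorem~\ref{th:set_agreement} to conclude the delivered sets coincide. The only cosmetic difference is that the paper phrases this as a proof by contradiction whereas you give a direct argument; the underlying logic is identical.
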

\begin{IEEEproof}
We prove by contradiction. Let $p_i$ be a non-faulty server that A-delivers $m$ in round $r$ and epoch $e$.
We assume there is a non-faulty server $p_j$ that never A-delivers $m$. 
According to Lemma~\ref{lemma:same_epoch_deliv}, $p_j$ eventually A-delivers round $r$ in epoch~$e$.
Yet, this means $p_j$ A-delivers (in round $r$) the same set of messages as $p_i$ (cf.~Theorem~\ref{th:set_agreement}), 
which contradicts the initial assumption.
\end{IEEEproof}

\begin{theorem}[Total order]
\label{th:total_order}
If two non-faulty servers $p_i$ and $p_j$ A-deliver messages $m_1$ and $m_2$, 
then $p_i$ A-delivers $m_1$ before $m_2$, if and only if $p_j$ A-delivers $m_1$ before $m_2$.
\end{theorem}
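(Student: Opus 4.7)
The plan is to reduce total order to two facts already established: (i) each non-faulty server A-delivers rounds in a strictly increasing order of round numbers, and (ii) within a single A-delivered round, all non-faulty servers A-deliver the same set of messages (Theorem~\ref{th:set_agreement}) using a common deterministic rule fixed in~\cref{sec:round_based}.

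First I would argue (i): by construction a server only A-delivers round $r$ after it has A-delivered every round $r^\prime<r$. For reliable rounds this is immediate, since $\acRState{e}{r}$ can only be reached from $\acRState{e^\prime}{r-1}$ (via $\trr{}$, $\tsk{}$, or $\trf{}$ followed later by a fail-back); for unreliable rounds this follows from~\cref{sec:adeliv_msg}, where $\acState{e}{r}$ is A-delivered only after either completing $\acState{e}{r+1}$ or performing a $\tsk{}$ out of $\acRState{e+1}{r+1}$, both of which presuppose the A-delivery of~$r-1$. Hence on every non-faulty server the sequence of A-delivered rounds is $0,1,2,\dots$ without gaps or reorderings.

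Next, let $m_1$ and $m_2$ be A-delivered by $p_i$ in rounds $r_1$ and $r_2$. By Lemma~\ref{lemma:same_epoch_deliv} and Theorem~\ref{th:set_agreement}, $p_j$ A-delivers $r_1$ (resp.\ $r_2$) in the same epoch and with the same message set as $p_i$; in particular $m_1$ is delivered by $p_j$ in round $r_1$ and $m_2$ in round $r_2$. If $r_1 \neq r_2$, fact~(i) forces both servers to deliver the message of the smaller round first, so the order agrees. If $r_1 = r_2$, both servers A-deliver the same set of messages for that round, and since the intra-round ordering is deterministic (a fixed function of the common message set, cf.~\cref{sec:round_based}), they order $m_1$ and $m_2$ identically. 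The ``only if'' direction is symmetric.

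The main obstacle I anticipate is pinning down fact~(i) cleanly: one must rule out any execution path in which a server A-delivers some round $r$ without having previously A-delivered $r-1$. The delicate case is a $\tsk{}$ transition out of $\acRState{\tilde e}{\tilde r}$ triggered by a message from epoch $\tilde e$ and round $\tilde r+1$, which per~\#7 of Table~\ref{tab:event_action} A-delivers $\acState{\tilde e-1}{\tilde r}$ before moving on; verifying that at this point rounds $0,\dots,\tilde r-1$ have already been A-delivered (so no gap is created) is the step that requires the most care, and it relies on Lemma~\ref{lemma:same_epoch_deliv} together with the observation that entering $\acRState{\tilde e}{\tilde r}$ at all presupposes the A-delivery of every preceding round.
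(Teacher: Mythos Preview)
Your proposal is correct and follows essentially the same approach as the paper: reduce total order to (i) rounds being A-delivered in increasing order, (ii) both servers A-delivering each message in the same state (via Lemma~\ref{lemma:same_epoch_deliv} and Theorem~\ref{th:set_agreement}), and (iii) the deterministic intra-round ordering from~\cref{sec:round_based}. The paper simply asserts (i) ``from construction'' without the transition-by-transition analysis you sketch, so your treatment is more detailed but not structurally different.
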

\begin{IEEEproof}
From construction, in \ACplus{}, every server A-delivers rounds in order (i.e., $r$ before $r+1$). 
Also, the messages of a round are A-delivered in a deterministic order. 
Moreover, according to both Lemma~\ref{lemma:same_epoch_deliv} and Theorem~\ref{th:set_agreement}, 
$p_i$ A-delivers $m_1$ and $m_2$ in the same states as $p_j$. Thus, $p_i$ and $p_j$ A-deliver 
$m_1$ and $m_2$ in the same order. 
\end{IEEEproof}
\fi

\ifdefined\TECHREP
\subsection{Widening the scope: Updating the reliable digraph}
\label{sec:update_Gf}

In general, it is not necessary to update $G_R$, as long as the number of failed servers does not exceed~$f$.
In \ACplus{}, $f$ provides the reliability of the system. Once servers fail, this reliability drops. 
To keep the same level of reliability, $G_R$ must be periodically updated.
\ifdefined\PAPER
Due to space limitations, we defer the discussion on how to update $G_R$ to the extended technical report~\cite{poke2017adual_tr}.
\fi
\ifdefined\TECHREP
Yet, in \ACplus{}, failure notifications are immediately handled, which requires $G_R$ to remain unchanged.
Thus, we introduce the concept of an \emph{eon}---a sequence of epochs in which $G_R$ remains unchanged. 
To connect two subsequent eons $\epsilon_1$ and $\epsilon_2$, we extend $\epsilon_1$ with a \emph{transitional} reliable round.
This transitional round is similar to the transitional configuration in Raft~\cite{Ongaro2014}.
In comparison to a normal reliable round, in a transitional round, a server executes two additional operations 
before A-broadcasting its message: (1) set up $G_R^{\epsilon_2}$, the digraph for eon $\epsilon_2$; and 
(2) start sending heartbeat messages also on $G_R^{\epsilon_2}$. 
Note that all the other operations are executed on $G_R^{\epsilon_1}$.

The transitional round acts as a delimiter between the two digraphs, $G_R^{\epsilon_1}$ and $G_R^{\epsilon_2}$, 
and provides the following guarantee: 
no non-faulty server can start eon $\epsilon_2$, before all non-faulty servers start the transitional round.
Thus, when a server starts using $G_R^{\epsilon_2}$ for both R-broadcasting and detecting failures, all other 
non-faulty servers already set up $G_R^{\epsilon_2}$ and started sending heartbeat messages on it. 
Note that failure notifications are eon specific---failure notifications from $\epsilon_1$ are dropped in $\epsilon_2$, 
while failure notifications from $\epsilon_2$, received in $\epsilon_1$, are postponed.
\fi

\fi



\section{Evaluation}
\label{sec:evaluation}

\ifdefined\PAPER
\begin{figure*}[!tp]
\captionsetup[subfigure]{justification=centering}
\centering
\subcaptionbox{[SDC] Latency (log-log)\label{fig:SDC_allconcurplus_latency}} {
\includegraphics[width=.23\textwidth]{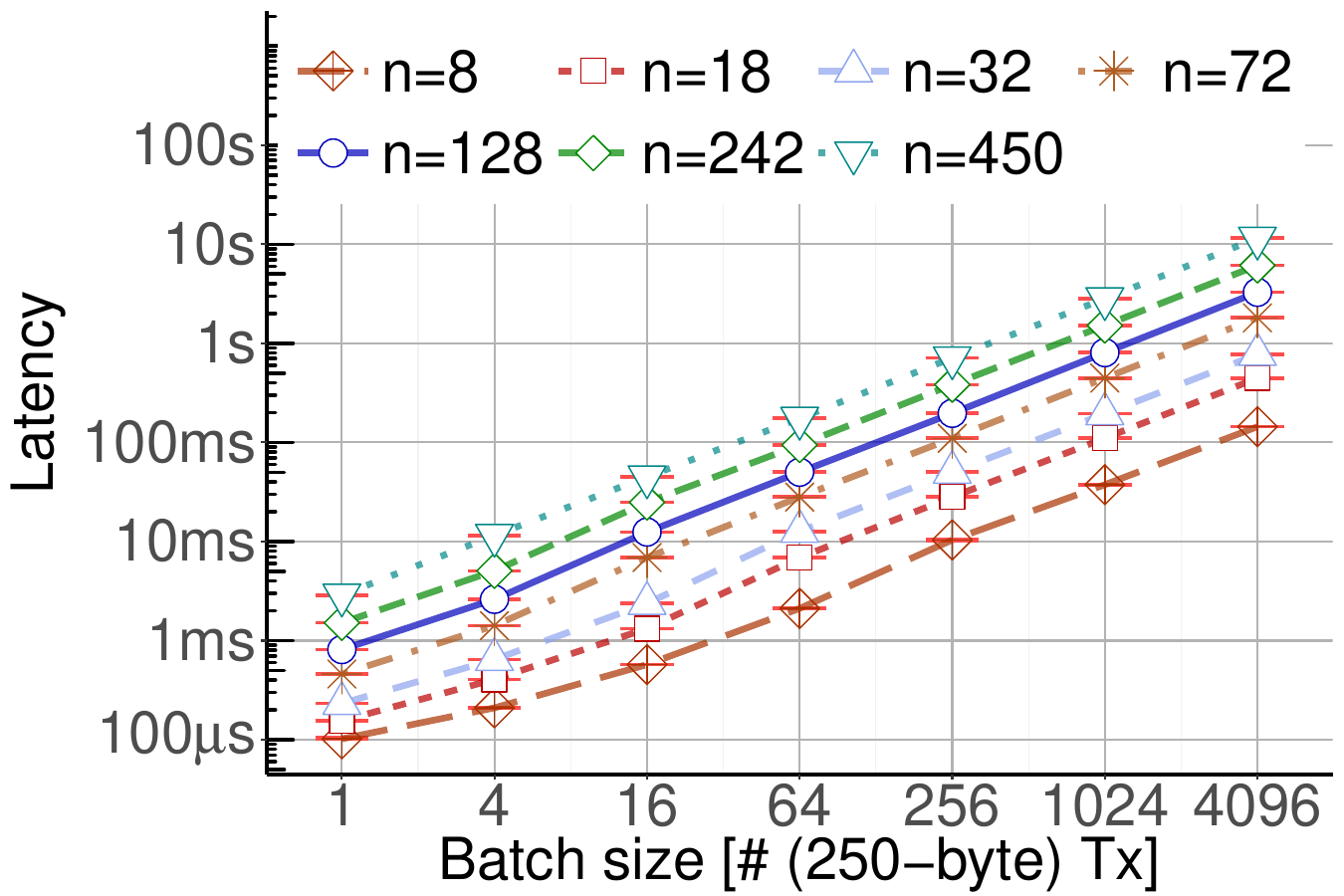}
}\hfill
\subcaptionbox{[SDC] Throughput (log-log)\label{fig:SDC_allconcurplus_throughput}} {
\includegraphics[width=.23\textwidth]{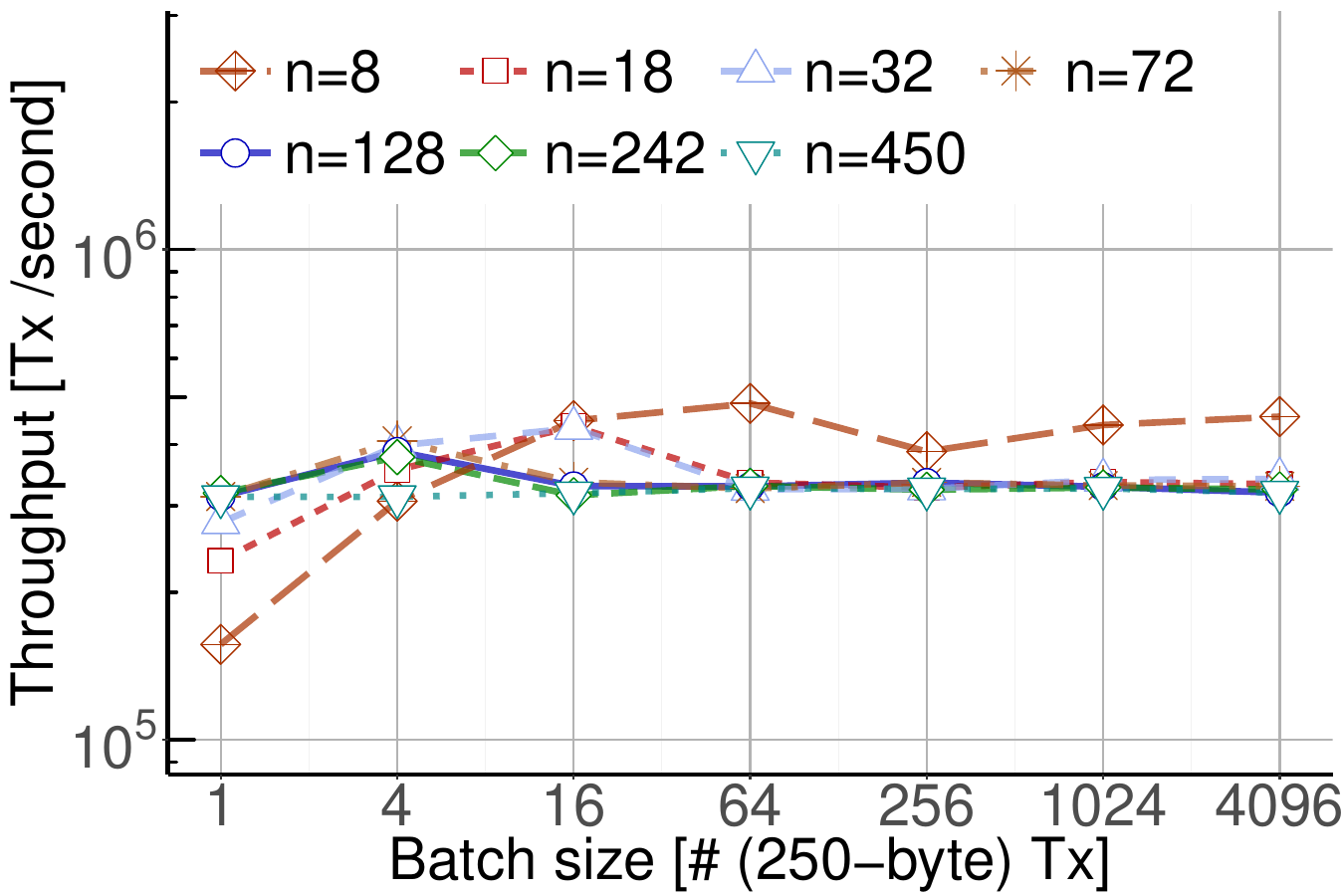}
}\hfill
\subcaptionbox{[MDC] Latency (log-log)\label{fig:MDC_allconcurplus_latency}} {
\includegraphics[width=.23\textwidth]{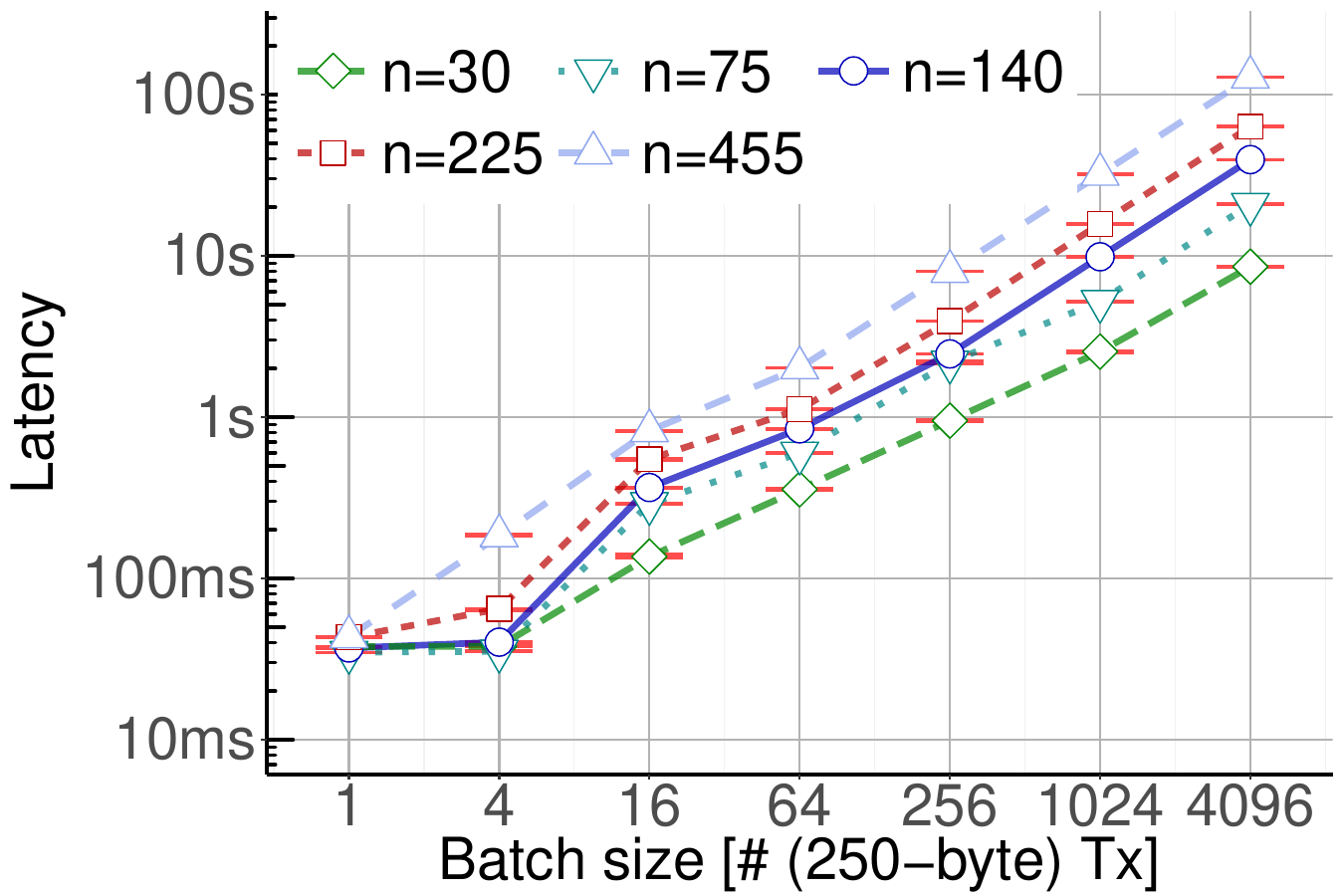}
}\hfill
\subcaptionbox{[MDC] Throughput (log-log)\label{fig:MDC_allconcurplus_throughput}} {
\includegraphics[width=.23\textwidth]{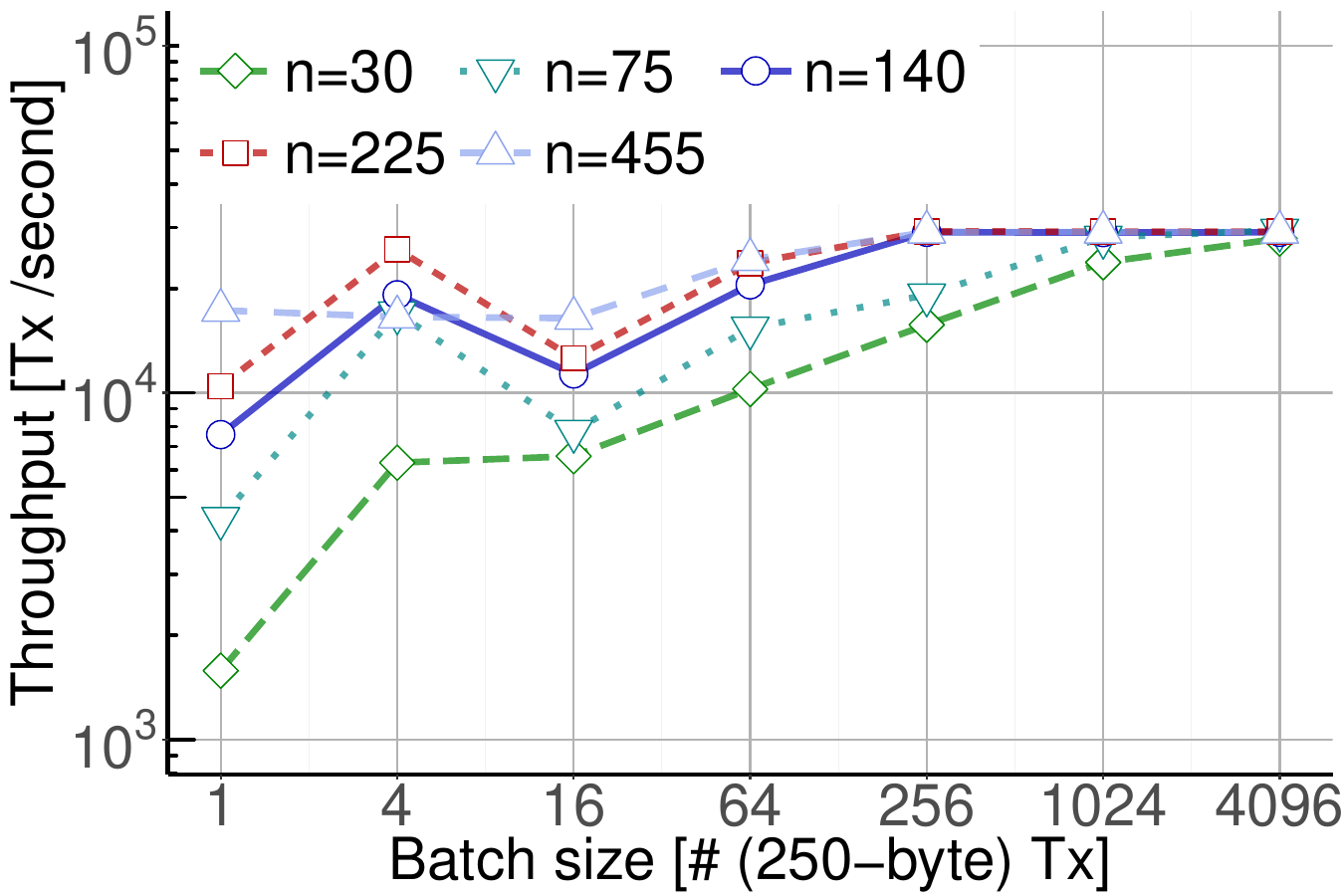}
}
\caption{The effect of batching on \ACplus{}'s performance in a non-failure scenario, for both SDC and MDC deployments. 
The measurements are done using OMNeT++.
}
\label{fig:ev_nonfailure_batching}
\end{figure*}
\fi

\ifdefined\PAPER
\begin{figure*}[!tp]
\captionsetup[subfigure]{justification=centering}
\centering
\subcaptionbox{[SDC] Latency (log-log)\label{fig:SDC_latency}} {
\includegraphics[width=.23\textwidth]{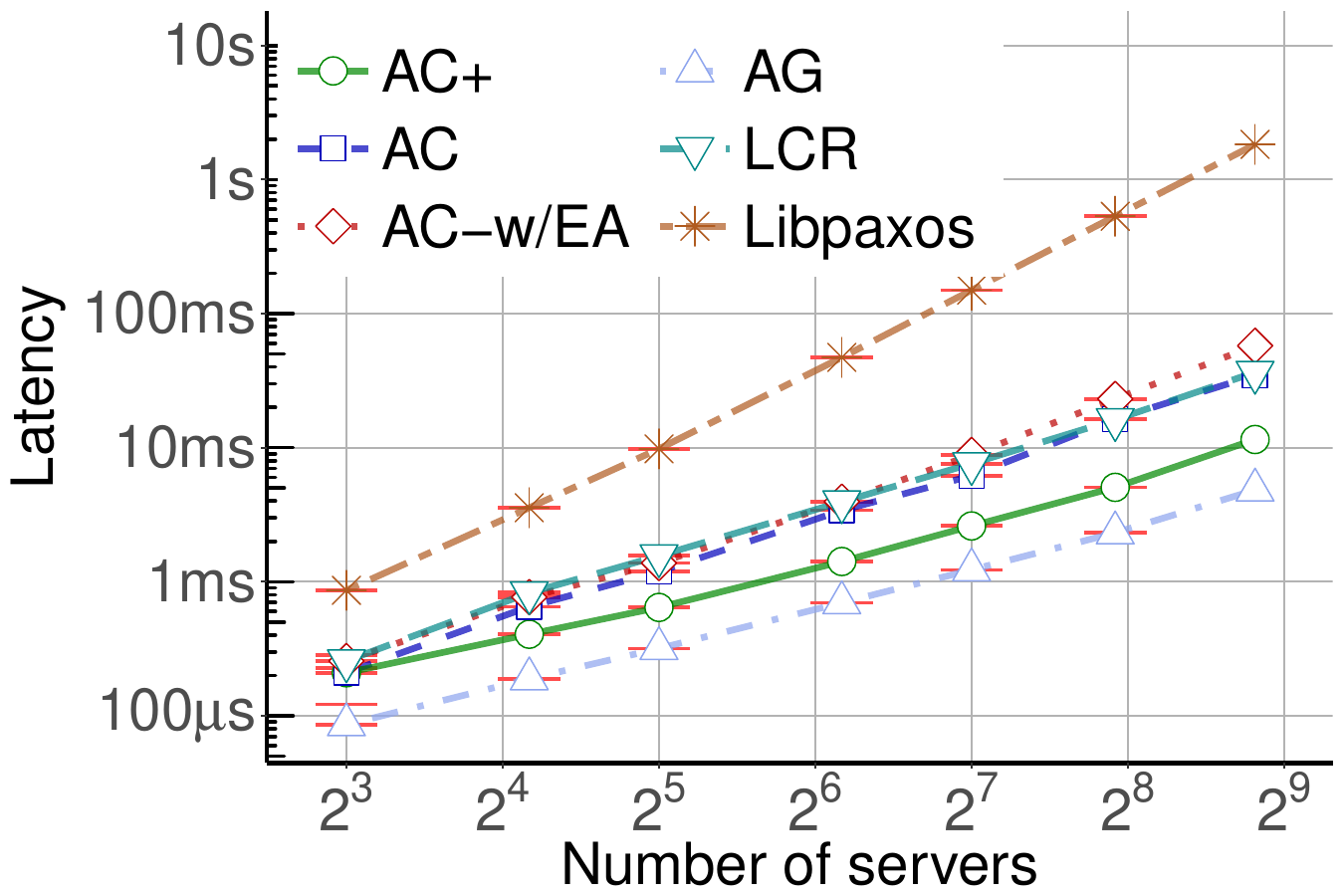}
}\hfill
\subcaptionbox{[SDC] Throughput (log-log)\label{fig:SDC_throughput}} {
\includegraphics[width=.23\textwidth]{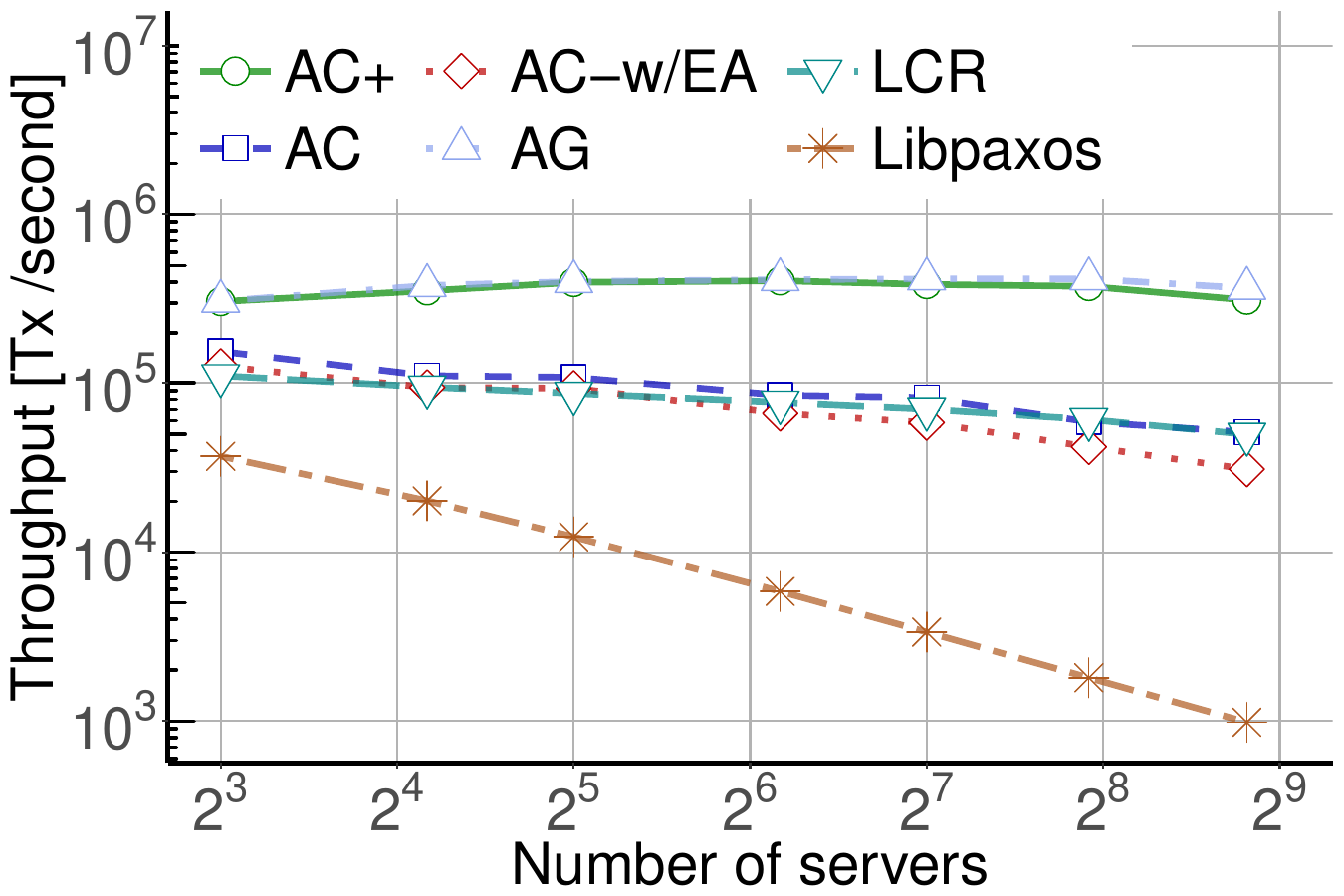}
} \hfill
\subcaptionbox{[MDC] Latency (log-log)\label{fig:MDC_latency}} {
\includegraphics[width=.23\textwidth]{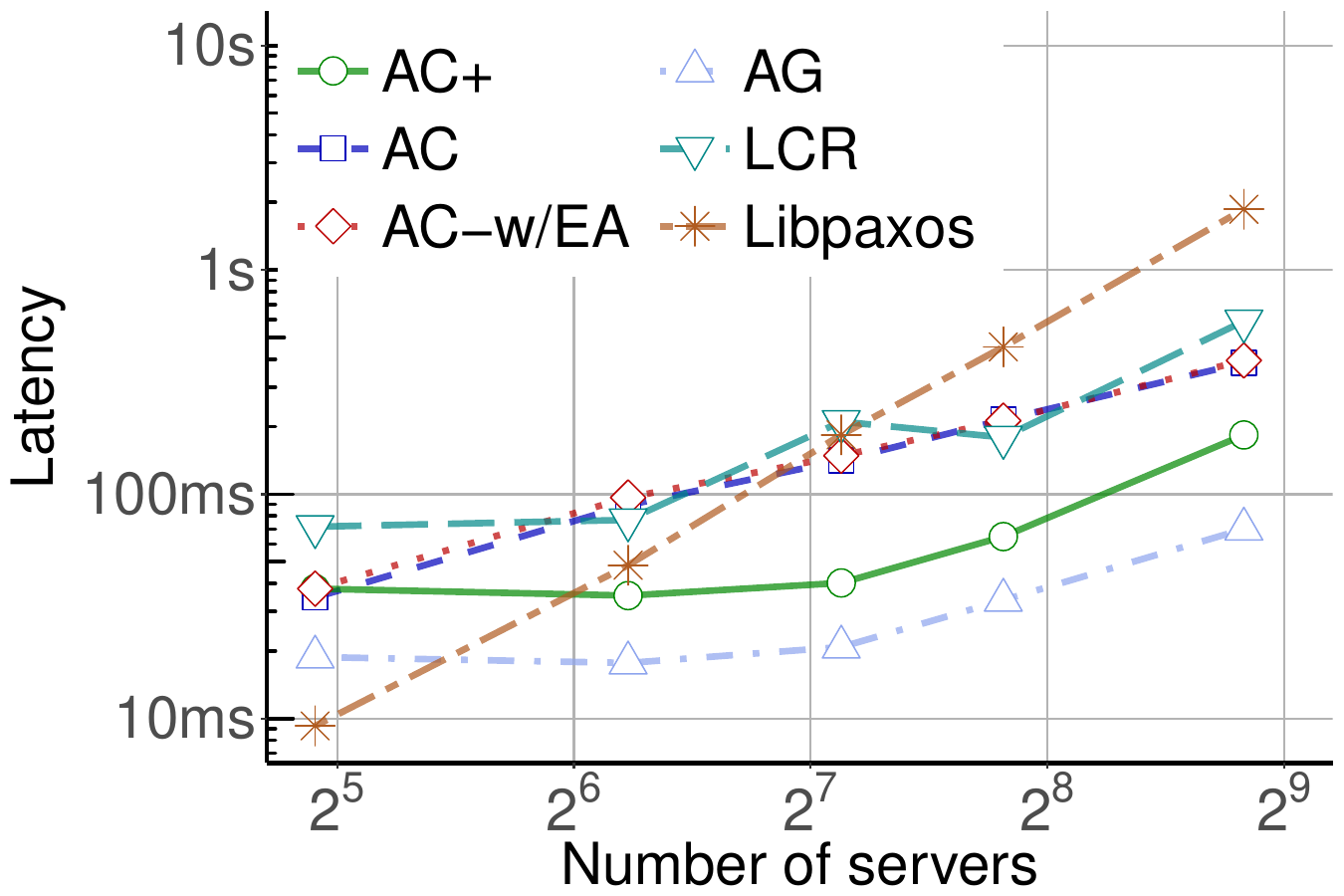}
}\hfill
\subcaptionbox{[MDC] Throughput (log-log)\label{fig:MDC_throughput}} {
\includegraphics[width=.23\textwidth]{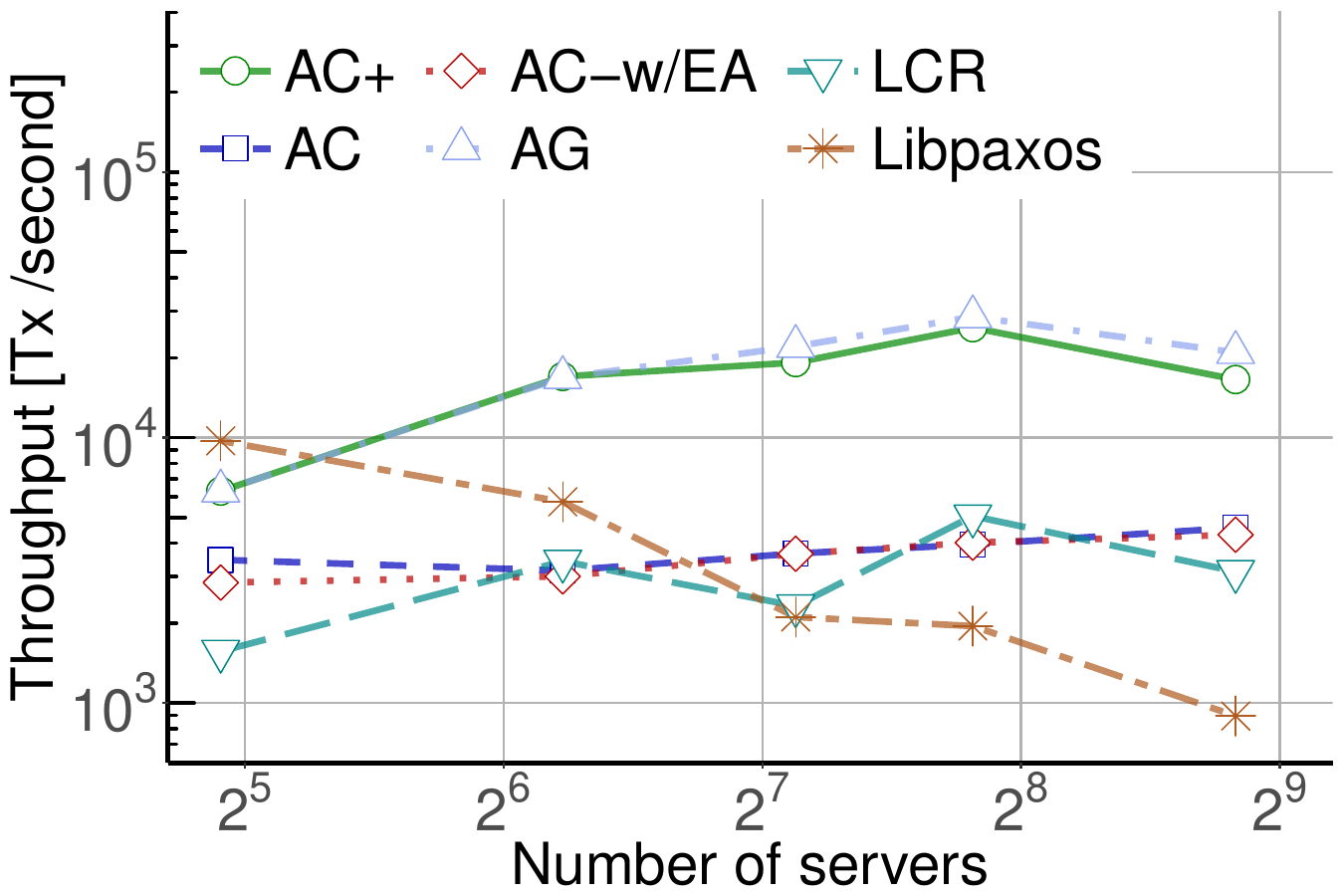}
}
\caption{\ACplus{}'s performance evaluated against \AC{}, \ACwEA{}, AllGather, LCR and Libpaxos, 
in a non-failure scenario, for both SDC and MDC deployments with batch size four.
The measurements are done using OMNeT++.
}
\label{fig:ev_nonfailure_compare}
\end{figure*}
\fi

To evaluate \ACplus{}, we consider a distributed ledger, a representative application for large-scale atomic broadcast.
In a distributed ledger, servers receive transactions as input, and their goal is to agree on a common subset of ordered 
transactions to be added to the ledger. 
%
We consider two deployments, to which, for brevity, we refer as SDC and MDC.
In SDC, the servers are located inside a single datacenter.
For example, a group of airplane tickets retailers, located in the same area, want to offer the same flights, without divulging their clients queries.
In MDC, the servers are distributed throughout five datacenters across Europe (i.e., Dublin, London, Paris, Frankfurt, Stockholm).
For example, a group of airplane tickets retailers with clients in multiple countries, 
distributed the ledger across multiple datacenters in order to provide faster local queries.
%
For both deployments, we base our evaluation on OMNeT++, a discrete-event simulator~\cite{Varga:2008:OOS:1416222.1416290}.
To realistically simulate the communication network, we use the INET framework of OMNeT++. 

For every datacenter, we consider a fat-tree network topology that provides support for multi-path routing~\cite{Al-Fares:2008:SCD:1402946.1402967}.  
The topology consists of three layers of $k$-port switches, resulting in $k$ \emph{pods} connected among each other through $k^2/4$ core switches;
every pod is directly connected to $k/2$ subnets of $k/2$ hosts each. 
In the case of MDC, one port from every core switch is used to stream traffic between datacenters;
thus, the number of pods is reduces to $k-1$.
Hosts are connected to switches via $1$~GigE $10$m cables (i.e., $0.05\mu$s delay), 
while switches are connected to each other via $1$~GigE $100$m cables (i.e., $0.5\mu$s delay).
Datacenters are interconnected via fiber optic (i.e., $5\mu$s delay per km) 
with an available bandwidth of $10$~Gbps (i.e., $10$\% of the typical datacenter interconnect bandwidth).
The length of the fiber optic is estimated as $1.1\times$ the geographical distance between the datacenters, 
resulting in latencies between $2.5$ms and $8.9$ms.
To reduce the likelihood of correlated failures, we deploy one server per subnet 
(i.e., $n=k^2/2$ for SDC and $n=5(k-1)k/2$ for MDC). 
The servers communicate via TCP.

We evaluate \ACplus{} against \AC{}, \ACwEA{}, AllGather~\cite{mpi-3.1}, LCR~\cite{Guerraoui:2010:TOT:1813654.1813656} and Libpaxos~\cite{libpaxos}.
\AC{} assumes $\mathcal{P}$, while \ACwEA{} assumes $\Diamond\mathcal{P}$, i.e., it uses the forward-backward mechanism~(\cref{sec:allconcur_overview}).
We deploy both with a reliability of $6$-nines, estimated conservatively over a period of $24$ hours with a server $\mathit{MTTF}\approx2$ years~\cite{tsubame}.
The servers are connected via a $G_S(n,d)$ digraph~\cite{Soneoka:1996:DDC:227095.227101}; \ACplus{} uses the same digraph in the reliable mode.
\ifdefined\TECHREP
Table~\ref{tab:gs_connectivity} shows the vertex-connectivity of $G_S(n,d)$ for the values of $n$ used throughout the evaluation.
\fi
AllGather is a round-based non-fault-tolerant distributed agreement algorithm where every server uses a binomial tree to A-broadcast its messages;
\ACplus{} uses the same mechanism in the unreliable mode. 
LCR is an atomic broadcast algorithm that is based on a ring topology and uses vector clocks for message ordering. 
Libpaxos is an open-source implementation of Paxos~\cite{Lamport:1998:PP:279227.279229}. 
We deploy it over $n$~servers with one proposer, five acceptors (sufficient for a reliability of $6$-nines) and $n$~learners.

\ifdefined\TECHREP
\begin{table}[!tp]
\centering
\begin{tabular}{ l | c | c | c | c | c | c | c | c | c | c | c | c  }
  \cmidrule[1.5pt](){1-13}

  
  $n$      & 8 & 18 & 30 & 32 & 72 & 75 & 128 & 140 & 225 & 242 & 450 & 455 \\  
  \hline
  \rowcolor{DarkGray}  
  $\kappa$ & 3 & 4  &  4 & 4  & 5  &  5 &  5  &  6  &  6  &   7 &   8 &  8  \\ 
  
  \cmidrule[1.5pt](){1-13} 
\end{tabular}
  \caption{The vertex-connectivity of $G_S(n,d)$.}
\label{tab:gs_connectivity}
\end{table}
\fi

To simplify the evaluation, we omit from where transactions originate 
(i.e., from where they are initially received by the servers) and how they are interpreted by the ledger.
Every server A-broadcasts a message consisting of a batch of transactions; 
once a server A-delivers a message, it adds all enclosed transactions to (its copy of) the ledger. 
Each transaction has a size of $250$ bytes, sufficient to hold a payload and cryptographic signatures 
(e.g., a typical size for Bitcoin~\cite{bitcoin} transactions).
Every server can have one outstanding message at a time: Before A-broadcasting another message, 
it waits either for the round to complete or for the message to be A-delivered.
Using this benchmark, we measure both latency and throughput. 
Latency is defined as the time between a server A-broadcasting and A-delivering a message,
while throughput, as the number of transactions A-delivered per server per second. 

\ifdefined\TECHREP
We first evaluate \ACplus{} in non-faulty scenarios~(\cref{sec:ev_nonfailure}). 
Then, we evaluate the impact of different failure scenarios on \ACplus{}'s performance~(\cref{sec:ev_failure}).
\fi

\subsection{Non-failure scenarios}
\label{sec:ev_nonfailure}

We evaluate \ACplus{}'s performance in scenarios with no failures for both SDC and MDC. 
We measure the performance at each server during a common measuring window between $t_1$ and $t_2$;
we define $t_1$ and $t_2$ as the time when every server A-delivered at least $10\times n$ and $110\times n$ messages, respectively.
If servers A-deliver the same amount of messages from every server, as is the case of \ACplus{}, then every server A-delivers $100$ own messages during a window.
Figures~\ref{fig:ev_nonfailure_batching} and~\ref{fig:ev_nonfailure_compare} report both the median latency with a 95\% nonparametric confidence interval 
and the average throughput.


\subsubsection{Message size}

\ifdefined\TECHREP
\begin{figure}[!tp]
\captionsetup[subfigure]{justification=centering}
\centering

\hspace*{\fill}%
\subcaptionbox{[SDC] Latency (log-log)\label{fig:SDC_allconcurplus_latency}} {
\includegraphics[width=.35\textwidth]{figures/SDC_allconcurplus_latency}
} \hfill
\subcaptionbox{[SDC] Throughput (lin-log)\label{fig:SDC_allconcurplus_throughput}} {
\includegraphics[width=.35\textwidth]{figures/SDC_allconcurplus_throughput}
}
\hspace*{\fill}%

\vspace{\baselineskip}

\hspace*{\fill}%
\subcaptionbox{[MDC] Latency (log-log)\label{fig:MDC_allconcurplus_latency}} {
\includegraphics[width=.35\textwidth]{figures/MDC_allconcurplus_latency}
} \hfill
\subcaptionbox{[MDC] Throughput (log-log)\label{fig:MDC_allconcurplus_throughput}} {
\includegraphics[width=.35\textwidth]{figures/MDC_allconcurplus_throughput}
}
\hspace*{\fill}%

\caption{The effect of batching on \ACplus{}'s performance in a non-failure scenario 
for both a single datacenter [SDC] (a), (b) and multiple datacenters [MDC] (c), (d). 
The measurements are done using OMNeT++.
The latency is reported as the median with a 95\% nonparametric confidence interval; 
the throughput is reported as the average over the measuring window.
}
\label{fig:ev_nonfailure_batching}
\end{figure}
\fi

We evaluate the effect batching transactions has on \ACplus{}'s performance, starting from one transaction per message (i.e., no batching) 
to $4,096$ transactions per message (i.e., $\approx1$MB messages).
Figure~\ref{fig:ev_nonfailure_batching} plots, for different deployments in both SDC and MDC, the latency and the throughput as a function of batching size. 
As expected, the latency is sensitive to increases in message size. 
Without batching multiple transactions into a single message, the latency is minimized. 
Yet, no batching entails usually a low throughput, since the system's available bandwidth is only saturated for large system sizes (e.g., $\approx 450$ servers).
Indeed, increasing the message size leads to higher throughput:
By batching transactions, \ACplus{}'s throughput exceeds $320,000$ transactions per second for all SDC deployments, 
and $27,000$ transactions per second for all MDC deployments (see Figures~\ref{fig:SDC_allconcurplus_throughput} and~\ref{fig:MDC_allconcurplus_throughput}).
This increase in throughput comes though at the cost of higher latency. 

Moreover, increasing the batch size may lead to higher latency due to the TCP protocol. 
For example, in an MDC deployment of $30$ servers with a batch size of $16$, the $65,535$-byte TCP Receive Window
causes servers to wait for TCP packets to be acknowledged before being able to send further all the messages of a round. 
Since acknowledgements across datacenters are slow, this results in a sharp increase in latency and thus, a drop in throughput 
(see Figures~\ref{fig:MDC_allconcurplus_latency} and~\ref{fig:MDC_allconcurplus_throughput}). 
To reduce the impact of TCP, for the remainder of the evaluation, we fix the batch size to four (i.e., $1$kB messages).

%
%


\subsubsection{Comparison to other algorithms}

\ifdefined\TECHREP
\begin{figure}[!tp]
\captionsetup[subfigure]{justification=centering}
\centering

\hspace*{\fill}%
\subcaptionbox{[SDC] Latency (log-log)\label{fig:SDC_latency}} {
\includegraphics[width=.35\textwidth]{figures/SDC_latency}
} \hfill
\subcaptionbox{[SDC] Throughput (log-log)\label{fig:SDC_throughput}} {
\includegraphics[width=.35\textwidth]{figures/SDC_throughput}
}
\hspace*{\fill}%

\vspace{\baselineskip}

\hspace*{\fill}%
\subcaptionbox{[MDC] Latency (log-log)\label{fig:MDC_latency}} {
\includegraphics[width=.35\textwidth]{figures/MDC_latency}
} \hfill
\subcaptionbox{[MDC] Throughput (log-log)\label{fig:MDC_throughput}} {
\includegraphics[width=.35\textwidth]{figures/MDC_throughput}
}
\hspace*{\fill}%
\caption{\ACplus{}'s performance evaluated against \AC{}, \ACwEA{}, AllGather, LCR and Libpaxos, 
in a non-failure scenario with a batch size of four, (i.e., $1$kB messages) 
for both a single datacenter [SDC] (a), (b) and multiple datacenters [MDC] (c), (d).
The measurements are done using OMNeT++.
The latency is reported as the median with a 95\% nonparametric confidence interval; 
the throughput is reported as the average over the measuring window.
}
\label{fig:ev_nonfailure_compare}
\end{figure}
\fi

We evaluate \ACplus{}'s performance against \AC{}, \ACwEA{}, AllGather, LCR and Libpaxos, while scaling up to $455$ servers. 
Figure~\ref{fig:ev_nonfailure_compare} plots, for both SDC and MDC, the latency and the throughput as a function of the number of servers.  

\textbf{\ACplus{} vs. AllGather.} 
For both deployments, \ACplus{}'s latency is around $2$--$2.6\times$ higher than AllGather's.
\ifdefined\TECHREP
The roughly\footnote{The overhead is due to providing fault tolerance (e.g., larger message headers, more TCP connections).} 
\fi
\ifdefined\PAPER
The roughly
\fi
two-fold increase in latency is as expected: When no failures occur, \ACplus{} A-delivers a message after completing two rounds.
At the same time, \ACplus{} achieves between $79\%$ and $100\%$ of the throughput of AllGather, while providing also fault-tolerance.
The reason behind this high throughput is that, similar to AllGather, \ACplus{} A-delivers messages at the end of every round 
(except for the first round).

\textbf{\ACplus{} vs. \AC{}.} 
Due to the redundancy-free overlay network, \ACplus{} performs less work and introduces less messages in the network and, as a result, outperforms both \AC{} and \ACwEA{}. 
When comparing to \AC{}, it achieves up to $3.2\times$ lower latency and up to $6.4\times$ higher throughput for SDC, 
and up to $3.5\times$ lower latency and up to $6.4\times$ higher throughput for MDC. 
When comparing to \ACwEA{}, it achieves up to $5\times$ lower latency and up to $10\times$ higher throughput for SDC, 
and up to $3.7\times$ lower latency and up to $6.5\times$ higher throughput for MDC. 

\textbf{\ACplus{} vs. LCR.}
\ACplus{} outperforms LCR in both latency and throughput. 
The reason behind it is twofold: first, the dissemination latency of the ring topology adopted by LCR; 
and second, the message overhead necessary for using vector clocks.
Thus, for SDC, \ACplus{} is up to $3.2\times$ faster and achieves up to $6.3\times$ higher throughput. 
Although LCR is designed for local area networks~\cite{Guerraoui:2010:TOT:1813654.1813656}, 
for completion, we evaluate its performance also for MDC:
\ACplus{} is up to $5.2\times$ faster and achieves up to $8.3\times$ higher throughput.
When deploying LCR, we order the servers in the ring with minimal communication between different datacenters. 

\textbf{\ACplus{} vs. Libpaxos.} 
Paxos is designed for high-availability and thus, not intended to scale to hundreds of instances. 
Accordingly, the performance of Libpaxos drops sharply with increasing $n$. 
As a result, \ACplus{} is up to $158\times$ faster for SDC and up to $10\times$ faster for MDC; 
also, it achieves up to $318\times$ higher throughput for SDC and up to $18\times$ higher throughput for MDC.

\subsection{Failure scenarios}
\label{sec:ev_failure}

\ifdefined\PAPER
\begin{figure}[!tp]
\centering
\includegraphics[width=.30\textwidth]{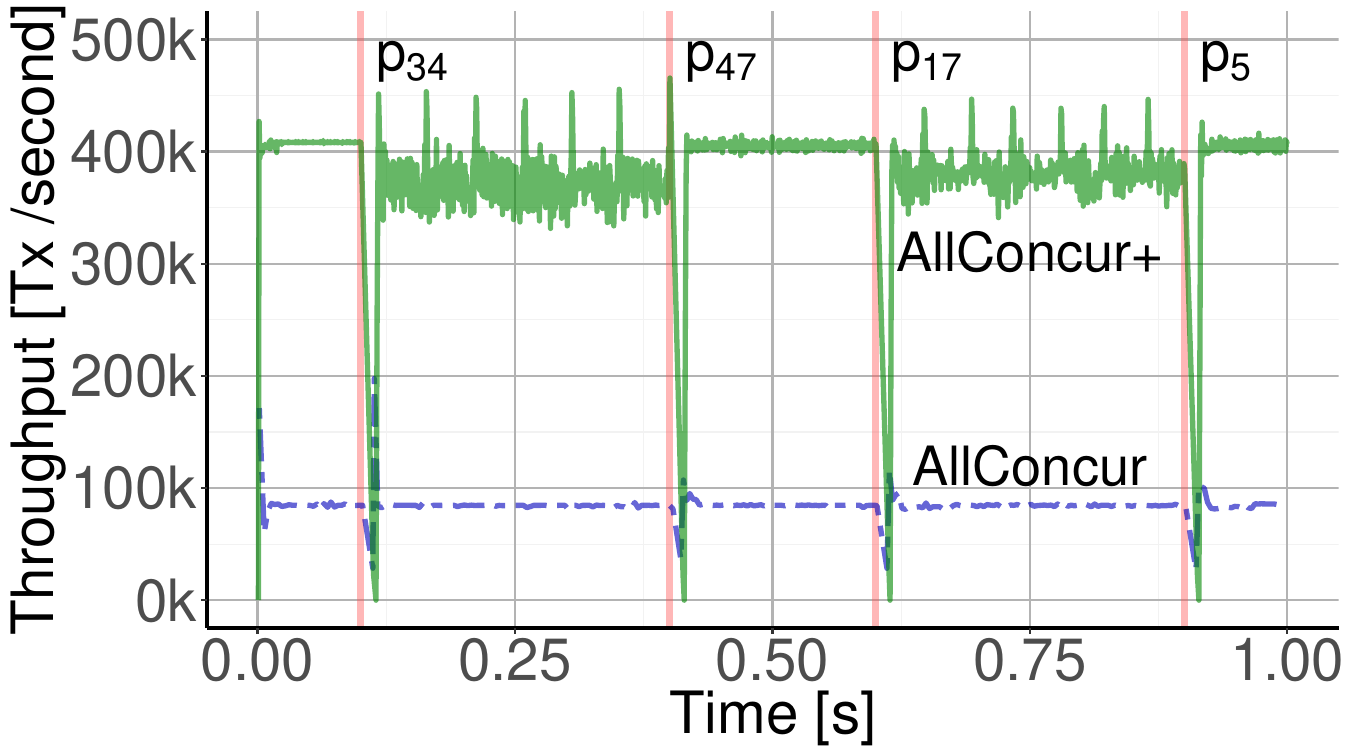}
\caption{\ACplus{}'s performance evaluated against \AC{} during a failure scenario 
for an SDC deployment of $72$ servers, with batch size four. 
The red vertical bars denote the time of failures. The FD has $\Delta_{hb}=1ms$ and $\Delta_{to}=10ms$.
The measurements are done using OMNeT++.
The reported throughput is sampled for $p_0$ at the completion of every round.
}
\label{fig:failure_scenario}
\end{figure}
\fi

\ifdefined\TECHREP
\begin{figure}[!tp]
\centering
\includegraphics[width=.41\textwidth]{figures/SDC_failure_scenario}
\caption{\ACplus{}'s performance evaluated against \AC{} during a failure scenario 
for an SDC deployment of $72$ servers, with batch size four. 
The red vertical bars denote the time of failures. The FD has $\Delta_{hb}=1ms$ and $\Delta_{to}=10ms$.
The measurements are done using OMNeT++.
The reported throughput is sampled for $p_0$ at the completion of every round.
}
\label{fig:failure_scenario}
\end{figure}
\fi

To evaluate \ACplus{}'s performance when failures do occur, we first consider the following failure scenario: 
In an SDC deployment of $72$ servers, each A-broadcasting $1$kB messages, four failures occur during an interval of one second.
To detect failures, servers rely on a heartbeat-based FD with a heartbeat period $\Delta_{hb}=1ms$ and a timeout period $\Delta_{to}=10ms$. 
Figure~\ref{fig:failure_scenario} plots the throughput of $p_0$ as a function of time for both \AC{} and \ACplus{}.
The throughput is sampled at the completion of every round, i.e., the number of transactions A-delivered divided by the time needed to complete the round.

The four failures (indicated by red vertical bars) have more impact on \ACplus{}'s throughput than \AC{}'s. 
In \AC{}, a server's failure leads to a longer round, 
i.e., $\approx \Delta_{to}$ instead of $\approx3.3$ms when no failures occur (see Figure~\ref{fig:SDC_latency}).
For example, when $p_{34}$ fails, $p_0$ must track its message, which entails waiting for each of $p_{34}$'s successors 
to detect its failure and send a notification.
Once the round completes, \AC{}'s throughput returns to $\approx85,000$ transactions per second.
In \ACplus{}, every failure triggers a switch to the reliable mode. 
For example, once $p_0$ receives a notification of $p_{34}$'s failure, it rolls back to the latest A-delivered round and reliably reruns the subsequent round.
Once $p_0$ completes the reliable round, it switches back to the unreliable mode, where it requires two unreliable rounds to first A-deliver a round. 
In Figure~\ref{fig:failure_scenario}, we see \ACplus{}'s throughput drop for a short interval of time after every failure (i.e., $\approx16$ms). 
However, since \ACplus{}'s throughput is significantly higher in general, the short intervals of low throughput after failures 
have only a minor impact and on average, \ACplus{} achieves $\approx 4.6\times$ higher throughput than \AC{}.

Furthermore, we analyze the robustness of \ACplus{}'s performance with regard to more frequent failures.
The analysis is based on a model with two parameters: 
$\delta_u$, the expected duration of an unreliable round; and $\delta_r$, the expected duration of a reliable round. 
Clearly, $\delta_u < \delta_r$.
Since every server has one outstanding message at a time, we use data from Figure~\ref{fig:ev_nonfailure_compare} 
to estimate both parameters: $\delta_u$ is half the latency of \ACplus{} and $\delta_r$ is the latency of \AC{}.
We assume an SDC deployment with batch size four. 
\ifdefined\TECHREP
We first focus on both the expected latency and throughput over a sequence of multiple rounds.
Then, for latency sensitive applications, we look into the worst-case latency of a single round.
\fi
\ifdefined\PAPER
We focus on both the expected latency and throughput over a sequence of multiple rounds
(for an analysis of the worst-case latency of a single round, see the extended technical report~\cite{poke2017adual_tr}).
\fi
%
Throughout the analysis, we consider only no-fail and fail transitions, since skip transitions 
entail lower latency and higher throughput, i.e., a skip transition always triggers the A-delivery of messages.

\ifdefined\PAPER
\begin{figure}[!tp]
\captionsetup[subfigure]{justification=centering}
\centering
\subcaptionbox{Expected latency\label{fig:fm_expected_latency}} {
\includegraphics[width=.225\textwidth]{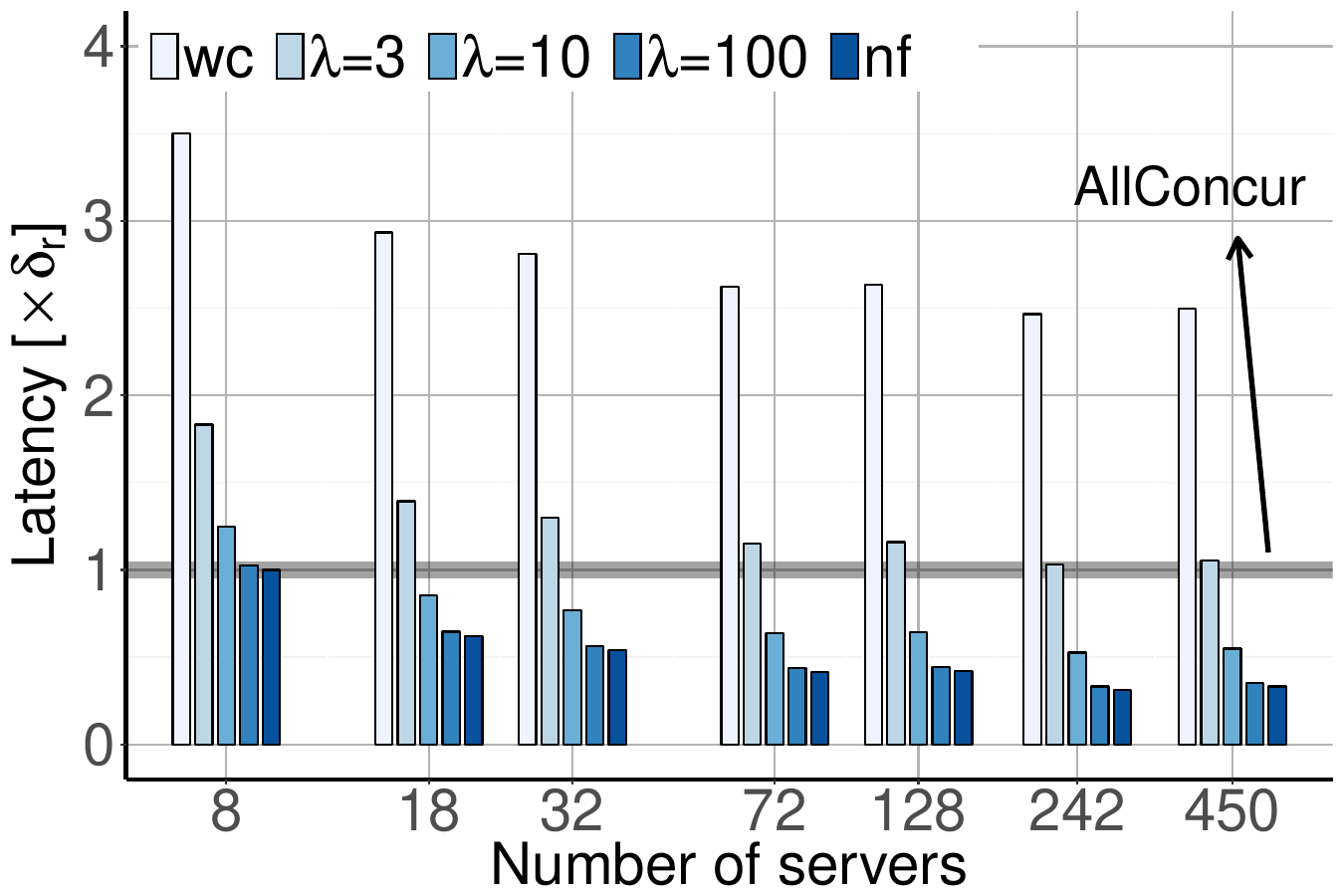}
}
\subcaptionbox{Expected throughput\label{fig:fm_expected_throughput}} {
\includegraphics[width=.225\textwidth]{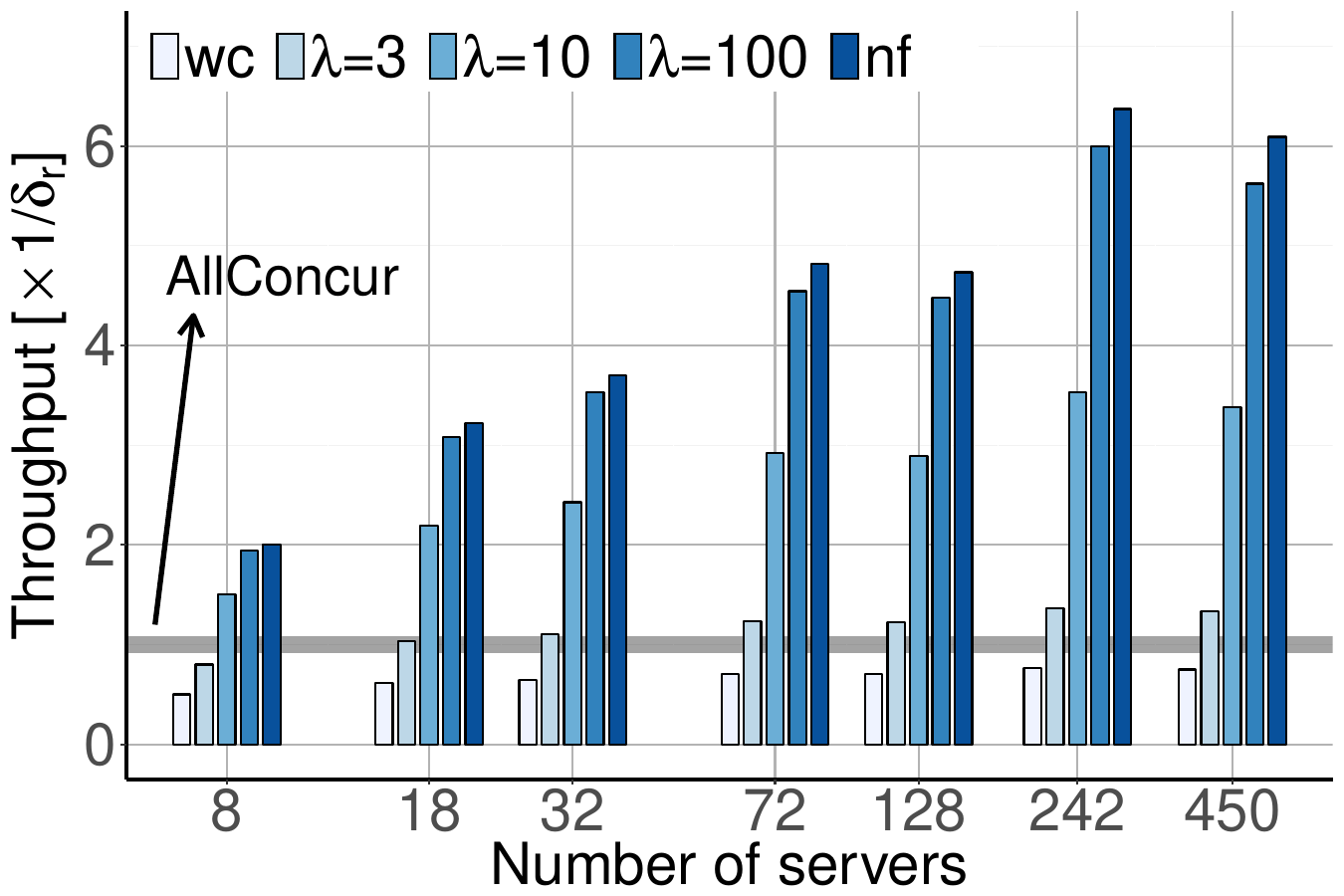}
}
\caption{\ACplus{}'s estimated expected latency and throughput (expressed as ratios to \AC{}'s values) for the non-failure scenario (nf), 
the worst-case scenario (wc), and scenarios with sequences of $\lambda$ unreliable rounds. 
We assume an SDC deployment with batch size four. 
}
\label{fig:ev_failure_analysis}
\end{figure}
\fi

\ifdefined\TECHREP
\begin{figure}[!tp]
\captionsetup[subfigure]{justification=centering}
\centering

\hspace*{\fill}%
\subcaptionbox{Expected latency\label{fig:fm_expected_latency}} {
\includegraphics[width=.35\textwidth]{figures/SDC_fm_expected_latency}
} \hfill
\subcaptionbox{Expected throughput\label{fig:fm_expected_throughput}} {
\includegraphics[width=.35\textwidth]{figures/SDC_fm_expected_throughput}
}
\hspace*{\fill}%

\caption{\ACplus{}'s estimated expected latency and throughput (expressed as ratios to \AC{}'s values) for the non-failure scenario (nf), 
the worst-case scenario (wc), and scenarios with sequences of $\lambda$ unreliable rounds (ordered by decreasing frequency of failures).
We assume an SDC deployment with batch size four. 
}
\label{fig:ev_failure_analysis}
\end{figure}
\fi

\ifdefined\TECHREP
\textbf{Expected performance.}  
\fi
We estimate \ACplus{}'s expected performance for several scenarios with different expected failure frequencies.
If failures occur so frequently that no unreliable rounds can be started (i.e., a single sequence of reliable rounds), 
\ACplus{} is equivalent to \AC{}---a round is A-delivered once it is completed.
Thus, the estimate of the expected latency is $\delta_r$ and of the expected throughput is $1/\delta_r$. 
For comparison, Figure~\ref{fig:ev_failure_analysis} plots the latency as a factor of $\delta_r$ and the throughput as a factor of~$1/\delta_r$.
If no failures occur, the algorithm performs a sequence of unreliable rounds; for this scenario (denoted by nf), 
we use the actual measurements, reported in Figure~\ref{fig:SDC_latency}.

The worst-case scenario (denoted by wc), requires all messages to be A-delivered during reliable rounds and,
in addition, in between any two subsequent reliable rounds there must be exactly two unreliable rounds (the second one not being completed), 
i.e., the longest sequence possible without A-delivering an unreliable round. 
Such a scenario consists of the repetition of the following sequence:
\ifdefined\TECHREP
\begin{align}
\label{eq:worst_case_seq}
  & \ldots \trans \acState{e}{r-1} \trans \acState{e}{r} \fTrans \acRState{e+1}{r-1} \trans 
     \acState{e+1}{r} \trans \acState{e+1}{r+1} \fTrans \acRState{e+2}{r} \trans \ldots
\end{align}
\fi
\ifdefined\PAPER
\begin{align*}
\label{eq:worst_case_seq}
  & \ldots \trans \acState{e}{r-1} \trans \acState{e}{r} \fTrans \acRState{e+1}{r-1} \trans \nonumber \\ 
  & \quad \acState{e+1}{r} \trans \acState{e+1}{r+1} \fTrans \acRState{e+2}{r} \trans \ldots
\end{align*}
\fi
The expected latency is $3\delta_u+2\delta_r$, 
e.g., $r$'s messages are A-broadcast in $\acState{e}{r}$ but A-delivered once $\acRState{e+2}{r}$ completes.
This scenario provides also the worst-case expected throughput---only reliable rounds are A-delivered, hence $1/(2\delta_u + \delta_r)$.
In the worst case, \ACplus{} has up to $3.5\times$ higher expected latency and up to $2\times$ lower expected throughput than \AC{}.

The low performance of the worst-case scenario is due to the lack of sufficiently long sequences of unreliable rounds 
(i.e., at least three, two completed and one begun), thus never enabling the minimal-work distributed agreement of \ACplus{}.
\ifdefined\TECHREP
Let $\lambda \geq 3$ be the length of each sequence of unreliable rounds, 
i.e., the expected frequency of failures is between $1/(\delta_r+\lambda \delta_u)$ and $1/(\delta_r+(\lambda-1) \delta_u)$.
\fi
\ifdefined\PAPER
Let $\lambda \geq 3$ be the length of each sequence of unreliable rounds. 
\fi
Then, the expected latency is $2\delta_u + \frac{\delta_u+2\delta_r}{\lambda}$,
i.e., the first $\lambda-2$ unreliable rounds are A-delivered after $2\delta_u$, 
the $(\lambda-1)$-th round after $2\delta_u +\delta_r$ and the final round after $3\delta_u +\delta_r$.
Also, the expected throughput is $\frac{1-1/\lambda}{\delta_u + \delta_r/\lambda}$, 
i.e., during a period of $\lambda\delta_u + \delta_r$,  $\lambda-2$ unreliable rounds and one reliable round are A-delivered.
For $\lambda=10$, \ACplus{} has up to $1.9\times$ lower expected latency and up to $3.5\times$ higher expected throughput than \AC{}.

\ifdefined\TECHREP
\textbf{Worst-case latency for a single round.}
Let $r$ be a round and $e$ be the epoch in which $r$'s messages were A-broadcast for the first time. 
If $\acRState{e}{r}$, then $r$ is A-delivered once $\acRState{e}{r}$ completes (as in \AC{}); 
thus, the estimated latency is $\delta_r$. 
Yet, if $\acState{e}{r}$, the A-delivery of $r$ must be delayed~(\cref{sec:adeliv_msg}).
If no failures occur, $\acState{e}{r}$ will be followed by two unreliable rounds; thus, the estimated latency is $2\delta_u$. 
In case of a single failure, the worst-case estimated latency is $3\delta_u+\delta_r$
and it corresponds to the sequence in~\eqref{eq:worst_case_seq}, with the exception that $\acState{e+1}{r+1}$ completes 
i.e., $\acState{e}{r}$ is succeeded by a fail transition to $\acRState{e+1}{r-1}$, a reliable rerun
of the preceding not A-delivered unreliable round $r-1$; then, due to no other failures, round $r$ is rerun 
unreliably in epoch $e+1$ and A-delivered once $\acState{e+1}{r+1}$ completes. 
If multiple failures can occur, $\acState{e+1}{r+1}$ may be interrupted by a fail transition to $\acRState{e+2}{r}$; 
thus, in this case, the worst-case estimated latency is $3\delta_u+2\delta_r$.

For latency sensitive applications, we can reduce the worst-case latency by always rerunning rounds reliably (after a rollback). 
For the sequence in~\eqref{eq:worst_case_seq}, $\acRState{e+1}{r-1}$ would then be followed by $\acRState{e+2}{r}$.
In this case, the worst-case estimated latency is $\delta_u+2\delta_r$.
Note that if $\delta_r > 2\delta_u$, then in the case of a single failure, 
rerunning reliably is more expensive, i.e., $\delta_u+2\delta_r$ as compared with $3\delta_u+\delta_r$.
Another optimization is to rerun all (not A-delivered) unreliable rounds in one single reliable 
round. 
Let $\overline{\delta_r}$ denote the expected time of such a round; then, the worst-case estimated latency is $2\delta_u+\overline{\delta_r}$.
Note that if a skip transition is triggered while two rounds $r-1$ and $r$ are rerun in one reliable round, only round $r-1$ must be A-delivered. 
\fi

\section{Related work}


D{\'e}fago, Schiper, and Urb\'{a}n provide a general overview of atomic broadcast algorithms~\cite{Defago:2004:TOB:1041680.1041682}. 
Based on how total order is established, they consider five classes of atomic broadcast algorithms: 
fixed sequencer; moving sequencer; privilege-based; communication history; and destinations agreement. 

The first three classes, i.e., fixed sequencer~\cite{Birman:1991:LCA:128738.128742,Junqueira:2011:ZHB:2056308.2056409}, 
moving sequencer~\cite{Chang:1984:RBP:989.357400,Mao:2008:MBE:1855741.1855767}, and privilege-based~\cite{Amir:1995:TSO:210223.210224,Ekwall:2004:TAB:1032662.1034348} 
rely on a distinguished server to provide total order. 
In fixed sequencer algorithms, the order is established by an elected sequencer 
that holds the responsibility until it is suspected of having failed.
Moving sequencer algorithms, are similar, but the role of sequencer is transfered between servers, in order to distribute the load among them.
In privilege-based algorithms, senders broadcast when they are given the privilege to do so; the privilege is given to one sender at a time. 
For these algorithm classes, the work is unbalanced---the distinguished server is on the critical path for all communication, 
leading to linear work per A-broadcast message (for straightforward implementations).
\ACplus{} is leaderless; it balances the work evenly among all servers. Thus, it achieves sublinear work per A-broadcast message.

The last two classes are leaderless---total order is determined without a leader, either by the senders (communication history) or by the destinations (destinations agreement). 
Usually, in communication history algorithms, A-broadcast messages carry logical timestamps (we do not survey algorithms that rely on physical timestamps). 
The servers use these timestamps to decide when to safely deliver messages. 
D{\'e}fago, Schiper, and Urb\'{a}n  distinguish between causal history and deterministic merge algorithms~\cite{Defago:2004:TOB:1041680.1041682}.
\ifdefined\TECHREP
Causal history algorithms~\cite{Peterson:1989:PUC:65000.65001,Ezhilchelvan:1995:NFG:876885.880005,ng91ordered,dolev1993early,Moser:1993:AFT:161952.161958,Keidar2000,Guerraoui:2010:TOT:1813654.1813656} 
\fi
\ifdefined\PAPER
Causal history algorithms~\cite{Peterson:1989:PUC:65000.65001,Keidar2000,Guerraoui:2010:TOT:1813654.1813656} 
\fi
transform the partial order provided by the timestamps into total order (i.e., the causal order~\cite{Lamport:1978:TCO:359545.359563} is extended by ordering concurrent messages~\cite{Birman:1991:LCA:128738.128742}).
Yet, in such algorithms, the timestamps must provide information on every participating server; thus, the size of every message is linear in $n$.
In \ACplus{}, the size of messages is constant.

In deterministic merge algorithms~\cite{Bar-Joseph:2002:EDA:645959.676132,Chockler:1998:ATO:277697.277741}, 
the messages are timestamped independently (i.e., no causal order) and delivered according to a deterministic policy of merging 
the streams of messages coming from each server. \ACplus{} can be classified as a deterministic merge algorithm---every message is 
timestamped with the round number and the merging policy is round-robin (except for lost messages). 
The Atom algorithm~\cite{Bar-Joseph:2002:EDA:645959.676132} uses the same merging policy.
Yet, it uses an early-deciding mechanism that entails waiting for the worst-case given the actual number of failures~\cite{Dolev:1990:ESB:96559.96565}. 
\ACplus{}'s early termination mechanism does not require waiting for the worst case.
Also, the overlay network in Atom is described by a complete digraph.
Finally, methods to decrease latency by adaptively changing the merging policy (e.g., based on sending rates~\cite{Chockler:1998:ATO:277697.277741}) 
can also be applied to \ACplus{} 
(e.g., servers with slow sending rates can skip rounds). 

Conceptually, destinations agreement algorithms do not require a leader for establishing total order---the destination servers reach an agreement on the delivery order. 
In many existing destinations agreement algorithms, the servers reach agreement either through centralized mechanisms~\cite{Birman:1987:RCP:7351.7478,Luan:1990:FPA:628893.628998} 
\ifdefined\TECHREP
or by solving consensus~\cite{Chandra:1996:UFD:226643.226647,Mostefaoui:2000:LCC:826038.826941,Rodrigues:2000:ABA:850927.851790,Anceaume:1997:LSU:795670.796861}.
\fi
\ifdefined\PAPER
or by solving consensus~\cite{Chandra:1996:UFD:226643.226647,Rodrigues:2000:ABA:850927.851790}.
\fi
Most practical consensus algorithms rely on 
leader-based approaches~\cite{Lamport:1998:PP:279227.279229,Liskov2012,Junqueira:2011:ZHB:2056308.2056409,Ongaro2014,Poke:2015:DHS:2749246.2749267}, 
resulting in centralized destinations agreement algorithms and thus, unbalanced work.
\ACplus{} can be also classified as a destinations agreement algorithm---in every round, the servers agree on a set of messages to A-deliver.
Yet, in \ACplus{}, the servers reach agreement through a completely decentralized mechanism. 

\ifdefined\TECHREP
Moreover, some destinations agreement algorithms rely on the spontaneous total-order property 
(i.e., with high probability, messages broadcast in local-area networks are received in total order)
as a condition for message delivery~\cite{rodrigues1992xamp}, as an optimization~\cite{Pedone:2003:OAB:795635.795644}, 
or as an alternative to overcome the FLP~\cite{Fischer:1985:IDC:3149.214121} impossibility result~\cite{Pedone:2002:SAP:645333.649852}.
Yet, breaking this property leads to potential livelocks~\cite{rodrigues1992xamp}, the need of solving consensus~\cite{Pedone:2003:OAB:795635.795644}, 
or unbounded runs~\cite{Pedone:2002:SAP:645333.649852}. 
\ACplus{} makes no assumption of the order in which messages are received. 
\fi

\ifdefined\TECHREP
Finally, some algorithms can fit multiple classes (so called hybrid algorithms).
\fi
\ifdefined\PAPER
Finally, some algorithms can fit multiple classes.
\fi
For example, Ring-Paxos~\cite{ringpaxos} 
relies on a coordinator, but the communication is done using a logical ring, similarly to the majority of privilege-based algorithms. 
By placing the servers in a ring, Ring-Paxos achieves high-throughput~\cite{Guerraoui:2010:TOT:1813654.1813656}.
Yet, the ring topology is not suitable for large scales---the latency of message dissemination is linear in~$n$.
In \ACplus{}, servers communicate via an overlay network described by 
\ifdefined\TECHREP
any resilient digraph\footnote{The requirement for the digraph to be sparse 
is only needed for performance, i.e., it ensures that the work per broadcast message is sublinear. This requirement is not necessary for guaranteeing safety.};
\fi
\ifdefined\PAPER
any resilient digraph;
\fi
moreover, during intervals with no failures, the digraph must only be connected.
Thus, \ACplus{} enable the trade-off between high-throughput and low-latency topologies. 
\ifdefined\TECHREP
Clearly, an alternative is to use Ring-Paxos as a reliable sequencer that enable a large group of $n$ servers to reach agreement on the order of messages.
Yet, in such a deployment, the high-throughput of Ring-Paxos results only in a constant increase in performance (given by the number of servers running Ring-Paxos).
The work per A-broadcast message is still linear in~$n$.
\fi

\section{Conclusion}

In this paper we describe \ACplus{}, a leaderless concurrent atomic broadcast algorithm that 
provides high-performance fault-tolerant distributed agreement. 
%
During intervals with no failures, \ACplus{} runs in an unreliable mode
that enables minimal-work distributed agreement; when failures do occur, it automatically and safely switches to a reliable mode 
that uses the early termination mechanism of \AC{}~\cite{poke2017allconcur}.
We provide a complete design of \ACplus{} that shows how to transition from one mode to another and 
an informal proof of \ACplus{}'s correctness.

\ifdefined\TECHREP
Our performance evaluation\footnote{Our evaluation is based on the discrete event simulator OMNeT++~\cite{Varga:2008:OOS:1416222.1416290}.} 
\fi
\ifdefined\PAPER
Our performance evaluation 
\fi
of non-failure scenarios demonstrates that \ACplus{} 
achieves high throughput and low latency while scaling out to hundreds of servers deployed both inside a single datacenter and 
across multiple (geographically distributed) datacenters.
It is therefore especially well-suited for large-scale distributed agreement applications, such as distributed ledgers. 
For instance, running on $225$ servers distributed throughout five datacenters across Europe, 
with each server maintaining a copy of the ledger, 
\ACplus{} reaches a throughput of $\approx 26,000$ ($250$-byte) transactions per second, with a latency of less than $65$ms.

Overall, \ACplus{} achieves comparable throughput to AllGather~\cite{mpi-3.1}, a non-fault-tolerant distributed agreement algorithm, 
and, especially at scale, significantly higher throughput and lower latency than 
\ifdefined\TECHREP
other fault-tolerant algorithms, such as \AC{}~\cite{poke2017allconcur}, LCR~\cite{Guerraoui:2010:TOT:1813654.1813656} and Libpaxos~\cite{libpaxos}. 
\fi
\ifdefined\PAPER
other fault-tolerant algorithms. 
\fi
Moreover, our evaluation of various failure scenarios shows that \ACplus{}'s expected performance is robust with regard to occasional failures.


{\small
\textbf{Acknowledgements.}
This work was supported by the German Research Foundation (DFG) as part of the Cluster 
of Excellence in Simulation Technology (EXC 310/2) at the University of Stuttgart.
We thank Michael Resch for support; and both Jos\'{e} Gracia and George Cristian Tudor for helpful discussions. 
}

{
\ifdefined\TECHREP
\bibliographystyle{abbrv}
\bibliography{references}
\fi
\ifdefined\PAPER
\bibliographystyle{IEEEtran}
\bibliography{IEEEabrv,references}

\begin{thebibliography}{10}

\bibitem{Aguilera:1998:SBP:866987}
M.~K. Aguilera and S.~Toueg.
\newblock {A simple bivalency proof that t-resilient consensus requires t+1
  rounds}.
\newblock {\em Information Processing Letters}, 71(3):155 -- 158, 1999.

\bibitem{Aho:1983:DSA:577958}
A.~V. Aho, J.~E. Hopcroft, and J.~Ullman.
\newblock {\em {Data Structures and Algorithms}}.
\newblock Addison-Wesley Longman Publishing Co., Inc., Boston, MA, USA, 1983.

\bibitem{Al-Fares:2008:SCD:1402946.1402967}
M.~Al-Fares, A.~Loukissas, and A.~Vahdat.
\newblock {A Scalable, Commodity Data Center Network Architecture}.
\newblock {\em SIGCOMM Comput. Commun. Rev.}, 38(4):63--74, Aug. 2008.

\bibitem{Amir:1995:TSO:210223.210224}
Y.~Amir, L.~E. Moser, P.~M. Melliar-Smith, D.~A. Agarwal, and P.~Ciarfella.
\newblock {The Totem Single-ring Ordering and Membership Protocol}.
\newblock {\em ACM Trans. Comput. Syst.}, 13(4):311--342, Nov. 1995.

\bibitem{Anceaume:1997:LSU:795670.796861}
E.~Anceaume.
\newblock {A Lightweight Solution to Uniform Atomic Broadcast for Asynchronous
  Systems}.
\newblock In {\em Proc. 27th Int. Symp. on Fault-Tolerant Comput.}, FTCS '97,
  Seattle, WA, USA, 1997.

\bibitem{Androulaki:2018:HFD:3190508.3190538}
E.~Androulaki, A.~Barger, V.~Bortnikov, C.~Cachin, K.~Christidis, A.~De~Caro,
  D.~Enyeart, C.~Ferris, G.~Laventman, Y.~Manevich, S.~Muralidharan, C.~Murthy,
  B.~Nguyen, M.~Sethi, G.~Singh, K.~Smith, A.~Sorniotti, C.~Stathakopoulou,
  M.~Vukoli\'{c}, S.~W. Cocco, and J.~Yellick.
\newblock {Hyperledger Fabric: A Distributed Operating System for Permissioned
  Blockchains}.
\newblock In {\em Proc. 13th EuroSys Conf.}, EuroSys '18, Porto, Portugal,
  2018.

\bibitem{Bar-Joseph:2002:EDA:645959.676132}
Z.~Bar-Joseph, I.~Keidar, and N.~A. Lynch.
\newblock {Early-Delivery Dynamic Atomic Broadcast}.
\newblock In {\em Proc. 16th Int. Conf. on Distrib. Comput.}, DISC '02,
  Toulouse, France, 2002.

\bibitem{Birman:1991:LCA:128738.128742}
K.~Birman, A.~Schiper, and P.~Stephenson.
\newblock {Lightweight Causal and Atomic Group Multicast}.
\newblock {\em ACM Trans. Comput. Syst.}, 9(3):272--314, Aug. 1991.

\bibitem{Birman:1987:RCP:7351.7478}
K.~P. Birman and T.~A. Joseph.
\newblock {Reliable Communication in the Presence of Failures}.
\newblock {\em ACM Trans. Comput. Syst.}, 5(1):47--76, Jan. 1987.

\bibitem{Burrows:2006:CLS:1298455.1298487}
M.~Burrows.
\newblock {The Chubby Lock Service for Loosely-coupled Distributed Systems}.
\newblock In {\em Proc. 7th Symp. on Operating Syst. Design and
  Implementation}, OSDI '06, Seattle, WA, USA, 2006.

\bibitem{Chandra:1996:IGM:248052.248120}
T.~D. Chandra, V.~Hadzilacos, S.~Toueg, and B.~Charron-Bost.
\newblock {On the Impossibility of Group Membership}.
\newblock In {\em Proc. 15th Annual ACM Symp. on Principles of Distrib.
  Comput.}, PODC '96, Philadelphia, PA, USA, 1996.

\bibitem{Chandra:1996:UFD:226643.226647}
T.~D. Chandra and S.~Toueg.
\newblock {Unreliable Failure Detectors for Reliable Distributed Systems}.
\newblock {\em J. ACM}, 43(2):225--267, Mar. 1996.

\bibitem{Chang:1984:RBP:989.357400}
J.~M. Chang and N.~F. Maxemchuk.
\newblock {Reliable Broadcast Protocols}.
\newblock {\em ACM Trans. Comput. Syst.}, 2(3):251--273, Aug. 1984.

\bibitem{Chockler:1998:ATO:277697.277741}
G.~V. Chockler, N.~Huleihel, and D.~Dolev.
\newblock {An Adaptive Totally Ordered Multicast Protocol That Tolerates
  Partitions}.
\newblock In {\em Proc. 17th Annual ACM Symp. on Principles of Distrib.
  Comput.}, PODC '98, Puerto Vallarta, Mexico, 1998.

\bibitem{Corbett:2013:SGG:2518037.2491245}
J.~C. Corbett, J.~Dean, M.~Epstein, A.~Fikes, C.~Frost, J.~J. Furman,
  S.~Ghemawat, A.~Gubarev, C.~Heiser, P.~Hochschild, W.~Hsieh, S.~Kanthak,
  E.~Kogan, H.~Li, A.~Lloyd, S.~Melnik, D.~Mwaura, D.~Nagle, S.~Quinlan,
  R.~Rao, L.~Rolig, Y.~Saito, M.~Szymaniak, C.~Taylor, R.~Wang, and
  D.~Woodford.
\newblock {Spanner: Google's Globally Distributed Database}.
\newblock {\em ACM Trans. Comput. Syst.}, 31(3):8:1--8:22, Aug. 2013.

\bibitem{Correia2010}
A.~Correia, J.~Pereira, L.~Rodrigues, N.~Carvalho, and R.~Oliveira.
\newblock {\em {Practical Database Replication}}, pages 253--285.
\newblock Springer Berlin Heidelberg, Berlin, Heidelberg, 2010.

\bibitem{Cristian1991}
F.~Cristian.
\newblock {Reaching agreement on processor-group membrship in synchronous
  distributed systems}.
\newblock {\em Distributed Computing}, 4(4):175--187, Dec. 1991.

\bibitem{Defago:2004:TOB:1041680.1041682}
X.~D{\'e}fago, A.~Schiper, and P.~Urb\'{a}n.
\newblock {Total Order Broadcast and Multicast Algorithms: Taxonomy and
  Survey}.
\newblock {\em ACM Comput. Surv.}, 36(4):372--421, Dec. 2004.

\bibitem{dolev1993early}
D.~Dolev, S.~Kramer, and D.~Malki.
\newblock {Early delivery totally ordered multicast in asynchronous
  environments}.
\newblock In {\em Proc. 23rd Int. Symp. on Fault-Tolerant Comput.}, FTCS '23,
  Toulouse, France, June 1993.

\bibitem{Dolev:1990:ESB:96559.96565}
D.~Dolev, R.~Reischuk, and H.~R. Strong.
\newblock {Early Stopping in Byzantine Agreement}.
\newblock {\em J. ACM}, 37(4):720--741, Oct. 1990.

\bibitem{Dwork:1988:CPP:42282.42283}
C.~Dwork, N.~Lynch, and L.~Stockmeyer.
\newblock {Consensus in the Presence of Partial Synchrony}.
\newblock {\em J. ACM}, 35(2):288--323, Apr. 1988.

\bibitem{Ekwall:2004:TAB:1032662.1034348}
R.~Ekwall, A.~Schiper, and P.~Urban.
\newblock {Token-based Atomic Broadcast Using Unreliable Failure Detectors}.
\newblock In {\em Proc. 23rd IEEE Int. Symp. on Reliable Distrib. Syst.}, SRDS
  '04, Florianpolis, Brazil, 2004.

\bibitem{Ezhilchelvan:1995:NFG:876885.880005}
P.~D. Ezhilchelvan, R.~A. Macedo, and S.~K. Shrivastava.
\newblock {Newtop: A Fault-tolerant Group Communication Protocol}.
\newblock In {\em Proc. 15th Int. Conf. on Distrib. Comput. Syst.}, ICDCS '95,
  Vancouver, BC, Canada, 1995.

\bibitem{Fischer:1985:IDC:3149.214121}
M.~J. Fischer, N.~A. Lynch, and M.~S. Paterson.
\newblock {Impossibility of Distributed Consensus with One Faulty Process}.
\newblock {\em J. ACM}, 32(2):374--382, Apr. 1985.

\bibitem{tsubame}
{Global Scientific Information and Computing Center}.
\newblock {Failure History of TSUBAME2.0 and TSUBAME2.5}, 2014.
\newblock \url{http://mon.g.gsic.titech.ac.jp/trouble-list/index.htm}.

\bibitem{Guerraoui:2010:TOT:1813654.1813656}
R.~Guerraoui, R.~R. Levy, B.~Pochon, and V.~Qu{\'e}ma.
\newblock {Throughput Optimal Total Order Broadcast for Cluster Environments}.
\newblock {\em ACM Trans. Comput. Syst.}, 28(2):5:1--5:32, July 2010.

\bibitem{Hadzilacos:1994:MAF:866693}
V.~Hadzilacos and S.~Toueg.
\newblock {A Modular Approach to Fault-Tolerant Broadcasts and Related
  Problems}.
\newblock Technical report, Cornell University, Ithaca, NY, USA, 1994.

\bibitem{hoefler-moor-collectives}
T.~Hoefler and D.~Moor.
\newblock {Energy, Memory, and Runtime Tradeoffs for Implementing Collective
  Communication Operations}.
\newblock {\em Supercomput. Front. Innov.: Int. J.}, 1(2):58--75, July 2014.

\bibitem{Hunt:2010:ZWC:1855840.1855851}
P.~Hunt, M.~Konar, F.~P. Junqueira, and B.~Reed.
\newblock {ZooKeeper: Wait-free Coordination for Internet-scale Systems}.
\newblock In {\em Proc. 2010 USENIX Annual Tech. Conf.}, ATC'10, Boston, MA,
  USA, 2010.

\bibitem{Junqueira:2011:ZHB:2056308.2056409}
F.~P. Junqueira, B.~C. Reed, and M.~Serafini.
\newblock {Zab: High-performance Broadcast for Primary-backup Systems}.
\newblock In {\em Proc. 2011 IEEE/IFIP 41st Int. Conf. on Dependable Syst. and
  Networks}, DSN '11, Hong Kong, China, 2011.

\bibitem{Keidar2000}
I.~Keidar and D.~Dolev.
\newblock {\em {Totally Ordered Broadcast in the Face of Network Partitions}},
  pages 51--75.
\newblock Boston, MA, 2000.

\bibitem{Krishnamoorthy:1987:FDI:35064.36256}
M.~S. Krishnamoorthy and B.~Krishnamurthy.
\newblock {Fault Diameter of Interconnection Networks}.
\newblock {\em Comput. Math. Appl.}, 13(5-6):577--582, Apr. 1987.

\bibitem{Lamport:1978:TCO:359545.359563}
L.~Lamport.
\newblock {Time, Clocks, and the Ordering of Events in a Distributed System}.
\newblock {\em Commun. ACM}, 21(7):558--565, July 1978.

\bibitem{Lamport:1998:PP:279227.279229}
L.~Lamport.
\newblock {The Part-time Parliament}.
\newblock {\em ACM Trans. Comput. Syst.}, 16(2):133--169, May 1998.

\bibitem{lamport1982byzantine}
L.~Lamport and M.~Fischer.
\newblock {Byzantine generals and transaction commit protocols}.
\newblock Technical Report~62, SRI International, Apr. 1982.

\bibitem{Liskov2012}
B.~Liskov and J.~Cowling.
\newblock {Viewstamped Replication Revisited}.
\newblock Technical Report MIT-CSAIL-TR-2012-021, MIT, July 2012.

\bibitem{Luan:1990:FPA:628893.628998}
S.~W. Luan and V.~D. Gligor.
\newblock {A Fault-Tolerant Protocol for Atomic Broadcast}.
\newblock {\em IEEE Trans. Parallel Distrib. Syst.}, 1(3):271--285, July 1990.

\bibitem{Mao:2008:MBE:1855741.1855767}
Y.~Mao, F.~P. Junqueira, and K.~Marzullo.
\newblock {Mencius: Building Efficient Replicated State Machines for WANs}.
\newblock In {\em Proc. 8th USENIX Conf. on Operating Syst. Design and
  Implementation}, OSDI'08, San Diego, CA, USA, 2008.

\bibitem{ringpaxos}
P.~J. Marandi, M.~Primi, N.~Schiper, and F.~Pedone.
\newblock {Ring Paxos: A high-throughput atomic broadcast protocol}.
\newblock In {\em 2010 IEEE/IFIP Int. Conf. on Dependable Syst. and Networks},
  DSN, Chicago, IL, USA, June 2010.

\bibitem{Meyer:1988:FFG:47054.47067}
F.~J. Meyer and D.~K. Pradhan.
\newblock {Flip-Trees: Fault-Tolerant Graphs with Wide Containers}.
\newblock {\em IEEE Trans. Comput.}, 37(4):472--478, Apr. 1988.

\bibitem{Moser:1993:AFT:161952.161958}
L.~E. Moser, P.~M. Melliar-Smith, and V.~Agrawala.
\newblock {Asynchronous Fault-tolerant Total Ordering Algorithms}.
\newblock {\em SIAM J. Comput.}, 22(4):727--750, Aug. 1993.

\bibitem{Mostefaoui:2000:LCC:826038.826941}
A.~Mostefaoui and M.~Raynal.
\newblock {Low Cost Consensus-based Atomic Broadcast}.
\newblock In {\em Proc. 2000 Pacific Rim Int. Symp. on Dependable Comput.},
  PRDC '00, Los Angeles, CA, USA, 2000.

\bibitem{mpi-3.1}
{MPI Forum}.
\newblock {MPI: A Message-Passing Interface Standard Version 3.1}, June 2015.
\newblock \url{http://mpi-forum.org/docs/mpi-3.1/mpi31-report.pdf}.

\bibitem{bitcoin}
S.~Nakamoto.
\newblock {Bitcoin: A Peer-to-Peer Electronic Cash System}, 2008.
\newblock \url{http://bitcoin.org/bitcoin.pdf}.

\bibitem{ng91ordered}
T.~P. Ng.
\newblock {Ordered broadcasts for large applications}.
\newblock In {\em Proc. 10th Symp. on Reliable Distrib. Syst.}, SRDS '91, Pisa,
  Italy, Sep 1991.

\bibitem{Ongaro2014}
D.~Ongaro and J.~Ousterhout.
\newblock {In Search of an Understandable Consensus Algorithm}.
\newblock In {\em Proc. 2014 USENIX Annual Tech. Conf.}, ATC'14, Philadelphia,
  PA, USA, June 2014.

\bibitem{Pedone:2003:OAB:795635.795644}
F.~Pedone and A.~Schiper.
\newblock {Optimistic Atomic Broadcast: A Pragmatic Viewpoint}.
\newblock {\em Theor. Comput. Sci.}, 291(1):79--101, Jan. 2003.

\bibitem{Pedone:2002:SAP:645333.649852}
F.~Pedone, A.~Schiper, P.~Urb\'{a}n, and D.~Cavin.
\newblock {Solving Agreement Problems with Weak Ordering Oracles}.
\newblock In {\em Proc. 4th European Dependable Comput. Conf. on Dependable
  Comput.}, EDCC-4, Toulouse, France, 2002.

\bibitem{Peterson:1989:PUC:65000.65001}
L.~L. Peterson, N.~C. Buchholz, and R.~D. Schlichting.
\newblock {Preserving and Using Context Information in Interprocess
  Communication}.
\newblock {\em ACM Trans. Comput. Syst.}, 7(3):217--246, Aug. 1989.

\bibitem{allconcur_tla}
M.~Poke and C.~W. Glass.
\newblock {Formal Specification and Safety Proof of a Leaderless Concurrent
  Atomic Broadcast Algorithm}.
\newblock {\em CoRR}, abs/1708.04863, 2017.

\bibitem{Poke:2015:DHS:2749246.2749267}
M.~Poke and T.~Hoefler.
\newblock {DARE: High-Performance State Machine Replication on RDMA Networks}.
\newblock In {\em Proc. 24th Int. Symp. on High-Perform. Parallel and Distrib.
  Comput.}, HPDC '15, Portland, OR, USA, 2015.

\bibitem{poke2017allconcur}
M.~Poke, T.~Hoefler, and C.~W. Glass.
\newblock {AllConcur: Leaderless Concurrent Atomic Broadcast}.
\newblock In {\em Proc. 26th Int. Symp. on High-Perform. Parallel and Distrib.
  Comput.}, HPDC '17, Washington, DC, USA, 2017.

\bibitem{Rodrigues:2000:ABA:850927.851790}
L.~Rodrigues and M.~Raynal.
\newblock {Atomic Broadcast in Asynchronous Crash-Recovery Distributed
  Systems}.
\newblock In {\em Proc. 20th Int. Conf. on Distrib. Comput. Syst.}, ICDCS '00,
  Taipei, Taiwan, 2000.

\bibitem{rodrigues1992xamp}
L.~Rodrigues and P.~Verissimo.
\newblock {xAMp: a multi-primitive group communications service}.
\newblock In {\em Proc. 11th Symp. on Reliable Distrib. Syst.}, SRDS '92,
  Houston, TX, USA, 1992.

\bibitem{libpaxos}
D.~Sciascia.
\newblock {Libpaxos3}, 2013.
\newblock \url{http://libpaxos.sourceforge.net/paxos_projects.php}.

\bibitem{Soneoka:1996:DDC:227095.227101}
T.~Soneoka, M.~Imase, and Y.~Manabe.
\newblock {Design of a d-connected digraph with a minimum number of edges and a
  quasiminimal diameter II}.
\newblock {\em Discrete Appl. Math.}, 64(3):267--279, Feb. 1996.

\bibitem{Unterbrunner2014}
P.~Unterbrunner, G.~Alonso, and D.~Kossmann.
\newblock {High Availability, Elasticity, and Strong Consistency for Massively
  Parallel Scans over Relational Data}.
\newblock {\em The VLDB Journal}, 23(4):627--652, Aug. 2014.

\bibitem{Varga:2008:OOS:1416222.1416290}
A.~Varga and R.~Hornig.
\newblock {An Overview of the OMNeT++ Simulation Environment}.
\newblock In {\em Proc. 1st Int. Conf. on Simul. Tools and Techniques for
  Commun., Networks and Syst. \& Workshops}, Simutools '08, Marseille, France,
  2008.

\end{thebibliography}
\fi
}

\newpage

\ifdefined\APPENDIX
\appendix


\subsection{Concurrent states: description and properties}

\subsubsection{Description of concurrent states}
\label{app:conc_states}

\begin{figure*}[!tp]
\captionsetup[subfigure]{justification=centering}
\centering
\subcaptionbox{$p_j$ in $\acState{e}{r}$\label{fig:conc_states_unrel_t1}} {
\scalebox{0.80}{\input{figures/conc_states_unrel_t1.tex}}
} \hfill
\subcaptionbox{$p_j$'s $\acRState{e}{r}$ state is preceded by $\tur{}$\label{fig:conc_states_rel_t3}} {
\scalebox{0.80}{\input{figures/conc_states_rel_t3.tex}}
}\\[+1ex]

\subcaptionbox{$p_j$'s $\acRState{e}{r}$ state is preceded by $\tfr{}$\label{fig:conc_states_rel_t4}} {
\scalebox{0.80}{\input{figures/conc_states_rel_t4.tex}}
} \hfill
\subcaptionbox{$p_j$ in $\acFState{e}{r}$\label{fig:conc_states_unrel_t2}} {
\scalebox{0.80}{
\ifx\du\undefined
  \newlength{\du}
\fi
\setlength{\du}{15\unitlength}
\begin{tikzpicture}
\pgftransformxscale{1.000000}
\pgftransformyscale{-1.000000}
\definecolor{dialinecolor}{rgb}{0.000000, 0.000000, 0.000000}
\pgfsetstrokecolor{dialinecolor}
\definecolor{dialinecolor}{rgb}{1.000000, 1.000000, 1.000000}
\pgfsetfillcolor{dialinecolor}
\pgfsetlinewidth{0.050000\du}
\pgfsetdash{}{0pt}
\pgfsetdash{}{0pt}
\pgfsetroundjoin
{\pgfsetcornersarced{\pgfpoint{0.200000\du}{0.200000\du}}\definecolor{dialinecolor}{rgb}{0.847059, 0.847059, 0.847059}
\pgfsetfillcolor{dialinecolor}
\fill (17.100000\du,12.500000\du)--(17.100000\du,13.500000\du)--(18.700000\du,13.500000\du)--(18.700000\du,12.500000\du)--cycle;
}{\pgfsetcornersarced{\pgfpoint{0.200000\du}{0.200000\du}}\definecolor{dialinecolor}{rgb}{0.301961, 0.301961, 0.301961}
\pgfsetstrokecolor{dialinecolor}
\draw (17.100000\du,12.500000\du)--(17.100000\du,13.500000\du)--(18.700000\du,13.500000\du)--(18.700000\du,12.500000\du)--cycle;
}\pgfsetlinewidth{0.050000\du}
\pgfsetdash{}{0pt}
\pgfsetdash{}{0pt}
\pgfsetbuttcap
{
\definecolor{dialinecolor}{rgb}{0.301961, 0.301961, 0.301961}
\pgfsetfillcolor{dialinecolor}
\pgfsetarrowsstart{stealth}
\definecolor{dialinecolor}{rgb}{0.301961, 0.301961, 0.301961}
\pgfsetstrokecolor{dialinecolor}
\draw (17.100000\du,13.000000\du)--(16.100000\du,13.000000\du);
}
\pgfsetlinewidth{0.050000\du}
\pgfsetdash{}{0pt}
\pgfsetdash{}{0pt}
\pgfsetroundjoin
{\pgfsetcornersarced{\pgfpoint{0.200000\du}{0.200000\du}}\definecolor{dialinecolor}{rgb}{0.847059, 0.847059, 0.847059}
\pgfsetfillcolor{dialinecolor}
\fill (19.700000\du,12.500000\du)--(19.700000\du,13.500000\du)--(22.300000\du,13.500000\du)--(22.300000\du,12.500000\du)--cycle;
}{\pgfsetcornersarced{\pgfpoint{0.200000\du}{0.200000\du}}\definecolor{dialinecolor}{rgb}{0.301961, 0.301961, 0.301961}
\pgfsetstrokecolor{dialinecolor}
\draw (19.700000\du,12.500000\du)--(19.700000\du,13.500000\du)--(22.300000\du,13.500000\du)--(22.300000\du,12.500000\du)--cycle;
}\pgfsetlinewidth{0.050000\du}
\pgfsetdash{}{0pt}
\pgfsetdash{}{0pt}
\pgfsetbuttcap
{
\definecolor{dialinecolor}{rgb}{0.301961, 0.301961, 0.301961}
\pgfsetfillcolor{dialinecolor}
\pgfsetarrowsstart{stealth}
\definecolor{dialinecolor}{rgb}{0.301961, 0.301961, 0.301961}
\pgfsetstrokecolor{dialinecolor}
\draw (19.700000\du,13.000000\du)--(18.700000\du,13.000000\du);
}
\pgfsetlinewidth{0.050000\du}
\pgfsetdash{}{0pt}
\pgfsetdash{}{0pt}
\pgfsetroundjoin
{\pgfsetcornersarced{\pgfpoint{0.200000\du}{0.200000\du}}\definecolor{dialinecolor}{rgb}{0.749020, 0.749020, 0.749020}
\pgfsetfillcolor{dialinecolor}
\fill (23.500000\du,12.500000\du)--(23.500000\du,13.500000\du)--(26.100000\du,13.500000\du)--(26.100000\du,12.500000\du)--cycle;
}{\pgfsetcornersarced{\pgfpoint{0.200000\du}{0.200000\du}}\definecolor{dialinecolor}{rgb}{0.301961, 0.301961, 0.301961}
\pgfsetstrokecolor{dialinecolor}
\draw (23.500000\du,12.500000\du)--(23.500000\du,13.500000\du)--(26.100000\du,13.500000\du)--(26.100000\du,12.500000\du)--cycle;
}\pgfsetlinewidth{0.050000\du}
\pgfsetdash{}{0pt}
\pgfsetdash{}{0pt}
\pgfsetroundjoin
{\pgfsetcornersarced{\pgfpoint{0.100000\du}{0.100000\du}}\definecolor{dialinecolor}{rgb}{0.847059, 0.847059, 0.847059}
\pgfsetfillcolor{dialinecolor}
\fill (23.600000\du,12.600000\du)--(23.600000\du,13.400000\du)--(26.000000\du,13.400000\du)--(26.000000\du,12.600000\du)--cycle;
}{\pgfsetcornersarced{\pgfpoint{0.100000\du}{0.100000\du}}\definecolor{dialinecolor}{rgb}{0.301961, 0.301961, 0.301961}
\pgfsetstrokecolor{dialinecolor}
\draw (23.600000\du,12.600000\du)--(23.600000\du,13.400000\du)--(26.000000\du,13.400000\du)--(26.000000\du,12.600000\du)--cycle;
}\pgfsetlinewidth{0.050000\du}
\pgfsetdash{}{0pt}
\pgfsetdash{}{0pt}
\pgfsetmiterjoin
\pgfsetbuttcap
{
\definecolor{dialinecolor}{rgb}{0.301961, 0.301961, 0.301961}
\pgfsetfillcolor{dialinecolor}
\pgfsetarrowsend{stealth}
{\pgfsetcornersarced{\pgfpoint{0.000000\du}{0.000000\du}}\definecolor{dialinecolor}{rgb}{0.301961, 0.301961, 0.301961}
\pgfsetstrokecolor{dialinecolor}
\draw (14.800000\du,13.500000\du)--(14.800000\du,14.300000\du)--(24.800000\du,14.300000\du)--(24.800000\du,13.500000\du);
}}
\pgfsetlinewidth{0.050000\du}
\pgfsetdash{}{0pt}
\pgfsetdash{}{0pt}
\pgfsetbuttcap
{
\definecolor{dialinecolor}{rgb}{0.301961, 0.301961, 0.301961}
\pgfsetfillcolor{dialinecolor}
\pgfsetarrowsstart{stealth}
\definecolor{dialinecolor}{rgb}{0.301961, 0.301961, 0.301961}
\pgfsetstrokecolor{dialinecolor}
\draw (23.500000\du,13.000000\du)--(22.300000\du,13.000000\du);
}
\pgfsetlinewidth{0.050000\du}
\pgfsetdash{}{0pt}
\pgfsetdash{}{0pt}
\pgfsetmiterjoin
\pgfsetbuttcap
{
\definecolor{dialinecolor}{rgb}{0.301961, 0.301961, 0.301961}
\pgfsetfillcolor{dialinecolor}
\pgfsetarrowsend{stealth}
{\pgfsetcornersarced{\pgfpoint{0.000000\du}{0.000000\du}}\definecolor{dialinecolor}{rgb}{0.301961, 0.301961, 0.301961}
\pgfsetstrokecolor{dialinecolor}
\draw (17.900000\du,12.500000\du)--(17.900000\du,11.700000\du)--(24.800000\du,11.700000\du)--(24.800000\du,12.500000\du);
}}
\pgfsetlinewidth{0.050000\du}
\pgfsetdash{}{0pt}
\pgfsetdash{}{0pt}
\pgfsetroundjoin
{\pgfsetcornersarced{\pgfpoint{0.200000\du}{0.200000\du}}\definecolor{dialinecolor}{rgb}{0.749020, 0.749020, 0.749020}
\pgfsetfillcolor{dialinecolor}
\fill (13.500000\du,12.500000\du)--(13.500000\du,13.500000\du)--(16.100000\du,13.500000\du)--(16.100000\du,12.500000\du)--cycle;
}{\pgfsetcornersarced{\pgfpoint{0.200000\du}{0.200000\du}}\definecolor{dialinecolor}{rgb}{0.301961, 0.301961, 0.301961}
\pgfsetstrokecolor{dialinecolor}
\draw (13.500000\du,12.500000\du)--(13.500000\du,13.500000\du)--(16.100000\du,13.500000\du)--(16.100000\du,12.500000\du)--cycle;
}\pgfsetlinewidth{0.050000\du}
\pgfsetdash{}{0pt}
\pgfsetdash{}{0pt}
\pgfsetroundjoin
{\pgfsetcornersarced{\pgfpoint{0.100000\du}{0.100000\du}}\definecolor{dialinecolor}{rgb}{0.847059, 0.847059, 0.847059}
\pgfsetfillcolor{dialinecolor}
\fill (13.600000\du,12.600000\du)--(13.600000\du,13.400000\du)--(16.000000\du,13.400000\du)--(16.000000\du,12.600000\du)--cycle;
}{\pgfsetcornersarced{\pgfpoint{0.100000\du}{0.100000\du}}\definecolor{dialinecolor}{rgb}{0.301961, 0.301961, 0.301961}
\pgfsetstrokecolor{dialinecolor}
\draw (13.600000\du,12.600000\du)--(13.600000\du,13.400000\du)--(16.000000\du,13.400000\du)--(16.000000\du,12.600000\du)--cycle;
}\pgfsetlinewidth{0.070000\du}
\pgfsetdash{{1.000000\du}{1.000000\du}}{0\du}
\pgfsetdash{{0.200000\du}{0.200000\du}}{0\du}
\pgfsetmiterjoin
\pgfsetbuttcap
{
\definecolor{dialinecolor}{rgb}{0.533333, 0.356863, 0.035294}
\pgfsetfillcolor{dialinecolor}
\pgfsetarrowsend{stealth}
\definecolor{dialinecolor}{rgb}{0.533333, 0.356863, 0.035294}
\pgfsetstrokecolor{dialinecolor}
\pgfpathmoveto{\pgfpoint{15.700000\du}{12.300000\du}}
\pgfpathcurveto{\pgfpoint{15.900000\du}{11.900000\du}}{\pgfpoint{17.000000\du}{11.900000\du}}{\pgfpoint{17.500000\du}{12.400000\du}}
\pgfusepath{stroke}
}
\definecolor{dialinecolor}{rgb}{0.000000, 0.000000, 0.000000}
\pgfsetstrokecolor{dialinecolor}
\node at (14.800000\du,12.995000\du){$e : r\!-\!1$};
\definecolor{dialinecolor}{rgb}{0.000000, 0.000000, 0.000000}
\pgfsetstrokecolor{dialinecolor}
\node at (17.900000\du,12.995000\du){$e : r$};
\definecolor{dialinecolor}{rgb}{0.000000, 0.000000, 0.000000}
\pgfsetstrokecolor{dialinecolor}
\node at (21.000000\du,12.995000\du){$e : r\!+\!1$};
\definecolor{dialinecolor}{rgb}{0.000000, 0.000000, 0.000000}
\pgfsetstrokecolor{dialinecolor}
\node at (24.800000\du,12.995000\du){$e\!+\!1 : r$};
\definecolor{dialinecolor}{rgb}{0.000000, 0.000000, 0.000000}
\pgfsetstrokecolor{dialinecolor}
\node at (15.300000\du,13.922500\du){\small{fail}};
\definecolor{dialinecolor}{rgb}{0.000000, 0.000000, 0.000000}
\pgfsetstrokecolor{dialinecolor}
\node at (22.800000\du,12.622500\du){\small{fail}};
\definecolor{dialinecolor}{rgb}{0.000000, 0.000000, 0.000000}
\pgfsetstrokecolor{dialinecolor}
\node at (18.400000\du,12.022500\du){\small{fail}};
\pgfsetlinewidth{0.000000\du}
\pgfsetdash{}{0pt}
\pgfsetdash{}{0pt}
\pgfsetmiterjoin
\pgfsetbuttcap
\definecolor{dialinecolor}{rgb}{0.301961, 0.301961, 0.301961}
\pgfsetfillcolor{dialinecolor}
\fill (18.600000\du,13.300000\du)--(18.400000\du,13.200000\du)--(18.400000\du,13.400000\du)--cycle;
\definecolor{dialinecolor}{rgb}{0.301961, 0.301961, 0.301961}
\pgfsetstrokecolor{dialinecolor}
\draw (18.600000\du,13.300000\du)--(18.400000\du,13.200000\du)--(18.400000\du,13.400000\du)--cycle;
\end{tikzpicture}}
}\\[+1ex] 

\subcaptionbox{$p_j$'s $\acRState{e}{r}$ state is preceded by $\trr{}$\label{fig:conc_states_rel_t5}} {
\scalebox{0.80}{\input{figures/conc_states_rel_t5.tex}}
} \hfill
\subcaptionbox{$p_j$'s $\acRState{e}{r}$ state is preceded by $\tsk{}$\label{fig:conc_states_rel_t6}} {
\scalebox{0.80}{\input{figures/conc_states_rel_t6.tex}}
}
\caption{The possible states of a non-faulty server $p_i$, while a non-faulty server $p_j$ is in epoch $e$ and round $r$. 
Boxes indicate states---single-edged for unreliable rounds and double-edged for reliable rounds. 
A triangle in the lower-right corner indicates the first state in a sequence of unreliable rounds.
Straight arrows indicate $p_i$'s possible transitions; curved dashed arrows indicate $p_j$'s latest transition.}
\label{fig:conc_states}
\end{figure*}

We are interested in what states a non-faulty server $p_i$ can be, given that a non-faulty server $p_j$ is either in $\acState{e}{r}$ or in $\acRState{e}{r}$.
First, let $p_j$ be in $\acState{e}{r}$. Then, $\acState{e}{r}$ is preceded by either $\tuu{}$ or $\trf{}$
(see the curved dashed arrows in Figures~\ref{fig:conc_states_unrel_t1} and~\ref{fig:conc_states_unrel_t2}).
$\tuu{}$ entails $p_i$ has started $\acState{e}{r-1}$, while $\trf{}$ entails $p_i$ has started $\acRState{e}{r-1}$.
Using both this information and the fact that $p_j$ already started $\acState{e}{r}$, 
we can deduce all the possible states of $p_i$ (see the boxes in Figures~\ref{fig:conc_states_unrel_t1} and~\ref{fig:conc_states_unrel_t2}).
Since $p_j$ started $\acState{e}{r}$, $p_i$ can be in one of the following states with 
unreliable rounds: $\acState{e}{r-1}$ (if $p_j$ is not in $\acFState{e}{r}$); $\acState{e}{r}$; or $\acState{e}{r+1}$.
Moreover, if $p_i$ receives a failure notification, it moves (from any of these states) 
to a state with a reliable round (indicated by double-edged boxes). 
Note that a fail transition from $\acState{e}{r-1}$ depends on whether round $r-1$ is the first in a sequence of unreliable rounds
(see the lower-right corner triangle in Figure~\ref{fig:conc_states_unrel_t1}).
Finally, $p_i$ can skip a reliable round, i.e., either $r-2$ or $r-1$ (see Figure~\ref{fig:conc_states_unrel_t1}); 
the precondition is for at least one other server to have A-delivered the corresponding unreliable round. 
Notice though that $p_i$ cannot skip both rounds---this would imply both $\acState{e}{r-2}$ and $\acState{e}{r-1}$ 
to be A-delivered, which is not possible since $p_i$ does not start $\acState{e}{r}$.

Second, let $p_j$ be in $\acRState{e}{r}$. Clearly, $p_i$ can be in $\acRState{e}{r}$; 
also, since $p_j$ started $\acRState{e}{r}$, $p_i$ can be ahead in either $\acFState{e}{r+1}$ 
or $\acRState{e+1}{r+1}$.
To infer the other possible states of $p_i$, we consider the four transitions that can precede $p_j$'s $\acRState{e}{r}$ state 
(see the curved dashed arrows in Figures~\ref{fig:conc_states_rel_t3}, \ref{fig:conc_states_rel_t4}, \ref{fig:conc_states_rel_t5}, and~\ref{fig:conc_states_rel_t6}).
Note that $\acRState{e}{r}$, $\acFState{e}{r+1}$, and $\acRState{e+1}{r+1}$ are present in all four figures.

$\tur{}$ entails $p_j$ has started $\acState{e-1}{r+1}$ and hence, $p_i$ started either $\acState{e-1}{r}$ or $\acFState{e-1}{r}$ (cf. Proposition~\ref{prop:one_end_all_start}). 
Receiving a failure notification while in either of 
these states triggers (eventually) a transition to $\acRState{e}{r}$ 
(either directly from $\acFState{e-1}{r}$ or through a skip transition from $\acState{e-1}{r}$ via $\acRState{e}{r-1}$).
Moreover, if the failure notification is delayed, $p_i$ can move to $\acState{e-1}{r+1}$ or even to $\acState{e-1}{r+2}$. 
Finally, receiving a failure notification while in $\acState{e-1}{r+2}$ triggers a transition to $\acRState{e}{r+1}$;
$\acRState{e}{r+1}$ can also be reached by a skip transition from $\acRState{e}{r}$ (see Figure~\ref{fig:skip_trans}).
Both $\tfr{}$ and $\trr{}$ entail $p_i$ has started $\acRState{e-1}{r-1}$, since $p_j$ completed it (cf. Proposition~\ref{prop:one_end_all_start}).
In addition, $\tfr{}$ entails $p_j$ started $\acState{e-1}{r}$ before moving to $\acRState{e}{r}$;
thus, in this case, $p_i$ can also start $\acState{e-1}{r+1}$ (after completing $\acState{e-1}{r}$).
$\tsk{}$ entails at least one server completed $\acState{e-1}{r}$ (see Figure~\ref{fig:skip_trans})
and thus, $p_i$ has started $\acState{e-1}{r}$. Note that the state transitions illustrated in Figure~\ref{fig:conc_states_rel_t6}
are also included in Figure~\ref{fig:conc_states_rel_t3}.

In summary, while $p_j$ is in $\acState{e}{r}$, $p_i$ can be in four states with unreliable rounds, 
i.e., $\acState{e}{r-1}$, $\acFState{e}{r-1}$, $\acState{e}{r}$, and $\acState{e}{r+1}$,
and in four states with reliable rounds, i.e., $\acRState{e}{r-1}$, $\acRState{e+1}{r-2}$, $\acRState{e+1}{r-1}$, and $\acRState{e+1}{r}$.
Also, while $p_j$ is in $\acRState{e}{r}$, $p_i$ can be in five states with unreliable rounds, 
i.e., $\acState{e-1}{r}$, $\acFState{e-1}{r}$, $\acState{e-1}{r+1}$, $\acState{e-1}{r+2}$ and $\acFState{e}{r+1}$,
and in five states with reliable rounds, i.e., $\acRState{e-1}{r-1}$, $\acRState{e}{r-1}$, $\acRState{e}{r}$, $\acRState{e}{r+1}$ and $\acRState{e+1}{r+1}$.

\subsubsection{Informal proofs of the concurrency properties}
\label{app:conc_proofs}

\propUniqueState*
\begin{IEEEproof}
W.l.o.g., we assume $p_i$ is in state $\acState{e}{r}$ and $p_j$ is in state $\acRState{e}{r}$.
Clearly, $p_i$'s state $\acState{e}{r}$ is preceded by either a $\tuu{}$ transition (and thus by $\acState{e}{r-1}$) 
or a $\trf{}$ transition (and thus by $\acRState{e}{r-1}$).
This entails, $p_j$ at least started either $\acState{e}{r-1}$ or $\acRState{e}{r-1}$ (cf. Proposition~\ref{prop:one_end_all_start}). 
Moreover, $p_j$'s state $\acRState{e}{r}$ can only be preceded by either a fail or a skip transition. 
Yet, a fail transition from either $\acState{e}{r-1}$ or $\acRState{e}{r-1}$ is not possible since it would entail an increase in epoch;
also, a skip transition cannot be preceded by an unreliable round. 
Thus, $p_j$ went through $\acRState{e}{r-1} \sTrans \acRState{e}{r}$ and, as a result, 
$p_i$ went through $\acRState{e}{r-1} \trans \acState{e}{r}$.

The skip transition requires at least one non-faulty server to have the following fail transition 
$\acState{e-1}{r+1} \fTrans \acRState{e}{r}$ (see Figure~\ref{fig:skip_trans}).
Let $p_k$ be such a server; clearly, $p_k$ has not started $\acRState{e}{r-1}$.
Yet, this contradicts $p_i$'s transition $\acRState{e}{r-1} \trans \acState{e}{r}$ (cf. Proposition~\ref{prop:one_end_all_start}).
\end{IEEEproof}

\propUnrelMsg*
\begin{IEEEproof}
First, we assume $\tilde{e} < e$. Clearly, $e \neq 1$. Thus, $p_j$'s state $\acState{e}{r}$ 
is preceded by a state $\acRState{e}{r^\prime < r}$ that increased the epoch to $e$. 
This entails $p_i$ started state $\acRState{e}{r^\prime}$ (cf. Proposition~\ref{prop:one_end_all_start}). 
Moreover, $p_i$ started this state before $p_j$ sent $m_j^{(e,r)}$. 
Yet, this contradicts the assumption that $p_i$ received the message in epoch $\tilde{e} < e$.

Then, we assume  $\tilde{e} = e \wedge \tilde{r} < r \nRightarrow \tilde{r}=r-1$; 
thus, $\tilde{e} = e$ and $\tilde{r} < r-1$. 
Yet, $p_j$'s state $\acState{e}{r}$ is preceded by either $\acState{e}{r-1}$ or $\acRState{e}{r-1}$.
In both cases, $p_i$ was in epoch $e$ and round $r-1$ before receiving $m_j^{(e,r)}$ in epoch $e$ 
and round $\tilde{r}$, which contradicts $\tilde{r} < r-1$.
\end{IEEEproof}

\begin{figure}[!tp]
\centering
\scalebox{0.80}{\input{figures/fail_notif_epoch_unique.tex}}
\caption{A non-faulty server $p_i$ is in state $\acState{e-1}{r}$; $p_i$ received a failure notification $\phi$ in a preceding state.
While in the completed reliable round of epoch $e^\prime$, $\phi \in F_i$; 
after the reliable round of epoch $e^{\prime\prime}$ is completed, $\phi$ is removed from $F_i$;
after the reliable round of epoch $e-1$ is completed, $F_i = \emptyset$.
If $e^\prime = e^{\prime\prime} = e-1$, the three reliable rounds refer to the same actual round.
Empty rectangles indicate unreliable rounds; filled (gray) rectangles indicate reliable rounds.
Rectangles with solid edges indicate completed round, while with dashed edges indicate interrupted rounds.}
\label{fig:fail_notif_epoch_unique}
\end{figure}

To prove Theorem~\ref{prop:rel_msg} we introduce the following lemma:

\begin{lemma}
\label{lemma:fail_trigger}
Let $p_i$ be a non-faulty server in state $\acState{e-1}{r}$;
let $p_j$ be another non-faulty server in epoch $e$.
Let $\phi(a,b)$ be a notification sent by $b$ indicating $a$'s failure; 
$\phi(a,b)$ triggered $p_j$'s transition from epoch $e-1$ to $e$. 
Then, $p_i$ cannot receive $\phi(a,b)$ in any state that precedes $\acState{e-1}{r}$.
\end{lemma}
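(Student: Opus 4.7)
The plan is to proceed by contradiction: assume $p_i$ received $\phi(a,b)$ in some state preceding $\acState{e-1}{r}$, and derive an inconsistency with the fact that $\phi(a,b)$ triggered $p_j$'s fail transition from epoch $e-1$ to $e$. The central observation is that such a fail transition requires $\phi(a,b)$ to be a \emph{valid} notification for $p_j$ at that moment, i.e., neither $a$ nor $b$ has yet been removed from the system according to $p_j$'s view.

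First, I would do a brief case split on the type of state in which $p_i$ is assumed to have received $\phi(a,b)$. If the state is unreliable, then by the rules of Table~\ref{tab:event_action} (\#3 and \#4), a valid $\phi(a,b)$ would immediately trigger a fail transition, bringing $p_i$ into a reliable state with an incremented epoch; if the state is reliable, then $\phi(a,b)$ is simply added to $F_i$. Either way, $\phi(a,b)$ enters $F_i$ during some reliable round whose epoch I call $e^\prime \leq e-1$, matching the situation sketched in Figure~\ref{fig:fail_notif_epoch_unique}.

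Next, I would argue that $\phi(a,b)$ cannot remain in $F_i$ all the way to $\acState{e-1}{r}$. Since $\acState{e-1}{r}$ is an unreliable state, the preceding reliable round of epoch $e-1$ must have terminated with a $\trf{}$ transition, which by definition requires \emph{all} remaining failure notifications to be invalid at the end of the round (otherwise a $\trr{}$ transition to $\acRState{e}{r^\prime}$ would have occurred instead). Hence there is some reliable round in an epoch $e^{\prime\prime}$ with $e^\prime \leq e^{\prime\prime} \leq e-1$ at whose completion $\phi(a,b)$ was removed from $F_i$; this removal only happens if either $a$ or $b$ was removed from the system in that round~(\cref{sec:round_based}).

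Finally, I would invoke the consistency of non-faulty views across reliable rounds to reach the contradiction. By set agreement (Theorem~\ref{th:set_agreement}) applied to the reliable round of epoch $e^{\prime\prime}$, every non-faulty server that completes that round A-delivers the same set of messages and hence agrees on the same set of removed servers. Since $p_j$ is in epoch $e$, it has completed every reliable round up to and including epoch $e^{\prime\prime}$, so $p_j$'s view also has $a$ or $b$ removed; thus $\phi(a,b)$ is invalid for $p_j$ throughout epoch $e-1$ and would be dropped rather than trigger a fail transition, contradicting the hypothesis. The main obstacle I expect is the bookkeeping step that matches $p_i$'s and $p_j$'s per-epoch views of removed servers: I will need to lean carefully on Proposition~\ref{prop:unique_states} and on the fact that an epoch is uniquely defined by a single completed reliable round, so that ``completed the reliable round of epoch $e^{\prime\prime}$'' unambiguously refers to the same round for both servers.
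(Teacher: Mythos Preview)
Your proposal is correct and follows essentially the same argument as the paper's proof: assume $\phi(a,b)$ was received earlier, observe it lands in $F_i$ at some reliable round of epoch $e^\prime\leq e-1$, use the fact that the $\trf{}$ transition preceding $\acState{e-1}{r}$ requires $F_i=\emptyset$ to find an intermediate reliable round $\acRState{e^{\prime\prime}}{\ast}$ whose completion removed $a$ or $b$, and then invoke set agreement on that reliable round so that $p_j$ also removed $a$ or $b$, contradicting the validity of $\phi(a,b)$ for $p_j$ in epoch $e-1$. The one difference worth noting is that the paper appeals directly to the set-agreement property of \AC{}'s early termination (citing~\cite{poke2017allconcur,allconcur_tla}) rather than to Theorem~\ref{th:set_agreement}; since Lemma~\ref{lemma:fail_trigger} is used to establish Proposition~\ref{prop:rel_msg}, on which the algorithm description rests, citing the more foundational \AC{} result sidesteps any appearance of circularity.
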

\begin{IEEEproof}
We assume $p_i$ receives $\phi(a,b)$ in a state preceding $\acState{e-1}{r}$ (see Figure~\ref{fig:fail_notif_epoch_unique}). 
Let $\acRState{e^\prime}{\ast}$, with $e^\prime \leq e-1$ and $\ast$ a placeholder for any round, 
be either the state in which $p_i$ receives $\phi(a,b)$ or the state in which $p_i$ moves 
after receiving $\phi(a,b)$ (i.e., a fail transition).
Clearly, while $p_i$ is in $\acRState{e^\prime}{\ast}$, $\phi(a,b) \in F_i$. 
Moreover, $\acState{e-1}{r}$ must be preceded by a reliable round followed by a $\trf{}$ transition, 
which requires $F_i=\emptyset$~(\cref{sec:state_machine}).
Thus, in the sequence of states from $\acRState{e^\prime}{\ast}$ to $\acState{e-1}{r}$ there is 
a state $\acRState{e^{\prime\prime}}{\ast}$, with $e^\prime \leq e^{\prime\prime} \leq  e-1$, 
whose completion results in the removal of $\phi(a,b)$ from~$F_i$.
This means that when $p_i$ A-delivers state $\acRState{e^{\prime\prime}}{\ast}$, 
it removes from the system at least one of the servers $a$ and $b$~(\cref{sec:round_based}).
Since $p_i$ completes $\acRState{e^{\prime\prime}}{\ast}$ and $e^{\prime\prime} < e$, 
$p_j$ also completes (and A-delivers) $\acRState{e^{\prime\prime}}{\ast}$.  
Thus, $p_j$ removes the same servers since set agreement is ensured by early termination~\cite{poke2017allconcur,allconcur_tla}.
Yet, this leads to a contradiction---$\phi(a,b)$ was invalidated during epoch $e-1$ and therefore, cannot trigger $p_j$'s transition to epoch~$e$.
\end{IEEEproof}

\propRelMsg*
\begin{IEEEproof}
We assume $\tilde{e} < e \nRightarrow (\tilde{e} = e-1 \wedge \acRState{\tilde{e}}{\tilde{r}} \wedge  \tilde{r} = r-1)$;
thus, $\tilde{e} < e-1 \vee \acState{\tilde{e}}{\tilde{r}} \vee \tilde{r} \neq r-1$. 
First, we assume $\tilde{e} < e-1$. Yet, $p_j$'s state $\acRState{e}{r}$ is preceded by a state in epoch $e-1$; 
note that $\acRState{e}{r}$ cannot be the initial state since $m_j^{(e,r)}$ is A-broadcast in $\acRState{e}{r}$.
As a result, $p_j$ completes at least one state in epoch $e-1$ before sending $m_j^{(e,r)}$. 
This means, $p_i$ starts at least one state in epoch $e-1$ before receiving $m_j^{(e,r)}$ in epoch $\tilde{e}$, 
which contradicts $\tilde{e} < e-1$. Thus, $\tilde{e} = e-1$.

Second, we assume $p_i$ receives $m_j^{(e,r)}$ in $\acState{e-1}{\tilde{r}}$.
Let~$\phi$ be the failure notification that triggers $p_j$'s transition from epoch $e-1$ to epoch $e$.
Then, $p_j$ sends $\phi$ to $p_i$ before it sends $m_j^{(e,r)}$; 
thus, $p_i$ receives $\phi$ either in $\acState{e-1}{\tilde{r}}$ or in a preceding state.
On the one hand, a failure notification received in $\acState{e-1}{\tilde{r}}$ increases the epoch, 
which contradicts the assumption that $p_i$ receives $m_j^{(e,r)}$ in epoch $e-1$.
On the other hand, receiving $\phi$ in a state preceding $\acState{e-1}{\tilde{r}}$ 
contradicts the result of Lemma~\ref{lemma:fail_trigger}. Thus, $p_i$ receives $m_j^{(e,r)}$ in $\acRState{e-1}{\tilde{r}}$.

Finally, we assume $\tilde{r} \neq r-1$. According to the transitions leading to reliable rounds, 
$p_j$'s state $\acRState{e}{r}$ is preceded by a completed state with epoch $e-1$ 
and round either $r$ or $r-1$~(\cref{sec:state_machine}).
Thus, $p_i$ starts a state with epoch $e-1$ and round either $r$ or $r-1$ 
before receiving $m_j^{(e,r)}$ in $\acRState{e-1}{\tilde{r}}$.
For $\acRState{e-1}{\tilde{r}}$ to be preceded by a state with epoch $e-1$, 
$\acRState{e-1}{\tilde{r}}$ must be preceded by a skip transition;
yet, in such a case the preceding state cannot be completed by any non-faulty server~(\cref{sec:state_machine}).
Thus, $p_j$'s state $\acRState{e}{r}$ is preceded by $\acRState{e-1}{r}$ (since $\tilde{r} \neq r-1$).
Yet, there is no sequence of transitions from $\acRState{e-1}{r}$ to $\acRState{e}{r}$; 
in other words, when $p_j$ completes $\acRState{e-1}{r}$ it A-delivers round $r$ and thus it 
cannot rerun round $r$ in a subsequent epoch. Thus, $\tilde{r} = r-1$.
\end{IEEEproof}

\propRelMsgSameEpoch*
\begin{IEEEproof}
First, we assume $p_i$ is in state $\acState{e}{\tilde{r}}$. 
Then, $\acState{e}{\tilde{r}}$ is preceded by a completed state $\acRState{e}{r^{\prime} < \tilde{r}}$. 
Yet, this implies that $p_j$ started $\acRState{e}{r^{\prime}}$ (cf. Proposition~\ref{prop:one_end_all_start}). 
Moreover, $p_j$ could not have skipped $\acRState{e}{r^{\prime}}$, since $p_i$ completed it. 
Thus, $p_j$'s state $\acRState{e}{r}$ is preceded by a completed state $\acRState{e}{r^{\prime}}$ (since $r^{\prime} < r$), 
which contradicts the definition of an epoch. As a result, $p_i$ is in state $\acRState{e}{\tilde{r}}$.

Second, we assume $\tilde{r} < r-1$. Thus, $p_i$ and $p_j$ are in the same epoch, both in reliable rounds, 
but with at least another round between them. We show that this is not possible. 
According to the transitions leading to reliable rounds,  
$p_j$'s $\acRState{e}{r}$ is preceded by either a completed state, 
i.e., $\acState{e-1}{r+1}$, $\acState{e-1}{r}$ or $\acRState{e-1}{r-1}$,
or a skipped state, i.e., $\acRState{e}{r-1}$~(\cref{sec:state_machine}).
On the one hand, a completed preceding state $s$ entails that $p_i$ started $s$ (cf. Proposition~\ref{prop:one_end_all_start}); 
moreover, since $p_i$ is in epoch $e$ and the state $s$ has epoch $e-1$, $p_i$ also completed $s$. 
Yet, if $p_i$ completed any of the $\acState{e-1}{r+1}$, $\acState{e-1}{r}$ and $\acRState{e-1}{r-1}$ states, 
there is no reason to rerun (in epoch $e$) round $\tilde{r} < r-1$.
On the other hand, if $p_j$'s $\acRState{e}{r}$ is preceded by a skip transition, then
at least one non-faulty server completed $\acState{e-1}{r}$ and thus, 
$p_i$ completed $\acState{e-1}{r-1}$ (see Figure~\ref{fig:skip_trans}). 
As a result, again there is no reason for $p_i$ to rerun (in epoch $e$) round $\tilde{r} < r-1$.
\end{IEEEproof}


\subsection{AllConcur+: design details}
\label{app:design_details}

In this section, we provide the details of \ACplus{}'s design as a non-uniform atomic broadcast algorithm. 
For ease of presentation, the following description omits both the forward-backward mechanism required by $\Diamond\mathcal{P}$~(\cref{sec:allconcur_overview})
and the mechanism for updating 
\ifdefined\TECHREP
$G_R$~(\cref{sec:update_Gf}). 
\fi
\ifdefined\PAPER
$G_R$~(\cref{sec:acp_design}). 
\fi
Therefore, we assume both $\mathcal{P}$ and no more than $f$ failures.
First, we describe the variables used and their initial values; 
also, we outline the main loop and the main communication primitives, i.e., \broadcast{}, \rbroadcast{} and \tobroadcast{}.
Then, we split the description of the design into the following points:
(1)~handling of the three possible events---receiving an unreliable message, receiving a reliable message, and receiving a failure notification;
(2)~conditions for completing and A-delivering a round;
(3)~handling of premature messages;
and (4)~updating the tracking digraphs.
Table~\ref{tab:notations} summarizes the digraph notations used throughout this section. 

\ifdefined\PAPER
%
\begin{table}[!tp]
\centering
\begin{tabular}{ l l | l l }

  \cmidrule[1.5pt](){1-4}

  \textbf{Notation} &
  \textbf{Description} &
  \textbf{Notation} &
  \textbf{Description} \\
  \hline

  \rowcolor{DarkGray}  
  $G_U$ & 
  unreliable digraph &  
  $G_R$ & 
  reliable digraph \\
  
  $\mathcal{V}(G)$ & 
  vertices  &
  $\mathcal{E}(G)$ & 
  edges \\
  
  \rowcolor{DarkGray}  
  $d(G)$ &
  degree &
  $\kappa(G)$ &
  vertex-connectivity \\
  
  $v^+(G)$ &
  $v$'s successors &
  $\pi_{u,v}$ &
  path from $u$ to $v$ \\

  \cmidrule[1.5pt](){1-4} 
\end{tabular}
  \caption{Digraph notations.}
\label{tab:notations}
\end{table}
\fi

\ifdefined\TECHREP
\begin{table}[!tp]
\centering
\begin{tabular}{ l l | l l | l l | l l }

  \cmidrule[1.5pt](){1-8}

  \textbf{Notation} &
  \textbf{Description} &
  \textbf{Notation} &
  \textbf{Description} &
  \textbf{Notation} &
  \textbf{Description} &
  \textbf{Notation} &
  \textbf{Description} \\
  \hline

  \rowcolor{DarkGray}  
  $G_U$ & 
  unreliable digraph &  
  $G_R$ & 
  reliable digraph &  
  $\mathcal{V}(G)$ & 
  vertices  &
  $\mathcal{E}(G)$ & 
  edges \\
  
  $d(G)$ &
  degree &
  $\kappa(G)$ &
  vertex-connectivity &  
  $v^+(G)$ &
  $v$'s successors &
  $\pi_{u,v}$ &
  path from $u$ to $v$ \\

  \cmidrule[1.5pt](){1-8} 
\end{tabular}
  \caption{Digraph notations.}
\label{tab:notations}
\end{table}
\fi

\ifdefined\PAPER

\fi


\subsubsection{Overview} 

\ACplus{}'s design is outlined in Algorithm~\ref{alg:allconcur}; the code is executed by a server $p_i$. 
The variables used are described in the Input section. 
Apart from $m_j^{(e,r)}$, which denotes a message A-broadcast by $p_j$ in epoch $e$ and round~$r$, all variables are local to $p_i$:
$\tilde{e}$ denotes $p_i$'s  epoch; $\tilde{r}$ denotes $p_i$'s  round; $M_i$ contains messages received (by $p_i$) during the current state;
$M_i^\mathit{prev}$ contains messages received (by $p_i$) in the previous completed, but not yet A-delivered round (if any); 
$M_i^\mathit{next}$ contains messages received prematurely (by $p_i$), i.e., $m_j^{(e,r)} \in M_i^\mathit{next} \Rightarrow e > \tilde{e} \vee (e = \tilde{e} \wedge r > \tilde{r})$;
$F_i$ contains $p_i$'s valid failure notifications; 
and $\mathbf{g_i}$ are $p_i$'s tracking digraphs.

Initially, $p_i$ is in state $\acRState{1}{0}$; before starting the main loop, it moves to $\acState{1}{1}$ (line~\ref{line:first_trans}).
In the main loop (line~\ref{line:main_loop}), $p_i$ A-broadcasts its own message $m_i^{(\tilde{e},\tilde{r})}$, if it is neither empty, nor was already A-broadcast.
Depending on \ACplus{}'s mode, the A-broadcast primitive is either a broadcast that uses $G_U$ or 
an R-broadcast that uses $G_R$ (lines~\ref{line:abcast_start}--\ref{line:abcast_end}).
Also, in the main loop, $p_i$ handles received messages (unreliable and reliable) and failure notifications. 
Note that messages are uniquely identified by the tuple (\emph{source id}, \emph{epoch number}, \emph{round number}, \emph{round type}), 
while failure notifications by the tuple (\emph{target id}, \emph{owner id}). 

\begin{algorithm}[!tb]
\scriptsize
\DontPrintSemicolon

\SetKwData{Input}{$M_i$}
\SetKwData{Fails}{$F_i$}
\SetKwData{aServer}{$p_\ast$}
\SetKwData{aMessage}{$m_\ast$}
\SetKwData{Suspected}{$\mathit{Q}$}
\SetKwData{InputGraph}{$\mathbf{g_i}$}
\SetKwData{FastDigraph}{$G_U$}
\SetKwData{FTDigraph}{$G_R$}
\SetKwData{Epoch}{$\tilde{e}$}
\SetKwData{Round}{$\tilde{r}$}
\SetKwData{Vertices}{$\mathcal{V}$}

\SetKwComment{Comment}{\{}{\}}

\SetKwData{Terminate}{$\mathit{try\_to\_Adeliver}$}
\SetKwData{Done}{$\mathit{done}$}
\SetKwData{Continue}{$\mathit{continue}$}

\SetKwFunction{HandleBCAST}{HandleBCAST}
\SetKwFunction{HandleRBCAST}{HandleRBCAST}
\SetKwFunction{HandleFAIL}{HandleFAIL}

\SetKwFor{Loop}{loop}{}{}


\KwIn{\FastDigraph ; \FTDigraph  \Comment*[f]{$p_i$'s unreliable and reliable digraphs}
  \newline $\acRState{\Epoch\leftarrow1}{\Round\leftarrow0}$ \Comment*[f]{$p_i$'s initial state}
  \newline $\Input\leftarrow\emptyset$ \Comment*[f]{messages received by $p_i$ in current state}
  \newline $\Input^{\mathit{prev}}\leftarrow\emptyset$ \Comment*[f]{messages received by $p_i$ in previous not A-delivered (unreliable) round}
  \newline $\Input^{\mathit{next}}\leftarrow\emptyset$ \Comment*[f]{messages received prematurely by $p_i$}
  \newline $\Fails\leftarrow\emptyset$ \Comment*[f]{$p_i$'s known failure notifications}
  \newline $\Vertices(\InputGraph[p_\ast])\leftarrow\{p_\ast\},\,\forall p_\ast \in \Vertices(\FTDigraph)$
  \Comment*[f]{tracking digraphs}
  \newline $m_j^{(e,r)}$ \Comment*[f]{$p_j$'s message while in $\acState{e}{r}$ or $\acRState{e}{r}$}
  } 
\BlankLine

$\acRState{\Epoch}{\Round} \trans \acState{\Epoch}{\Round+1}$ \Comment*[r]{transition from initial state $\acRState{1}{0}$}
\label{line:first_trans}
\Loop{}{\label{line:main_loop}
  \lIf
  {$m_i^{(\Epoch,\Round)}$ not empty \normalfont{ and } $m_i^{(\Epoch,\Round)} \notin \Input$}
    {\tobroadcast{m_i^{(\Epoch,\Round)}}}
  \lIf 
  {receive $\langle \mathit{BCAST},\, m_j^{(e,r)}\rangle$}
  {\HandleBCAST{$m_j^{(e,r)}$}}
  \lIf 
  {receive $\langle \mathit{RBCAST},\, m_j^{(e,r)}\rangle$}
  {\HandleRBCAST{$m_j^{(e,r)}$}}
  \lIf 
  {receive $\langle \mathit{FAIL},\, p_j,\, p_k \in p_j^+(\FTDigraph) \rangle$}
  {\HandleFAIL{$p_j$,$p_k$}}
}

\BlankLine


\SetKwBlock{tobcast}{def \tobroadcast{m_j^{(e,r)}}:}{end}
\tobcast
{\label{line:abcast_start}
  \lIf{$\acState{\Epoch}{\Round}$} {\broadcast{m_j^{(e,r)}}}
  \lElseIf{$\acRState{\Epoch}{\Round}$} {\rbroadcast{m_j^{(e,r)}}}
}

\SetKwBlock{bcast}{def \broadcast{m_j^{(e,r)}}:}{end}
\bcast
{
  \lIf{$m_j^{(e,r)} \notin \Input$}{\textbf{send} $\langle \mathit{BCAST},\, m_j^{(e,r)} \rangle$ \textbf{to} $p_i^+(\FastDigraph)$}
  $\Input \leftarrow \Input \cup \{m_j^{(e,r)}\}$\;
}

\SetKwBlock{rbcast}{def \rbroadcast{m_j^{(e,r)}}:}{end}
\rbcast
{
  \lIf{$m_j^{(e,r)} \notin \Input$}{\textbf{send} $\langle \mathit{RBCAST},\, m_j^{(e,r)} \rangle$ \textbf{to} $p_i^+(\FTDigraph)$}
  $\Input \leftarrow \Input \cup \{m_j^{(e,r)}\}$\;
  $\Vertices(\InputGraph[p_j]) \leftarrow \emptyset$\;
  \label{line:abcast_end}
}

\caption{The \ACplus{} algorithm that tolerates up to $f$ failures; code executed by server $p_i$;
see Table~\ref{tab:notations} for digraph notations.}
\label{alg:allconcur}
\end{algorithm}


\subsubsection{Handling unreliable messages} 

Algorithm~\ref{alg:handleBCAST} shows the code executed by $p_i$ when it receives an unreliable message $m_j^{(e,r)}$, 
i.e., $m_j^{(e,r)}$ was sent by $p_j$ while in $\acState{e}{r}$. 
Unreliable messages cannot originate from subsequent epochs (cf.~Theorem~\ref{prop:unrel_msg}).
Also, unreliable messages from preceding states are  dropped (line~\ref{line:unrel_drop}), because messages from preceding epochs are outdated and, 
since $e \leq \tilde{e}$, messages from preceding rounds are outdated as well.
As a result, $m_j^{(e,r)}$ was sent from either $\acState{\tilde{e}}{\tilde{r}+1}$ (cf. Theorem~\ref{prop:unrel_msg}) or $\acState{\tilde{e}}{\tilde{r}}$ (cf. Theorem~\ref{prop:unique_states}). 
We defer the handling of premature messages, i.e., sent from subsequent states, to 
Appendix~\ref{app:prem_msg}.
Handling an unreliable message sent from $\acState{\tilde{e}}{\tilde{r}}$ consists of 
three operations (lines~\ref{line:unrel_handle_start}--\ref{line:unrel_handle_end}): 
(1) send $m_j^{(e,r)}$ further (using $G_U$); 
(2) A-broadcast own message (if not done so already);
and (3) try to complete round $\tilde{r}$ (see Appendix~\ref{app:try_to_adeliver}).

\begin{algorithm}[!tb]
\scriptsize
\DontPrintSemicolon

\SetKwData{Input}{$M_i$}
\SetKwData{Fails}{$F_i$}
\SetKwData{aServer}{$p_\ast$}
\SetKwData{aMessage}{$m_\ast$}
\SetKwData{Suspected}{$\mathit{Q}$}
\SetKwData{InputGraph}{$\mathbf{g_i}$}
\SetKwData{FastDigraph}{$G_U$}
\SetKwData{FTDigraph}{$G_R$}
\SetKwData{Epoch}{$\tilde{e}$}
\SetKwData{Round}{$\tilde{r}$}

\SetKwComment{Comment}{\{}{\}}

\SetKwFunction{HandleBCAST}{HandleBCAST}
\SetKwFunction{TryToAdeliver}{TryToComplete}

\SetKwBlock{handlebcast}{def \HandleBCAST{$m_j^{(e,r)}$}:}{end}
\handlebcast
{
  \tcc{$e \leq \Epoch$ (cf. Theorem~\ref{prop:unrel_msg})}
  \lIf{$e < \Epoch$ \normalfont{ or } $r<\Round$} {
    drop $m_j^{(e,r)}$ \label{line:unrel_drop}
  }
  \ElseIf(\Comment*[f]{$r = \Round+1$ (cf. Theorem~\ref{prop:unrel_msg})}){$r > \Round$} { \label{line:unrel_prem_start}
      \tcc{postpone handling $m_j^{(e,r)}$ for $\acState{\Epoch}{\Round+1}$}
      \lIf{$\forall m_\ast^{(e^\prime,r^\prime)} \in \Input^{\mathit{next}} : e^\prime = \Epoch$} { \label{line:unrel_prem_drop}
        $\Input^{\mathit{next}}\leftarrow \Input^{\mathit{next}} \cup \{m_j^{(e,r)}\}$
       }
  } \label{line:unrel_prem_end}
  \Else(\Comment*[f]{$e = \Epoch \wedge r = \Round \Rightarrow \acState{\Epoch}{\Round}$ (cf. Theorem~\ref{prop:unique_states})}) { 
    \broadcast{m_j^{(e,r)}} \Comment*[r]{send $m_j^{(e,r)}$ further via $\FastDigraph$} \label{line:unrel_handle_start}
    \tobroadcast{m_i^{(\Epoch,\Round)}} \Comment*[r]{A-broadcast own message} 
    \TryToAdeliver{}\; \label{line:unrel_handle_end}
   }
}
\caption{Handling an unreliable message---while in epoch $\tilde{e}$ and round $\tilde{r}$, $p_i$ receives $m_j^{(e,r)}$ sent by $p_j$ while in $\acState{e}{r}$.}
\label{alg:handleBCAST}
\end{algorithm}


\subsubsection{Handling reliable messages} 

Algorithm~\ref{alg:handleRBCAST} shows the code executed by $p_i$ when it receives a reliable message $m_j^{(e,r)}$, 
i.e., $m_j^{(e,r)}$ was sent by $p_j$ while in $\acRState{e}{r}$.
Reliable messages from preceding states are dropped (line~\ref{line:rel_drop}), because  messages from preceding epochs are outdated and, 
since $e > \tilde{e} \Rightarrow r > \tilde{r}$ (cf.~Theorem~\ref{prop:rel_msg}), messages from preceding rounds are outdated as well.
As a result, we identify three scenarios (cf.~Theorems~\ref{prop:rel_msg} and~\ref{prop:rel_msg_same_epoch}):
(1) $m_j^{(e,r)}$ was sent from the subsequent epoch (i.e., from $\acRState{\tilde{e}+1}{\tilde{r}+1}$);
(2) $m_j^{(e,r)}$ was sent from the current state $\acRState{\tilde{e}}{\tilde{r}}$;
and (3) $m_j^{(e,r)}$ was sent from the current epoch, but from the subsequent reliable round (i.e., from $\acRState{\tilde{e}}{\tilde{r}+1}$).
In all three scenario, $p_i$ is in a reliable round (cf.~Theorems~\ref{prop:rel_msg} and~\ref{prop:rel_msg_same_epoch}).

\begin{algorithm}[!tb]
\scriptsize
\DontPrintSemicolon

\SetKwData{Input}{$M_i$}
\SetKwData{Fails}{$F_i$}
\SetKwData{aServer}{$p_\ast$}
\SetKwData{aMessage}{$m_\ast$}
\SetKwData{Suspected}{$\mathit{Q}$}
\SetKwData{InputGraph}{$\mathbf{g_i}$}
\SetKwData{FastDigraph}{$G_U$}
\SetKwData{FTDigraph}{$G_R$}
\SetKwData{Epoch}{$\tilde{e}$}
\SetKwData{Round}{$\tilde{r}$}
\SetKwData{Vertices}{$\mathcal{V}$}

\SetKwComment{Comment}{\{}{\}}

\SetKwFunction{HandleRBCAST}{HandleRBCAST}
\SetKwFunction{Sort}{sort}
\SetKwFunction{UpdateTrackingDigraph}{UpdateTrackingDigraph}
\SetKwFunction{TryToAdeliver}{TryToComplete}

\SetKwBlock{handlerbcast}{def \HandleRBCAST{$m_j^{(e,r)}$}:}{end}
\handlerbcast
{
    \tcc{$e > \Epoch \Rightarrow r > \Round$ (cf. Theorem~\ref{prop:rel_msg})}
    \lIf{$e < \Epoch$ \normalfont{ or } $r<\Round$} {
      drop $m_j^{(e,r)}$ \label{line:rel_drop}
    }
    \ElseIf (\Comment*[f]{$e=\tilde{e}+1 \wedge r=\tilde{r}+1$ (cf. Theorem~\ref{prop:rel_msg})}) {$e > \Epoch$} 
    { \label{line:rel_prem_start}
      \tcc{send $m_j^{(e,r)}$; deliver later in $\acRState{\Epoch+1}{\Round+1}$}
      \textbf{send} $\langle \mathit{RBCAST},\, m_j^{(e,r)} \rangle$ \textbf{to} $p_i^+(\FTDigraph)$\;\label{line:rel_prem_send}
      \lIf{$\exists m_\ast^{(e^\prime,r^\prime)} \in \Input^{\mathit{next}} : e^\prime = \Epoch$} 
        {$\Input^{\mathit{next}}\leftarrow \emptyset$} \label{line:rel_prem_drop}
      $\Input^{\mathit{next}}\leftarrow \Input^{\mathit{next}} \cup \{m_j^{(e,r)}\}$ \label{line:rel_prem_end}
    }
    \Else (\Comment*[f]{$r=\Round \vee r=\Round+1$ (cf. Theorem~\ref{prop:rel_msg_same_epoch})}){
      \If(\Comment*[f]{skip transition}){$r=\Round+1$}{ \label{line:rel_skip_start}
        \ForEach (\Comment*[f]{A-deliver $\acState{\Epoch-1}{\Round}$}) {$m \in \Input^{\mathit{prev}}$} {\todeliver{m}} 
        $\Input^{\mathit{prev}} \leftarrow \emptyset$; $\Input\leftarrow\emptyset$; $\Input^{\mathit{next}} \leftarrow \emptyset$\;  
        \ForEach{$\aServer \in \Vertices(\FTDigraph)$} {
          $\Vertices(\InputGraph[\aServer])\leftarrow\{\aServer\}$\;
          \UpdateTrackingDigraph{$\InputGraph[\aServer]$, $\emptyset$, \Fails}
         }
        $\acRState{\Epoch}{\Round} \sTrans \acRState{\Epoch}{\Round+1}$  \Comment*[r]{$\tsk{}$} \label{line:rel_skip_end}
%
      }
      \rbroadcast{m_j^{(e,r)}} \Comment*[r]{send $m_j^{(e,r)}$ further via $\FTDigraph$} \label{line:rel_handle_start}
      \tobroadcast{m_i^{(\Epoch,\Round)}} \Comment*[r]{A-broadcast own message} 
      \TryToAdeliver{}\; \label{line:rel_handle_end}
    }
}
\caption{Handling a reliable message---while in epoch $\tilde{e}$ and round $\tilde{r}$, $p_i$ receives $m_j^{(e,r)}$ sent by $p_j$ while in $\acRState{e}{r}$;
see Table~\ref{tab:notations} for digraph notations.}
\label{alg:handleRBCAST}
\end{algorithm}

First, we defer the handling of premature messages (i.e., sent from $\acRState{\tilde{e}+1}{\tilde{r}+1}$) to Appendix~\ref{app:prem_msg}.
Second, handling a reliable message sent from the current state $\acRState{\tilde{e}}{\tilde{r}}$ 
consists of three operations (lines~\ref{line:rel_handle_start}--\ref{line:rel_handle_end}): 
(1) send $m_j^{(e,r)}$ further (using $G_R$);
(2) A-broadcast own message (if not done so already);
and (3) try to complete round $\tilde{r}$ (see Appendix~\ref{app:try_to_adeliver}).
Third, receiving a reliable message from the same epoch and a subsequent round while in a reliable round 
triggers a $\tsk{}$ transition (see Figure~\ref{fig:skip_trans}). 
$\tsk{}$ consists of three operations (lines~\ref{line:rel_skip_start}--\ref{line:rel_skip_end}):
(1) A-deliver last completed state (i.e., $\acState{\tilde{e}-1}{\tilde{r}}$);
(2) initialize the next round, including 
updating the tracking digraphs (see Appendix~\ref{app:td_update});
and (3) move to  $\acRState{\tilde{e}}{\tilde{r}+1}$.
Then, $m_j^{(e,r)}$ is handled as if it was received in the same state (lines~\ref{line:rel_handle_start}--\ref{line:rel_handle_end}).


\subsubsection{Handling failure notifications} 

Algorithm~\ref{alg:handleFAIL} shows the code executed by $p_i$ when it receives 
a failure notification sent by $p_k$ targeting one of its predecessors $p_j$.
Note that if $k = i$, then the notification is from the local FD. 
The notification is valid only if both the owner (i.e., $p_k$) and the target (i.e., $p_j$) 
are part of the reliable digraph~(\cref{sec:round_based}).
Handling a valid notification consists of five operations:
(1) send the notification further using $G_R$ (line~\ref{line:fd_send});
(2) if in an unreliable round, rollback to the latest A-delivered round and start a reliable round (lines~\ref{line:rollback_start}--\ref{line:rollback_end});
(3) update the tracking digraphs (see Appendix~\ref{app:td_update});
(4) add the failure notification to $F_i$ (line~\ref{line:fd_add});
and (5) try to complete current reliable round $\tilde{r}$ (see Appendix~\ref{app:try_to_adeliver}).

\begin{algorithm}[!tb]
\scriptsize
\DontPrintSemicolon

\SetKwData{Input}{$M_i$}
\SetKwData{Fails}{$F_i$}
\SetKwData{aServer}{$p_\ast$}
\SetKwData{aMessage}{$m_\ast$}
\SetKwData{Suspected}{$\mathit{Q}$}
\SetKwData{InputGraph}{$\mathbf{g_i}$}
\SetKwData{FastDigraph}{$G_U$}
\SetKwData{FTDigraph}{$G_R$}
\SetKwData{Epoch}{$\tilde{e}$}
\SetKwData{Round}{$\tilde{r}$}
\SetKwData{Done}{$\mathit{return}$}
\SetKwData{Vertices}{$\mathcal{V}$}

\SetKwComment{Comment}{\{}{\}}

\SetKwFunction{HandleFAIL}{HandleFAIL}
\SetKwFunction{TryToAdeliver}{TryToComplete}
\SetKwFunction{UpdateTrackingDigraph}{UpdateTrackingDigraph}

\SetKwBlock{handlefail}{def \HandleFAIL{$p_j$,$p_k$}:}{end}
\handlefail
{
  \tcc{if $k=i$ then notification from local FD}
  \lIf {$p_j \notin \Vertices(\FTDigraph)$ \normalfont{ or } $p_k \notin \Vertices(\FTDigraph)$ }{\Done}
  \textbf{send} $\langle \mathit{FAIL},\, p_j,\, p_k \rangle$ \textbf{to} $p_i^+(\FTDigraph)$ \Comment*[r]{send further via $\FTDigraph$} \label{line:fd_send}
%
  \If(\Comment*[f]{rollback}){$\acState{\Epoch}{\Round}$} { \label{line:rollback_start}
    $\Input\leftarrow\emptyset$; $\Input^{\mathit{next}} \leftarrow \emptyset$\;
    \lIf{$\Input^{\mathit{prev}} \neq \emptyset$} {
      $\acState{\Epoch}{\Round} \fTrans \acRState{\Epoch+1}{\Round-1}$
       \Comment*[f]{$\tur{}$}
    }
    \lElse(\Comment*[f]{$\tfr{}$}) {
      $\acFState{\Epoch}{\Round} \fTrans \acRState{\Epoch+1}{\Round}$
    } \label{line:rollback_end}
  }
  \ForEach{$\aServer \in \Vertices(\FTDigraph)$}
  {
    \UpdateTrackingDigraph{$\InputGraph[\aServer]$, \Fails, $\{(p_j,p_k)\}$}
  }  
  $\Fails \leftarrow \Fails \cup \{(p_j,p_k)\}$\; \label{line:fd_add}
  \TryToAdeliver{}\;
}
\caption{Handling a failure notification---while in epoch $\tilde{e}$ and round $\tilde{r}$, $p_i$ receives from $p_k$ a notification of $p_j$'s failure;
see Table~\ref{tab:notations} for digraph notations.}
\label{alg:handleFAIL}
\end{algorithm}


\subsubsection{Round completion} 
\label{app:try_to_adeliver}

After handling either a message or a failure notification, $p_i$ tries to complete the current round.
If successful, it tries to safely A-deliver messages and finally, it moves to the next state.
We distinguish between completing an unreliable round (lines~\ref{line:unrel_complete_start}--\ref{line:unrel_complete_end})
and completing a reliable round (lines~\ref{line:rel_complete_start}--\ref{line:rel_complete_end}).
The necessary and sufficient condition for $p_i$ to complete $\acState{\tilde{e}}{\tilde{r}}$ is to receive a message (sent in $\acState{\tilde{e}}{\tilde{r}}$)
from every server (line~\ref{line:unrel_complete_cond}).
The completion is followed by a $\tuu{}$ transition to $\acState{\tilde{e}}{\tilde{r}+1}$ (line~\ref{line:unrel_complete_next}).
Starting $\acState{\tilde{e}}{\tilde{r}+1}$ entails handling the messages (sent in $\acState{\tilde{e}}{\tilde{r}+1}$) received (by $p_i$) prematurely 
(lines~\ref{line:unrel_complete_prem_handle_start}--\ref{line:unrel_complete_prem_handle_end}); we defer this discussion to Appendix~\ref{app:prem_msg}.
Also, if any messages were already received for $\acState{\tilde{e}}{\tilde{r}+1}$, then $p_i$ A-broadcast its own message (line~\ref{line:unrel_complete_abcast}).  
In addition, if $\acState{\tilde{e}}{\tilde{r}}$ is not the first in a sequence of unreliable rounds, 
then $p_i$ A-delivers $\acState{\tilde{e}}{\tilde{r}-1}$ (line~\ref{line:unrel_complete_commit}).

\begin{algorithm}[!tb]
\scriptsize
\DontPrintSemicolon

\SetKwData{Input}{$M_i$}
\SetKwData{Fails}{$F_i$}
\SetKwData{aServer}{$p_\ast$}
\SetKwData{aMessage}{$m_\ast$}
\SetKwData{Suspected}{$\mathit{Q}$}
\SetKwData{InputGraph}{$\mathbf{g_i}$}
\SetKwData{FastDigraph}{$G_U$}
\SetKwData{FTDigraph}{$G_R$}
\SetKwData{Epoch}{$\tilde{e}$}
\SetKwData{Round}{$\tilde{r}$}
\SetKwData{Vertices}{$\mathcal{V}$}
\SetKwData{Edges}{$\mathcal{E}$}

\SetKwComment{Comment}{\{}{\}}

\SetKwFunction{TryToAdeliver}{TryToComplete}
\SetKwFunction{UpdateTrackingDigraph}{UpdateTrackingDigraph}
\SetKwFunction{UpdateUnreliableDigraph}{UpdateUnreliableDigraph}
\SetKwFunction{Sort}{sort}

\SetKwBlock{adeliver}{def \TryToAdeliver{}:}{end}
\adeliver
{
\If{$\acState{\Epoch}{\Round}$} { \label{line:unrel_complete_start}
    \If {$|\Input|=|\Vertices(\FastDigraph)|$} 
    { \label{line:unrel_complete_cond}
        \ForEach (\Comment*[f]{A-deliver $\acState{\Epoch}{\Round-1}$ (if it exists)}) {$m \in \Input^{\mathit{prev}}$} {\todeliver{m}} \label{line:unrel_complete_commit}
      $\acState{\Epoch}{\Round} \trans \acState{\Epoch}{\Round+1}$ \Comment*[r]{$\tuu{}$} \label{line:unrel_complete_next}
      \tcc{handle postp. unreliable messages}
      \ForEach {$m_\ast^{(e,r)} \in \Input^{\mathit{next}}$} {\label{line:unrel_complete_prem_handle_start}
        \textbf{send} $\langle \mathit{BCAST},\, m_\ast^{(e,r)} \rangle$ \textbf{to} $p_i^+(\FastDigraph)$ 
      }
      $\Input^{\mathit{prev}} \leftarrow \Input$; $\Input \leftarrow \Input^{\mathit{next}}$; $\Input^{\mathit{next}} \leftarrow \emptyset$\;\label{line:unrel_complete_prem_handle_end}
      \lIf{$\Input \neq \emptyset$}{
        \tobroadcast{m_i^{(\Epoch,\Round)}} 
      } \label{line:unrel_complete_abcast}         
    }
  }\label{line:unrel_complete_end}
  \ElseIf {$\Vertices(\InputGraph[\aServer])=\emptyset,\,\forall \aServer \in \Vertices(\FTDigraph)$} 
  {\label{line:rel_complete_cond} \label{line:rel_complete_start}
    \lForEach(\Comment*[f]{A-deliver $\acRState{\Epoch}{\Round}$}){$m \in \Input$} {
      \todeliver{m} \label{line:rel_complete_commit}
    }
     \If(\Comment*[f]{remove servers}) {$\{\aServer : \aMessage \notin \Input\} \neq \emptyset$}{
      \UpdateUnreliableDigraph{$\FastDigraph$, $\{\aServer : \aMessage \notin \Input\}$}\; \label{line:rel_complete_G1} 
      \ForEach{$p \in \{\aServer : \aMessage \notin \Input\}$}
      {
        $\Vertices(\FTDigraph) \leftarrow \Vertices(\FTDigraph) \setminus \{p\}$ \Comment*[r]{adjust $\FTDigraph$} \label{line:rel_complete_Gf} 
        $\Edges(\FTDigraph) \leftarrow \Edges(\FTDigraph) \setminus \{(x,y) : x = p \vee y = p\}$\;
        $\Fails \leftarrow \Fails  \setminus \{(x,y) : x = p \vee y = p\}$ \Comment*[r]{adjust $F_i$} \label{line:rel_complete_upF}
      }
     }
    $\Vertices(\InputGraph[p_\ast])\leftarrow\{p_\ast\},\,\forall p_\ast \in \Vertices(\FTDigraph)$\; \label{line:rel_complete_tdreset}
    $\Input^{\mathit{prev}} \leftarrow \emptyset$\;
    \If {$\Fails = \emptyset$} { \label{line:rel_complete_t2start}
      $\acRState{\Epoch}{\Round} \trans \acFState{\Epoch}{\Round+1}$ \Comment*[r]{$\trf{}$} 
      \tcc{handle postp. unreliable messages}
      \ForEach {$m_\ast^{(e,r)} \in \Input^{\mathit{next}}$} { \label{line:rel_complete_t2prem_handl_start}
      \textbf{send} $\langle \mathit{BCAST},\, m_j^{(e,r)} \rangle$ \textbf{to} $p_i^+(\FastDigraph)$
      }
      $\Input \leftarrow \Input^{\mathit{next}}$; $\Input^{\mathit{next}} \leftarrow \emptyset$\; \label{line:rel_complete_t2prem_handl_end}
    } \label{line:rel_complete_t2end}
    \Else { \label{line:rel_complete_t5start}
      $\acRState{\Epoch}{\Round} \fTrans \acRState{\Epoch+1}{\Round+1}$ \Comment*[r]{$\trr{}$}
      \If (\Comment*[f]{discard}) {$\exists m_\ast^{(e,r)} \in \Input^{\mathit{next}} : e = \Epoch-1$} {  \label{line:rel_complete_t5prem_handl_start}
        $\Input^{\mathit{next}} \leftarrow \emptyset$; $\Input \leftarrow \emptyset$
      }
      \Else(\Comment*[f]{deliver postp. reliable messages}){ 
        $\Input \leftarrow \Input^{\mathit{next}}$; $\Input^{\mathit{next}} \leftarrow \emptyset$\;
        \lForEach{$\aMessage \in \Input$}
        {
          $\Vertices(\InputGraph[p_\ast])\leftarrow \emptyset$
        }
      } \label{line:rel_complete_t5prem_handl_end}
      \lForEach{$\aServer \in \Vertices(\FTDigraph)$}
      {
        \UpdateTrackingDigraph{$\InputGraph[\aServer]$, $\emptyset$, \Fails}
      }
     } \label{line:rel_complete_t5end}
    \lIf{$\Input \neq \emptyset$}{
        \tobroadcast{m_i^{(\Epoch,\Round)}}          
    } \label{line:rel_complete_abcast}
  } \label{line:rel_complete_end} 
}
\caption{Round completion---while in epoch $\tilde{e}$ and round $\tilde{r}$, $p_i$ tries to complete the current round and safely A-deliver messages;
see Table~\ref{tab:notations} for digraph notations.}
\label{alg:trytoadeliver}
\end{algorithm}

To complete a reliable round, \ACplus{} uses early termination---the necessary and sufficient condition 
for $p_i$ to complete $\acRState{\tilde{e}}{\tilde{r}}$ is to stop tracking all messages~\cite{poke2017allconcur}, 
i.e, all its tracking digraphs are empty (line~\ref{line:rel_complete_cond}).
Consequently, a completed reliable round can be safely A-delivered (line~\ref{line:rel_complete_commit}).
Moreover, servers, for which no message was A-delivered, are removed~\cite{poke2017allconcur}.
This entails updating both $G_U$ to ensure connectivity~(\cref{sec:algo_description}) and $F_i$ (line~\ref{line:rel_complete_upF}).
Depending on whether $F_i$ is empty after the update, 
we distinguish between a no-fail transition $\trf{}$ (lines~\ref{line:rel_complete_t2start}--\ref{line:rel_complete_t2end})
and a fail transition $\trr{}$ (lines~\ref{line:rel_complete_t5start}--\ref{line:rel_complete_t5end}).
Starting $\acFState{\tilde{e}}{\tilde{r}+1}$ after $\trf{}$ entails handling the messages (sent in $\acFState{\tilde{e}}{\tilde{r}+1}$) 
received (by $p_i$) prematurely (lines~\ref{line:rel_complete_t2prem_handl_start}--\ref{line:rel_complete_t2prem_handl_end}).
Similarly, starting $\acRState{\tilde{e}+1}{\tilde{r}+1}$ after $\trr{}$ entails handling the messages (sent in $\acRState{\tilde{e}+1}{\tilde{r}+1}$) 
received (by $p_i$) prematurely (lines~\ref{line:rel_complete_t5prem_handl_start}--\ref{line:rel_complete_t5prem_handl_end}). 
We defer the handling of premature messages to Appendix~\ref{app:prem_msg}.
In addition, $\trr{}$ requires 
the tracking digraphs to be updated (see Appendix~\ref{app:td_update});
note that before starting either transition,  the tracking digraphs are reset (line~\ref{line:rel_complete_tdreset}). 
Finally, if any messages were already received for the new state (i.e., $\acFState{\tilde{e}}{\tilde{r}+1}$ or $\acRState{\tilde{e}+1}{\tilde{r}+1}$), 
then $p_i$ A-broadcast its own message (line~\ref{line:rel_complete_abcast}).


\subsubsection{Handling premature messages} 
\label{app:prem_msg}

In \ACplus{}, servers can be in different states~(\cref{sec:conc_states}). 
This entails that a server can receive both A-broadcast messages and failure notifications sent by another server 
from a future state; we refer to these as premature messages. 
When handling a premature message, the following two conditions must be satisfied.
First, the digraph on which the message has arrived needs to be consistent with the digraph on which the message is sent further.
In other words, if the server sending the message has previously updated the digraph, 
then the server receiving the message must postpone sending it further until it also updates the digraph.
Second, changing the message order (with respect to other messages) must not affect early termination; 
note that message order is not relevant for unreliable messages.

The way premature messages are handled depends on the scope of the failure notifications. 
In general, to avoid inconsistencies, failure notifications need to be specific to $G_R$: 
Once $G_R$ is updated, all failure notifications are discarded. 
Thus, we avoid scenarios where the failure notification's owner is not a successor of the target. 
%
To trivially enforce this, failure notifications can be made specific to an epoch (as in \AC{}~\cite{poke2017allconcur}). 
%
In this case, premature failure notifications (i.e., specific to a subsequent epoch) are postponed: Both 
their sending and delivering is delayed until the subsequent epoch. 
%
Since reliable messages are specific to an epoch\footnote{Reliable messages that trigger skip transitions are not premature messages.}, 
they can also be postponed; thereby, message order is preserved.
Once a new epoch starts, the failure notifications specific to the previous epoch are outdated and hence, discarded. 
This entails detecting again the failures of the faulty servers that were not removed in the previous epoch~(\cref{sec:round_based}).
%

As long as $G_R$ remains unchanged, re-detecting failures can be avoided, by only discarding outdated failure notifications 
that are invalid and resending the valid ones (i.e., with both owner and target not removed in the previous epoch).
%
Even more, resending failure notifications can be avoided by not postponing them, i.e. they are sent further and delivered immediately.
This requires that the valid failure notifications are redelivered at the beginning of each epoch, i.e., updating the tracking digraphs. 
Note that for message order to be preserved, premature reliable messages are also sent further immediately and 
only their delivery is postponed to their specific epoch.

In \ACplus{}, failure notifications are handled immediately. 
While $p_i$ is in epoch $\tilde{e}$ and round $\tilde{r}$, it can receive the following premature messages:
(1) unreliable messages sent from $\acState{\tilde{e}}{\tilde{r}+1}$ (cf.~Theorem~\ref{prop:unrel_msg}),
which are handled in $\acState{\tilde{e}}{\tilde{r}+1}$ (lines~\ref{line:unrel_complete_prem_handle_start}--\ref{line:unrel_complete_prem_handle_end} and~\ref{line:rel_complete_t2prem_handl_start}--\ref{line:rel_complete_t2prem_handl_end});
and (2) reliable messages sent from $\acRState{\tilde{e}+1}{\tilde{r}+1}$ (cf.~Theorem~\ref{prop:rel_msg}), 
which are sent immediately (line~\ref{line:rel_prem_send}), 
but delivered in $\acRState{\tilde{e}+1}{\tilde{r}+1}$ 
(lines~\ref{line:rel_complete_t5prem_handl_start}--\ref{line:rel_complete_t5prem_handl_end}). 
Postponed messages are stored in $M_i^\mathit{next}$; 
unreliable messages sent from $\acState{\tilde{e}}{\tilde{r}+1}$ are dropped in favor of reliable messages sent from $\acRState{\tilde{e}+1}{\tilde{r}+1}$ 
(line~\ref{line:unrel_prem_drop} and~\ref{line:rel_prem_drop}).


\subsubsection{Updating the tracking digraphs} 
\label{app:td_update}

The tracking digraphs are needed only for reliable rounds (i.e., for early termination).
Thus, the tracking digraphs are updated either when a reliable round starts 
or during a reliable round when receiving a valid failure notification (see Algorithms~\ref{alg:handleRBCAST}--\ref{alg:trytoadeliver}). 
Updating the tracking digraphs after receiving a failure notification follows the procedure described in \AC{}~\cite{poke2017allconcur}
(see Section~\ref{sec:algo_description} for details). 
Algorithm~\ref{alg:update_td} shows the \texttt{UpdateTrackingDigraph} procedure used by $p_i$ to update a tracking digraph $\mathbf{g_i}[p_\ast]$ 
after adding a set of new failure notifications to the set of already known failure notifications. 

\begin{algorithm}[!tb]
\scriptsize
\DontPrintSemicolon

\SetKwData{Fails}{$F$}
\SetKwData{oldFails}{$F_\mathit{old}$}
\SetKwData{newFails}{$F_\mathit{new}$}
\SetKwData{aServer}{$p_\ast$}
\SetKwData{aMessage}{$m_\ast$}
\SetKwData{Suspected}{$\mathit{Q}$}
\SetKwData{InputGraph}{$\mathbf{g_i}$}
\SetKwData{FTDigraph}{$G_R$}
\SetKwData{Vertices}{$\mathcal{V}$}
\SetKwData{Edges}{$\mathcal{E}$}

\SetKwComment{Comment}{\{}{\}}

\SetKwData{Done}{$\mathit{done}$}
\SetKwData{Continue}{$\mathit{continue}$}

\SetKwFunction{UpdateTrackingDigraph}{UpdateTrackingDigraph}

\SetKwBlock{tobcast}{def \UpdateTrackingDigraph{$\InputGraph[\aServer]$, \oldFails, \newFails}:}{end}
\tobcast
{
  $\Fails \leftarrow \oldFails$\;
  \ForEach{$(p_j,p_k) \in \newFails$}
  {
    $\Fails \leftarrow \Fails \cup \{(p_j,p_k)\}$\;
    \lIf{$p_j \notin \Vertices(\InputGraph[\aServer])$}{\Continue}
    \If (\Comment*[f]{maybe $p_j$ sent $m_\ast$ further before failing}) {$p_j^+(\InputGraph[\aServer]) = \emptyset$}{ \label{line:uptd_expand_start}
      $\Suspected \leftarrow \{(p_j,p):p \in p_j^+(\FTDigraph) \setminus \{p_k\}\}$
      \Comment*[r]{FIFO queue}
      \ForEach {$(p_p,p) \in \Suspected$} {
          \tcc{recursively expand $\InputGraph[\aServer]$}
          $\Suspected \leftarrow \Suspected \setminus \{(p_p,p)\}$\;
          \If{$p \notin \Vertices(\InputGraph[\aServer])$}
          {
            $\Vertices(\InputGraph[\aServer]) \leftarrow \Vertices(\InputGraph[\aServer]) \cup \{p\}$\;
            \lIf{$\exists (p,*) \in \Fails$} 
            {
              $\Suspected \leftarrow \Suspected \cup \{(p,p_s):p_s \in p^+(\FTDigraph)\}\setminus\Fails$
            }
          }
          $\Edges(\InputGraph[\aServer]) \leftarrow \Edges(\InputGraph[\aServer]) \cup \{(p_p,p)\}$\;
      } \label{line:uptd_expand_end}
    }
    \ElseIf {$p_k \in p_j^+(\InputGraph[\aServer])$}
    {
    \tcc{$p_k$ has not received $m_\ast$ from $p_j$}
      $\Edges(\InputGraph[\aServer]) \leftarrow \Edges(\InputGraph[\aServer]) \setminus \{(p_j,p_k)\}$\; \label{line:uptd_rm_edge}
    
    \lForEach  (\Comment*[f]{prune: no input}) {$p \in \Vertices(\InputGraph[\aServer])$ s.t. $\nexists \pi_{p_\ast,p}$ in $\InputGraph[\aServer]$} {\label{line:updtd_prune_start}
      $\Vertices(\InputGraph[\aServer]) \leftarrow \Vertices(\InputGraph[\aServer]) \setminus \{p\}$
    }
    }
    
    \If {$\forall p\in \Vertices(\InputGraph[\aServer]),\, (p,*)\in\Fails$}
    {\label{alg:lsc_no_dissemination}
      $\Vertices(\InputGraph[\aServer]) \leftarrow \emptyset$ \Comment*[r]{prune: no dissemination}
    }\label{line:updtd_prune_end}
    
  }
}

\caption{Updating a tracking digraph---$p_i$ updates $\mathbf{g_i}[p_\ast]$ after adding 
$F_\mathit{new}$ to the set $F_\mathit{old}$ of already known failure notifications;
see Table~\ref{tab:notations} for digraph notations.}
\label{alg:update_td}
\end{algorithm}


\ifdefined\PAPER

\fi


\subsection{Uniform atomic broadcast} 
\label{app:uniformity}

In non-uniform atomic broadcast, neither agreement nor total order holds for non-faulty servers.
Uniform  properties  are  stronger  guarantees, i.e., they apply to all server, including faulty ones~\cite{Defago:2004:TOB:1041680.1041682}:
\begin{itemize}
\setlength{\itemsep}{0pt}
  \item (Uniform agreement) If a server A-delivers $m$, then all non-faulty servers eventually A-deliver~$m$.
  \item (Uniform total order) If two servers $p_i$ and $p_j$ A-deliver messages $m_1$ and $m_2$, 
  then $p_i$ A-delivers $m_1$ before $m_2$, if and only if $p_j$ A-delivers $m_1$ before~$m_2$. 
\end{itemize}
We fist show that \AC{} solves \emph{uniform} atomic broadcast and, then, we adapt \ACplus{} to uniformity. 

\subsubsection{Uniformity in AllConcur}

\AC{} solves \emph{uniform} atomic broadcast---its correctness proof~\cite{poke2017allconcur,allconcur_tla} (that shows it solves non-uniform atomic broadcast) 
can be easily extended to prove uniformity. This is due to \emph{message stability}, i.e., before a server A-delivers a message, it sends it to all its $d(G) > f$ successors. 
The correctness proof entails proving the set agreement property~\cite{poke2017allconcur,allconcur_tla}:
\begin{itemize}
\setlength{\itemsep}{0pt}
  \item (Set agreement) Let $p_i$ and $p_j$ be two non-faulty servers that complete round $r$. Then, both servers agree on the same set of messages, i.e., $M_i^r = M_j^r$.
\end{itemize}

The proof of set agreement is by contradiction: It assumes that $m_\ast \in M_i^r$, but $m_\ast \notin M_j^r$.
Then, to reach a contradiction, it shows that in the digraph used by $p_j$ to track $m_\ast$ there is a server $q$ that 
$p_j$ cannot remove without receiving $m_\ast$. The existence of $q$ is given by both the existence of a path $\pi_{p_\ast,p_i}$ on which 
$m_\ast$ arrives at $p_i$ and the fact that $p_i$ is non-faulty. Yet, in the case of uniformity, $p_i$ can be faulty. 
To ensure the existence of $q$, we extend $\pi_{p_\ast,p_i}$ by appending to it one of $p_i$'s non-faulty successors 
(i.e., at least one of its $d(G) > f$ successors are non-faulty). Since $p_i$ sends $m_\ast$ to all its successors before completing round $r$, 
all servers on the extended path have $m_\ast$ and at least one is non-faulty. Thus, \AC{} solves \emph{uniform} atomic broadcast. 

\subsubsection{Uniformity in AllConcur+}

Because of message stability, \AC{} solves \emph{uniform} atomic broadcast.
Yet, as described in Appendix~\ref{app:design_details}, \ACplus{} solves non-uniform atomic broadcast---agreement and total order apply only to non-faulty servers. 
For \ACplus{} to guarantee uniform properties, the concept of message stability can be extended to rounds, i.e., \emph{round stability}---before a server A-delivers a round $r$ in epoch $e$, 
it must make sure that all non-faulty servers will eventually deliver $r$ in epoch $e$. 
This is clearly the case if $\acRState{e}{r}$ (due to \AC{}'s early termination mechanism~\cite{poke2017allconcur}).
Yet, round stability is not guaranteed when $\acState{e}{r}$. 
Let $p_i$ be a server that A-delivers $\acState{e}{r}$ after completing $\acState{e}{r+1}$ and then it fails. 
All the other non-faulty servers receive a failure notification while in $\acState{e}{r+1}$ and thus, rollback to $\acRState{e+1}{r}$.
Thus, all non-faulty servers eventually A-deliver round $r$ in epoch $e+1$, breaking round stability.  

To ensure round stability in \ACplus{}, $p_i$ A-delivers $\acState{e}{r}$ once it receives messages from at least $f$ servers 
in $\acState{e}{r+2}$\footnote{Clearly, $n > 2f$ in order to avoid livelocks while waiting for messages from at least $f$ servers.}
Thus, at least one non-faulty server A-delivers $\acState{e}{r}$ after completing $\acState{e}{r+1}$, which guarantees all other non-faulty servers 
A-deliver $\acState{e}{r}$ (cf.~Lemma~\ref{lemma:adeliv_unrel}). 
Note that delaying the A-delivery of unreliable rounds is the cost of providing uniform properties.

To prove both uniform agreement and uniform total order, we adapt \ACplus{}'s correctness proof~(\cref{sec:corectness_proof})
to the modification that guarantees round stability.

\begin{lemma} 
\label{lemma:uniform_adeliv_unrel}
Let $p_i$ be a server that A-delivers $\acState{e}{r}$ after completing $\acState{e}{r+1}$.
Then, any other non-faulty server $p_j$ eventually A-delivers $\acState{e}{r}$.
\end{lemma}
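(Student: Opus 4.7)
The plan is to reduce the uniform statement to the previously established non-uniform Lemma~\ref{lemma:adeliv_unrel} by exhibiting a non-faulty witness that plays the role previously occupied by $p_i$. The round-stability modification guarantees that, before A-delivering $\acState{e}{r}$, the (possibly faulty) server $p_i$ must have received messages from at least $f$ other servers while in state $\acState{e}{r+2}$. Together with $p_i$ itself, these $f$ senders form a set of $f+1$ distinct servers that have each started $\acState{e}{r+2}$ and therefore completed $\acState{e}{r+1}$. Since at most $f$ of the $n$ servers are faulty, at least one of these $f+1$ servers, call it $p_k$, must be non-faulty.

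First I would use $p_k$'s completion of $\acState{e}{r+1}$ to replay the structural opening of Lemma~\ref{lemma:adeliv_unrel}: by Proposition~\ref{prop:one_end_all_start}, every non-faulty server---including $p_j$---must have started $\acState{e}{r+1}$, and hence completed $\acState{e}{r}$. Next I would argue that $p_k$ itself eventually A-delivers $\acState{e}{r}$. If $p_k$ is never interrupted by a failure notification, the dissemination of $\acState{e}{r+2}$-messages through $G_U$, together with FIFO reliable channels and the guaranteed presence of non-faulty forwarders, eventually supplies $p_k$ with messages from at least $f$ servers in $\acState{e}{r+2}$; the round-stability condition then fires and $p_k$ A-delivers $\acState{e}{r}$ after completing $\acState{e}{r+1}$. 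Otherwise $p_k$ rolls back via $\tur{}$ to $\acRState{e+1}{r}$, and the same failure notification eventually triggers on $p_j$ a $\tur{}$ transition, which cascades---exactly as in the proof of Lemma~\ref{lemma:adeliv_unrel}---into a $\tsk{}$ transition on $p_j$ that A-delivers $\acState{e}{r}$.

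Once $p_k$ has A-delivered $\acState{e}{r}$ after completing $\acState{e}{r+1}$, the non-uniform Lemma~\ref{lemma:adeliv_unrel} applies verbatim with $p_k$ as the non-faulty A-deliverer and yields that every non-faulty $p_j$ eventually A-delivers $\acState{e}{r}$, which is the desired conclusion.

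The main obstacle will be the liveness sub-claim in the failure-free branch: $p_k$ need not share $p_i$'s inbox, and the $f$ senders that supplied $p_i$ may themselves be faulty (with $p_i$ possibly failing before relaying them), so one cannot immediately conclude that $p_k$'s own round-stability threshold is met. Rigorously handling this requires careful composition of dissemination through $G_U$ with FIFO reliability and the guaranteed presence of at least one non-faulty server among the $f+1$ witnesses. If this sub-claim proves too delicate to handle directly, a cleaner unified route is to collapse both branches into the $\tsk{}$ route: any failure among the $f+1$ witnesses (in particular of $p_i$ itself, if faulty) is eventually detected by all non-faulty servers, inducing fail transitions into $\acRState{e+1}{r+1}$ and a subsequent $\tsk{}$ that A-delivers $\acState{e}{r}$ regardless of whether the unreliable-mode threshold is ever reached.
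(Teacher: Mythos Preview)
Your core idea---extract a non-faulty witness among the $f+1$ servers that have reached $\acState{e}{r+2}$, then replay the case analysis of Lemma~\ref{lemma:adeliv_unrel}---is exactly what the paper does. The paper's proof is slightly more direct: it cases on $p_j$ rather than on the witness $p_k$, and it does not try to first establish that $p_k$ A-delivers and then invoke Lemma~\ref{lemma:adeliv_unrel} as a black box. Invoking that lemma ``verbatim'' is in fact not quite legitimate, since Lemma~\ref{lemma:adeliv_unrel} is stated for the non-uniform algorithm, where A-delivery of $\acState{e}{r}$ happens upon completing $\acState{e}{r+1}$; in the uniform variant the A-delivery condition has changed, so you must redo the argument rather than reduce to it.

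Two concrete points. First, your ``main obstacle'' in the failure-free branch dissolves once you use FD completeness: if $p_k$ (or $p_j$) \emph{never} receives a failure notification, then no server has failed, so all $n$ servers eventually reach $\acState{e}{r+2}$ and the round-stability threshold of $f$ messages is met because $n>2f$. This is exactly the parenthetical ``(since $n>2f$)'' in the paper's proof. Second, there is an index slip in your failure branch: the witness $p_k$ is already in $\acState{e}{r+2}$, so a $\tur{}$ transition takes it to $\acRState{e+1}{r+1}$, not $\acRState{e+1}{r}$. This is not cosmetic---it is precisely the message $p_k$ A-broadcasts from $\acRState{e+1}{r+1}$ that triggers $p_j$'s $\tsk{}$ out of $\acRState{e+1}{r}$ and hence the A-delivery of $\acState{e}{r}$.
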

\begin{IEEEproof}
Due to round stability, at least one server that A-delivered $\acState{e}{r}$ after completing $\acState{e}{r+1}$ is non-faulty; let $p$ be such a server. 
Moreover, since $p_i$ A-delivers $\acState{e}{r}$ after completing $\acState{e}{r+1}$, 
$p_j$ must have started $\acState{e}{r+1}$ (cf.~Proposition~\ref{prop:one_end_all_start}),
and hence, completed $\acState{e}{r}$.
As a result, $p_j$ either receives no failure notifications, which means it eventually completes $\acState{e}{r+1}$ and A-delivers $\acState{e}{r}$ (since $n > 2f$), 
or receives a failure notification and moves to $\acRState{e+1}{r}$ (after $\tur{}$).
Yet, this failure notification will eventually trigger on $p_i$ a $\tur{}$ transition from $\acState{e}{r+2}$ to $\acRState{e+1}{r+1}$, 
which eventually will trigger on $p_j$ a $\tsk{}$ transition that leads to the A-delivery of $\acState{e}{r}$ (see Figure~\ref{fig:skip_trans}). 
\end{IEEEproof}

\begin{lemma} 
\label{lemma:uniform_same_epoch_deliv}
If a server A-delivers round $r$ in epoch $e$ (i.e., either $\acRState{e}{r}$ or $\acState{e}{r}$), 
then, any non-faulty server eventually A-delivers round $r$ in epoch~$e$.
\end{lemma}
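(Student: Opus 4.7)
The plan is to mirror the structure of the non-uniform version (Lemma~\ref{lemma:same_epoch_deliv}), but allow $p_i$ to be faulty by invoking round stability for unreliable rounds and the uniform set agreement of \AC{}'s early termination for reliable rounds; throughout, Lemma~\ref{lemma:uniform_adeliv_unrel} will play the role Lemma~\ref{lemma:adeliv_unrel} played in the non-uniform proof.

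First I would dispatch the reliable case. If $p_i$ A-delivers $\acRState{e}{r}$, then $p_i$ must have completed $\acRState{e}{r}$; by Proposition~\ref{prop:one_end_all_start} every non-faulty server started $\acRState{e}{r}$, and since \AC{}'s early termination mechanism provides message stability on $G_R$ (with $d(G_R)>f$), it in fact solves \emph{uniform} atomic broadcast, as argued earlier in this appendix. Hence every non-faulty server eventually completes $\acRState{e}{r}$ and A-delivers round $r$ in epoch $e$.

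Next I would handle the unreliable case $\acState{e}{r}$, splitting it along the two A-delivery paths identified in \S\ref{sec:adeliv_msg}. In sub-case (a), $p_i$ A-delivers $\acState{e}{r}$ after completing $\acState{e}{r+1}$; under the round-stability modification, this additionally requires $p_i$ to have received messages in $\acState{e}{r+2}$ from at least $f$ other servers, each of which therefore also completed $\acState{e}{r+1}$. Together with $p_i$ this yields $f+1$ servers that completed $\acState{e}{r+1}$, so at least one of them is non-faulty, and Lemma~\ref{lemma:uniform_adeliv_unrel} then applies to give A-delivery at every non-faulty server. In sub-case (b), $p_i$ A-delivers $\acState{e}{r}$ via a $\tsk{}$ transition from $\acRState{e+1}{r}$ to $\acRState{e+1}{r+1}$ triggered by a message $m_j^{(e+1,r+1)}$; by the construction of Figure~\ref{fig:skip_trans} the sender and at least $f$ further servers must have completed $\acState{e}{r+1}$ and then reached $\acRState{e+1}{r+1}$ (through either a $\tur{}$ or a $\trr{}$ chain), so again at least one of those is non-faulty and Lemma~\ref{lemma:uniform_adeliv_unrel} closes the case.

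The main obstacle I anticipate is sub-case (b): I must argue carefully that the witnesses produced by the round-stability threshold at $p_i$ survive the $\tsk{}$/$\trr{}$ chain that leads to $\acRState{e+1}{r+1}$, so that the non-faulty witness needed to invoke Lemma~\ref{lemma:uniform_adeliv_unrel} indeed A-delivered $\acState{e}{r}$ after genuinely completing $\acState{e}{r+1}$ (rather than only starting it). This amounts to tracing the FIFO order between the $m_j^{(e+1,r+1)}$ message and the failure notifications that induced the rollback, using Propositions~\ref{prop:unrel_msg}--\ref{prop:rel_msg_same_epoch} to rule out histories in which every witness to the completion of $\acState{e}{r+1}$ is faulty.
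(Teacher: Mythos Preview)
Your overall decomposition---reliable case via early termination, unreliable case split into completion versus skip, with Lemma~\ref{lemma:uniform_adeliv_unrel} replacing Lemma~\ref{lemma:adeliv_unrel}---matches the paper's proof exactly, and your reliable case is fine.

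In sub-case~(a) you are doing more work than necessary: Lemma~\ref{lemma:uniform_adeliv_unrel} already allows the server $p_i$ that A-delivers after completing $\acState{e}{r+1}$ to be faulty, and its own proof already extracts the non-faulty witness from round stability. So the paper simply invokes that lemma directly; your re-derivation of the $f{+}1$-witness argument is correct but redundant.

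Sub-case~(b) is where your sketch drifts from the paper and where the gap lies. Your claim that ``the sender and at least $f$ further servers must have completed $\acState{e}{r+1}$'' is not justified: a single message from one server in $\acRState{e+1}{r+1}$ suffices to trigger $p_i$'s skip, and the round-stability threshold applies to the \emph{completion} path, not to the skip path. The paper's argument is simpler and sidesteps the FIFO tracing you anticipate: the server $p_j$ whose message triggers the skip reached $\acRState{e+1}{r+1}$ either via $\tur{}$ from $\acState{e}{r+2}$---in which case, under the modified rollback (to the latest A-delivered round), $p_j$ must already have A-delivered $\acState{e}{r}$ via the completion path---or via $\tsk{}$ from $\acRState{e+1}{r}$, which just recurses on another sender. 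This chain is finite, so some server A-delivered $\acState{e}{r}$ via completion, and \emph{that} is where round stability supplies the non-faulty witness; Lemma~\ref{lemma:uniform_adeliv_unrel} then finishes the case. You do not need to produce $f$ witnesses at the skip itself.
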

\begin{IEEEproof}
If $p_i$ A-delivers $\acRState{e}{r}$ (when completing it), then every non-faulty server eventually A-delivers $\acRState{e}{r}$. 
The reason is twofold: (1) since $p_i$ completes $\acRState{e}{r}$, every non-faulty server must start $\acRState{e}{r}$ (cf.~Proposition~\ref{prop:one_end_all_start});
and (2) due to early termination, every non-faulty server eventually also completes and A-delivers $\acRState{e}{r}$.

Otherwise, $p_i$ A-delivers $\acState{e}{r}$ either once it completes the subsequent unreliable round 
or after a skip transition from $\acRState{e+1}{r}$ to $\acRState{e+1}{r+1}$~(\cref{sec:adeliv_msg}).
On the one hand, if $p_i$ A-delivers $\acState{e}{r}$ after completing $\acState{e}{r+1}$, 
then any other non-faulty server eventually A-delivers $\acState{e}{r}$ (cf.~Lemma~\ref{lemma:uniform_adeliv_unrel}).
On the other hand, if $p_i$ A-delivers $\acState{e}{r}$ after a skip transition, then at least one 
non-faulty server A-delivered $\acState{e}{r}$ after completing $\acState{e}{r+1}$. 
Thus, any other non-faulty server eventually A-delivers $\acState{e}{r}$ (cf.~Lemma~\ref{lemma:uniform_adeliv_unrel}).
\end{IEEEproof}

\begin{corollary}
\label{corol:uniform_same_epoch_deliv}
If two servers A-deliver round $r$, then both A-deliver $r$ in the same epoch $e$.
\end{corollary}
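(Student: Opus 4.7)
The plan is to prove the corollary by contradiction, leveraging Lemma~\ref{lemma:uniform_same_epoch_deliv} together with Integrity (Theorem~\ref{th:integrity}) as the two driving ingredients. Assume for contradiction that two servers $p_i$ and $p_j$ A-deliver round $r$ in distinct epochs, say $e_i \neq e_j$, and without loss of generality $e_i < e_j$.

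First, I would apply Lemma~\ref{lemma:uniform_same_epoch_deliv} to each of the two A-deliveries in turn: since $p_i$ A-delivers $r$ in epoch $e_i$, every non-faulty server eventually A-delivers $r$ in epoch $e_i$; likewise, since $p_j$ A-delivers $r$ in epoch $e_j$, every non-faulty server eventually A-delivers $r$ in epoch $e_j$. Because \fAC{} assumes at most $f$ failures and $n > 2f$ (required for round stability in the uniform setting), there exists at least one non-faulty server $p_k$ to which both conclusions apply simultaneously.

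Next, I would invoke Integrity (Theorem~\ref{th:integrity}): every non-faulty server A-delivers any given round at most once, and in particular A-delivers the messages associated with round $r$ in exactly one epoch. Hence $p_k$ cannot consistently A-deliver $r$ both in epoch $e_i$ and in epoch $e_j \neq e_i$, yielding the desired contradiction and establishing $e_i = e_j$.

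The main obstacle, which is more a matter of bookkeeping than of real depth, is ensuring that Lemma~\ref{lemma:uniform_same_epoch_deliv} is indeed stated in the strong ``source may be faulty'' form needed here; that is exactly the point of the uniform strengthening presented just above, so the corollary is essentially an immediate consequence, and I do not expect any further technical difficulty.
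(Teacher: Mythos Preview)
Your proposal is correct and follows the intended route: the paper states the corollary without proof, treating it as an immediate consequence of Lemma~\ref{lemma:uniform_same_epoch_deliv}, and your contradiction argument simply makes that implication explicit. Two minor remarks: you do not need $n>2f$ to secure a non-faulty witness (having at most $f<n$ failures already suffices), and the fact you really use is ``each round is A-delivered at most once'', which is stated inside the proof of Theorem~\ref{th:integrity} rather than in its message-level statement.
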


\begin{theorem}[Uniform set agreement]
\label{th:uniform_set_agreement}
If two servers A-deliver round $r$, then both A-deliver the same set of messages in round $r$.
\end{theorem}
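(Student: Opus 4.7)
The plan is to mirror the proof of Theorem~\ref{th:set_agreement} (non-uniform set agreement), but upgrade each invocation of the supporting lemmas to their uniform counterparts so that the argument applies to arbitrary (possibly faulty) servers $p_i$ and $p_j$ that A-deliver round $r$. First I would apply Corollary~\ref{corol:uniform_same_epoch_deliv} to conclude that $p_i$ and $p_j$ A-deliver $r$ in the same epoch $e$. Then, as in the non-uniform proof, I would use Proposition~\ref{prop:unique_states} to show that both servers are either both in $\acRState{e}{r}$ or both in $\acState{e}{r}$ at the moment of A-delivery, reducing the theorem to two cases.

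For the reliable case $\acRState{e}{r}$, I would invoke the uniform set-agreement property of \AC{}'s early termination, which was established just above by extending \AC{}'s correctness proof via message stability (before A-delivering, a server relays the message to all $d(G_R)>f$ successors, ensuring at least one non-faulty path persists). Since both $p_i$ and $p_j$ complete $\acRState{e}{r}$ before A-delivering, the uniform version of \AC{}'s set agreement forces $M_i^r=M_j^r$.

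For the unreliable case $\acState{e}{r}$, the plan is to exploit the round-stability modification described just before the theorem: before A-delivering $\acState{e}{r}$, a server must have completed $\acState{e}{r+1}$ and received messages from at least $f$ servers in $\acState{e}{r+2}$, or must have A-delivered $\acState{e}{r}$ through a skip transition whose precondition (cf.\ Lemma~\ref{lemma:uniform_adeliv_unrel}) guarantees that some non-faulty server completed $\acState{e}{r+1}$ after completing $\acState{e}{r}$. In either subcase I would argue that the A-delivering server has itself completed $\acState{e}{r}$, hence received a message from every server of that round, so $M_i^r$ and $M_j^r$ both equal the full set of messages A-broadcast in $\acState{e}{r}$.

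The main obstacle will be the skip subcase of the unreliable scenario: a server that A-delivers $\acState{e}{r}$ via $\tsk{}$ does not directly complete $\acState{e}{r+1}$ itself, so I cannot immediately conclude it received messages from every server in round $r$. The delicate step is to chase the causal chain backwards through the $\tsk{}$ transition, invoking Proposition~\ref{prop:one_end_all_start} together with round stability to show that some non-faulty intermediary completed $\acState{e}{r+1}$ (and therefore $\acState{e}{r}$) before the skipping server performed its A-delivery, and then to argue that the message set stored in $M_i^{\mathit{prev}}$ at the moment of the skip coincides with that intermediary's complete set for round $r$. Once this is nailed down, the two subcases combine to give $M_i^r=M_j^r$ in the unreliable case, completing the proof.
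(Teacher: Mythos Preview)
Your overall approach mirrors the paper's proof: invoke Corollary~\ref{corol:uniform_same_epoch_deliv} to fix a common epoch~$e$, then split into the reliable and unreliable cases, using early-termination set agreement for $\acRState{e}{r}$ and ``both completed the round, hence both hold the full message set'' for $\acState{e}{r}$.

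However, the ``main obstacle'' you identify in the skip subcase is illusory, and your planned detour through a non-faulty intermediary is unnecessary. A server $p_i$ that A-delivers $\acState{e}{r}$ via $\tsk{}$ reached $\acRState{e+1}{r}$ through $\tur{}$ from $\acState{e}{r+1}$; being in $\acState{e}{r+1}$ already means $p_i$ \emph{itself} completed $\acState{e}{r}$, and $M_i^{\mathit{prev}}$ is precisely the full set of round-$r$ messages that $p_i$ received upon that completion (Algorithm~\ref{alg:trytoadeliver} sets $M_i^{\mathit{prev}}\leftarrow M_i$, and the rollback in Algorithm~\ref{alg:handleFAIL} does not clear it). So the conclusion ``$p_i$ received a message from every server in round $r$'' is immediate---you need not complete $r+1$, only $r$, and that is guaranteed. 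The paper's proof dispatches the entire unreliable case in one line for exactly this reason: any server A-delivering $\acState{e}{r}$ has necessarily completed $\acState{e}{r}$. Drop the intermediary argument and the proof collapses to the paper's.
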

\begin{IEEEproof}
Let $p_i$ and $p_j$ be two servers that A-deliver round $r$. 
Clearly, both servers A-deliver $r$ in the same epoch~$e$ (cf.~Corollary~\ref{corol:uniform_same_epoch_deliv}). 
Thus, we distinguish between $\acRState{e}{r}$ and $\acState{e}{r}$.
If $\acRState{e}{r}$, both $p_i$ and $p_j$ A-deliver the same set of messages due to the set agreement property of early termination.
If $\acState{e}{r}$, both $p_i$ and $p_j$ completed $\acState{e}{r}$, i.e., both received messages from all servers; 
thus, both A-deliver the same set of messages.
\end{IEEEproof}

\begin{theorem}[Uniform agreement]
\label{th:uniform_agreement}
If a server A-delivers $m$, then all non-faulty servers eventually A-deliver $m$.
\end{theorem}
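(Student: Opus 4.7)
The plan is to mirror the structure of the non-uniform agreement proof (Theorem~\ref{th:agreement}) almost verbatim, exploiting the fact that the supporting lemmas have already been strengthened to drop the non-faulty assumption on the delivering server. Specifically, Lemma~\ref{lemma:uniform_same_epoch_deliv} and Theorem~\ref{th:uniform_set_agreement} are stated for \emph{any} server that A-delivers a round, which is exactly the extra generality uniformity demands.

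Concretely, I would argue by contradiction: suppose some server $p_i$ (possibly faulty) A-delivers $m$, but there exists a non-faulty server $p_j$ that never A-delivers $m$. Let $r$ and $e$ be the round and epoch in which $p_i$ A-delivers $m$. First, I would invoke Lemma~\ref{lemma:uniform_same_epoch_deliv} on $p_i$'s A-delivery of round $r$ in epoch $e$ to conclude that $p_j$, being non-faulty, eventually A-delivers round $r$ in epoch $e$ as well. Then I would apply Theorem~\ref{th:uniform_set_agreement} to the pair $(p_i,p_j)$, both of which A-deliver round $r$, to obtain that they A-deliver the same set of messages in round $r$. Since $m$ lies in the set A-delivered by $p_i$, it must also lie in the set A-delivered by $p_j$, contradicting the assumption that $p_j$ never A-delivers $m$.

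There is essentially no remaining obstacle once Lemma~\ref{lemma:uniform_same_epoch_deliv} and Theorem~\ref{th:uniform_set_agreement} are in place: the heavy lifting has already been done by the round-stability modification (waiting for messages from at least $f$ servers in $\acState{e}{r+2}$ before A-delivering $\acState{e}{r}$), which is what allowed Lemma~\ref{lemma:uniform_adeliv_unrel} to guarantee that at least one non-faulty server witnesses the completion of $\acState{e}{r+1}$. The only care needed in writing up the proof is to emphasize that $p_i$ is allowed to be faulty, whereas in Theorem~\ref{th:agreement} both servers were assumed non-faulty; this is the sole substantive change, and it is exactly the change already absorbed into the statements of the supporting results. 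Thus the proof is a short two-step contradiction and requires no new machinery.
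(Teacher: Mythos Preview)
Your proposal is correct and matches the paper's proof essentially verbatim: the paper also argues by contradiction, invokes Lemma~\ref{lemma:uniform_same_epoch_deliv} to get that $p_j$ eventually A-delivers round $r$ in epoch $e$, and then applies Theorem~\ref{th:uniform_set_agreement} to obtain the contradiction. Your observation that the only change from Theorem~\ref{th:agreement} is dropping the non-faulty assumption on $p_i$, already absorbed by the strengthened supporting results, is exactly right.
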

\begin{IEEEproof}
We prove by contradiction. Let $p_i$ be a server that A-delivers $m$ in round $r$ and epoch $e$.
We assume there is a non-faulty server $p_j$ that never A-delivers $m$. 
According to Lemma~\ref{lemma:uniform_same_epoch_deliv}, $p_j$ eventually A-delivers round $r$ in epoch~$e$.
Yet, this means $p_j$ A-delivers (in round $r$) the same set of messages as $p_i$ (cf.~Theorem~\ref{th:uniform_set_agreement}), 
which contradicts the initial assumption.
\end{IEEEproof}

\begin{theorem}[Uniform total order]
\label{th:uniform_total_order}
If two servers $p_i$ and $p_j$ A-deliver messages $m_1$ and $m_2$, 
then $p_i$ A-delivers $m_1$ before $m_2$, if and only if $p_j$ A-delivers $m_1$ before $m_2$.
\end{theorem}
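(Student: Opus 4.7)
The plan is to mirror the proof of Theorem~\ref{th:total_order} (non-uniform total order), substituting the uniform counterparts established earlier in Appendix~\ref{app:uniformity}: Lemma~\ref{lemma:uniform_same_epoch_deliv}, Corollary~\ref{corol:uniform_same_epoch_deliv}, and Theorem~\ref{th:uniform_set_agreement}. The point is that once round stability delivers the uniform versions of same-epoch delivery and set agreement, the total order argument carries over verbatim, without any dependence on whether $p_i$ and $p_j$ are faulty.

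First, I would isolate two structural invariants that \fAC{} enforces by construction, independently of whether a server is faulty: (i) every server A-delivers rounds in strictly increasing order, i.e.\ round $r$ strictly before round $r+1$; and (ii) within a single round, the messages of the agreed-upon set are A-delivered using a deterministic order that depends only on that set (e.g.\ sorted by sender id). Both invariants follow directly from the code paths that trigger \todeliver{} in Algorithms~\ref{alg:handleRBCAST} and~\ref{alg:trytoadeliver}, and neither distinguishes faulty from non-faulty servers.

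Next, let $r_1^{(i)}$, $r_2^{(i)}$ be the rounds in which $p_i$ A-delivers $m_1$, $m_2$, and analogously $r_1^{(j)}$, $r_2^{(j)}$ for $p_j$. By Corollary~\ref{corol:uniform_same_epoch_deliv} applied to the rounds in which each of $m_1$ and $m_2$ is A-delivered by $p_i$, the matching round on $p_j$'s side exists and lies in the same epoch. Theorem~\ref{th:uniform_set_agreement} then forces the per-round message sets on $p_i$ and $p_j$ to coincide, so each of $m_1$ and $m_2$ is A-delivered by both servers in the same round: $r_1^{(i)} = r_1^{(j)} =: r_1$ and $r_2^{(i)} = r_2^{(j)} =: r_2$. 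Finally, the relative order of $m_1$ and $m_2$ is decided entirely by the pair $(r_1,r_2)$---by invariant (i) when $r_1 \neq r_2$, and by invariant (ii) on the common set $M^{r_1}$ when $r_1 = r_2$---so $p_i$ and $p_j$ must agree on the order.

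There is no real obstacle beyond careful bookkeeping: the technical work has been absorbed into the uniform versions of the earlier lemmas (which in turn rely on round stability, i.e.\ the modified A-delivery rule waiting for messages from at least $f$ servers in $\acState{e}{r+2}$). Once uniform set agreement and uniform same-epoch delivery are in hand, uniform total order reduces to exactly the short argument used for the non-uniform case.
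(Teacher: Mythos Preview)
Your proposal is correct and takes essentially the same approach as the paper: both invoke the structural invariants (rounds A-delivered in order, deterministic intra-round ordering) together with the uniform versions of same-epoch delivery and set agreement to conclude that $p_i$ and $p_j$ A-deliver $m_1$ and $m_2$ in the same states. Your write-up is somewhat more explicit (spelling out the case split $r_1\neq r_2$ versus $r_1=r_2$, and citing Corollary~\ref{corol:uniform_same_epoch_deliv} rather than the underlying Lemma~\ref{lemma:uniform_same_epoch_deliv}), but the argument is the same.
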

\begin{IEEEproof}
From construction, in \ACplus{}, every server A-delivers rounds in order (i.e., $r$ before $r+1$). 
Also, the messages of a round are A-delivered in a deterministic order. 
Moreover, according to both Lemma~\ref{lemma:uniform_same_epoch_deliv} and Theorem~\ref{th:uniform_set_agreement}, 
$p_i$ A-delivers $m_1$ and $m_2$ in the same states as $p_j$. Thus, $p_i$ and $p_j$ A-deliver 
$m_1$ and $m_2$ in the same order. 
\end{IEEEproof}


\fi

\end{document}